\crefname{opC}{Line}{Lines}
\Crefname{opC}{Line}{Lines}
\setlist[itemize]{leftmargin=20pt}
\crefname{assumption}{Assumption}{Assumption}
\crefname{example}{Example}{Example}
\crefname{definition}{Definition}{Definition}
\newcommand{\mc}[1]{\mathcal{#1}}
\newcommand{\mb}[1]{\mathbb{#1}}
\newcommand{\mbf}[1]{\mathbf{#1}}
\newcommand{\te}[1]{\tilde{#1}}
\newcommand{\vertiii}[1]{{\left\vert\kern-0.25ex\left\vert\kern-0.25ex\left\vert #1 
		\right\vert\kern-0.25ex\right\vert\kern-0.25ex\right\vert}}
\DeclareMathOperator*{\argmin}{arg\,min}
\newcommand{\supp}[0]{\mathtt{supp}}
\newcommand\intset[1]{\llbracket #1 \rrbracket}
\newcommand\col[1]{{\bullet, {#1}}}
\newcommand\row[1]{{{#1}, \bullet}}
\newcommand\RG[1]{\textcolor{black}{#1}}
\newcommand\mS{\mc{S}}
\title{Spurious Valleys, NP-hardness, and Tractability \\
	of Sparse Matrix Factorization With Fixed Support}
\author{Quoc-Tung Le\thanks{Univ Lyon, ENS de Lyon, UCBL, CNRS, Inria, LIP, F-69342, LYON Cedex 07, France (\email{quoc-tung.le@ens-lyon.fr, elisa.riccietti@ens-lyon.fr, remi.gribonval@inria.fr}).}
	\and Elisa Riccietti\footnotemark[1]
	\and Remi Gribonval\footnotemark[1]}
\begin{document}
\maketitle

\begin{tcbverbatimwrite}{tmp_\jobname_abstract.tex}
\begin{abstract}
  The problem of approximating a dense matrix by a product of sparse factors is a fundamental problem for many signal processing and machine learning tasks. 
  It can be decomposed into two subproblems: finding the position of the non-zero coefficients in the sparse factors, and determining their values. While the first step is usually seen as the most challenging one due to its combinatorial nature, this paper focuses on the second step, referred to as sparse matrix approximation with fixed support. 
  First, we show its NP-hardness, while also presenting a nontrivial family of supports making the problem practically tractable with a dedicated algorithm. Then, we investigate the landscape of its natural optimization formulation, proving the absence of spurious local valleys and spurious local minima, whose presence could prevent local optimization methods to achieve global optimality. The advantages of the proposed algorithm over state-of-the-art first-order optimization methods are discussed.
\end{abstract}

\begin{keywords}
  Sparse Matrix Factorization, Fixed Support, NP-hardness, Landscape
\end{keywords}

\begin{AMS}
  15A23,  	90C26 
\end{AMS}
\end{tcbverbatimwrite}
\input{tmp_\jobname_abstract.tex}

\section{Introduction}

Matrix factorization with sparsity constraints is the problem of approximating a (possibly dense) matrix as the product of two or more sparse factors. It is playing an important role in many domains and applications such as dictionary learning and signal processing \cite{DictionaryLearning,DoubleSparseFactorization,SparseFactorization}, linear operator acceleration \cite{lemagoarou:hal-01158057,ChasingButterfly,candesFourier}, deep learning \cite{DBLP:journals/corr/abs-1903-05895,Dao2020Kaleidoscope:,chen2022pixelated}, to mention only a few. 
{
Given a matrix $Z$, sparse matrix factorization can be expressed as the optimization problem:
\begin{equation}
	\centering
	\label{eq:matrix_factorization}
	\begin{aligned}
		\underset{X^1, \ldots, X^N}{\text{Minimize}} \;& \|Z - X^1\ldots X^N\|_F^2\\
		\text{subject to: }& \text{constraints on } \supp(X_i), \;\forall 1 \leq i \leq N  
	\end{aligned}
\end{equation}
where $\supp(X) := \{(i,j) \mid X_{i,j} \neq 0\}$ is the set of indices whose entries are nonzero. }

{For example, one can employ generic sparsity constraints such as $|\supp(X_i)| \leq k_i, 1 \leq i \leq N$ where $k_i$ controls the sparsity of each factor. More structured types of sparsity (for example, sparse rows/ columns) can also be easily encoded since the notion of support $\supp(X)$ captures completely the sparsity structure of a factor.}

{In general}, Problem \eqref{eq:matrix_factorization} is challenging due to its non-convexity as well as the discrete nature of $\supp(X_i)$ (which can lead to an exponential number of supports to consider). Existing algorithms to tackle it directly comprise heuristics such as Proximal Alternating Linearization Minimization (PALM) \cite{bolte:hal-00916090,lemagoarou:hal-01158057} and its variants \cite{papierICASSP21}.

In this work, we consider a {restricted class of instances of Problem \eqref{eq:matrix_factorization}}, in which just two factors are considered {($N = 2$)} and with {\emph{prescribed supports}}. We call this problem {\em fixed support (sparse) matrix factorization} (FSMF).  In details, given a matrix $A \in \mb{R}^{m \times n}$, we look for two sparse factors $X,Y$ that solve the following problem: 
\begin{equation}
	\label{eq: fixed_supp_prob}
	\tag{FSMF}
	\begin{aligned}
		& \underset{X \in \mb{R}^{m \times r}, Y \in \mb{R}^{n \times r}}{\text{Minimize}} 
		& &L(X, Y) = \|A - XY^\top\|^2&\\
		& \text{Subject to: }
		& &\texttt{supp}(X) \subseteq I\ \text{and}\ 
		\texttt{supp}(Y) \subseteq J
	\end{aligned}
\end{equation}
where $\|\cdot\|$ is the Frobenius norm, $I\subseteq \intset{m}  \times\intset{r} $, $J\subseteq \intset{n}  \times \intset{r} $\footnote{$\forall m\in\mathbb{N}$, $\intset{m}:=\{1,\dots,m\}$} are given support constraints, i.e., $\texttt{supp}(X) \subseteq I$ implies that $\forall (i,j) \notin I, X_{ij} = 0$. 

The main aim of this work is to investigate the theoretical properties of \eqref{eq: fixed_supp_prob}. To the best of our knowledge the analysis of matrix factorization problems with fixed supports has never been addressed in the literature. This analysis is however interesting, for the following reasons:
{
\begin{enumerate}[leftmargin=*]
	\item The asymptotic behaviour of heuristics such as PALM \cite{bolte:hal-00916090,lemagoarou:hal-01158057} when applied to Problem \eqref{eq:matrix_factorization} can be characterized by studying the behaviour of the method on an instance of \eqref{eq: fixed_supp_prob}. Indeed, PALM updates the factors alternatively by a projected gradient step onto the set of the constraints. It is experimentally observed that for many instances of the problem, the support becomes constant after a certain number of iterations. Let us illustrate this on an instance of Problem \eqref{eq:matrix_factorization} with $N = 2, X^i \in \mb{R}^{100 \times 100}, i = 1, 2$ and the constraints $|\supp(X^i)| \leq 1000, i = 1, 2$. In this setting, running PALM is equivalent to an iterative method in which we consecutively perform one step of gradient descent for \emph{each} factor, while keeping the other fixed, and project that factor  onto $\{X \mid X \in \mb{R}^{100 \times 100}, |\supp(X)| \leq 1000\}$ by simple hard-thresholding\footnote{Code for this experiment can be found in \cite{codeSLV}}. \Cref{fig: PALMbehavior} illustrates the evolution of the difference between the support of each factor before and after each iteration of PALM through {$5000$} iterations (the difference between two sets $B_1$ and $B_2$ is measured by $|(B_1 \setminus B_2) \cup (B_2 \setminus B_1)|$).  We observe that when the iteration counter is large enough, the factor supports do not change (or equivalently they become \emph{fixed}): further iterations of the algorithm simply optimize an instance of \eqref{eq: fixed_supp_prob}. Therefore, to develop a more precise understanding of the possible convergence of PALM in such a context, it is necessary to understand properties of  \eqref{eq: fixed_supp_prob}. {For instance, in this example, once the supports stop to change, the factors $(X^1_n, X^2_n)$ converge inside this fixed support (\Cref{fig: PALMbehavior}c)}. {However, there are cases in which PALM generates iterates $(X^1_n,X^2_n)$  diverging to infinity due to the presence of a \emph{spurious local valley} in the landscape of $L(X,Y)$ (cf \cref{rem:divergencePALM})}. This is not in conflict with the convergence results for PALM in this context \cite{bolte:hal-00916090,lemagoarou:hal-01158057} since these are established \emph{under the assumption of bounded iterates}. 
	\begin{figure}[h]
		\centering
		\includegraphics[scale=0.22]{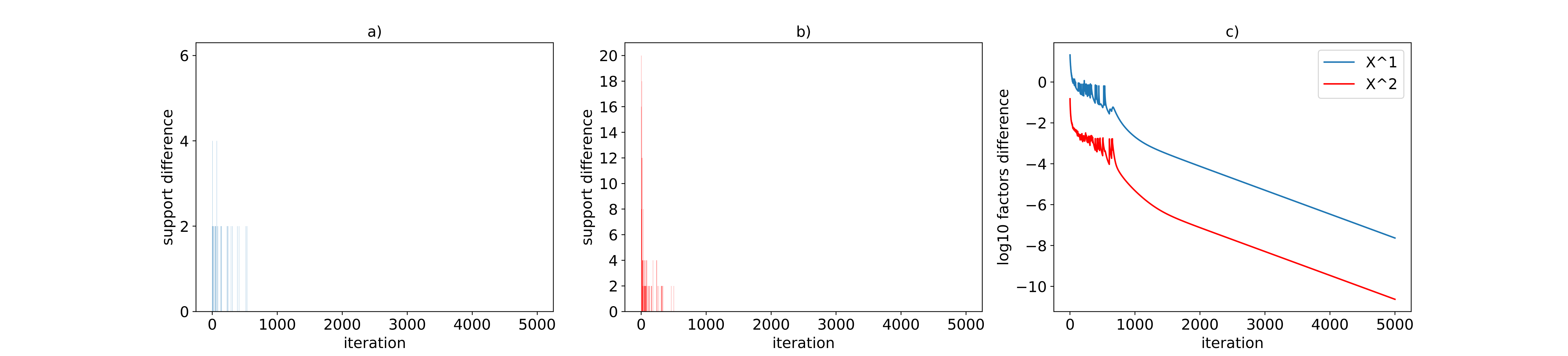}
		\caption{{Support change for the first (a) and the second (b) factor in PALM. The norm of the difference between two consecutive factors updates is depicted in (c) (logarithmic scale).}}
		\label{fig: PALMbehavior}
	\end{figure}
	\item While \eqref{eq: fixed_supp_prob} is just a class of the general problem \eqref{eq:matrix_factorization}, its coverage includes many other interesting problems:
	\begin{itemize}[leftmargin=*, label = $\bullet$]
		\item \textbf{Low rank matrix approximation (LRMA) \cite{eckart1936approximation}: } By taking $I = \intset{m} \times \intset{r}$, $J = \intset{n} \times \intset{r}$, addressing \eqref{eq: fixed_supp_prob} is equivalent to looking for the best rank $r$ matrix approximating $A$, cf. \Cref{fig:lmra_lu}(a). We will refer to this instance in the following as the full support case. This problem is known to be polynomially tractable, cf.  \Cref{sec:easyinstance}. This work enlarges the family of supports for which \eqref{eq: fixed_supp_prob} remains tractable.
		\item $\mbf{LU}$ \textbf{decomposition \cite[Chapter 3.2]{matrixcomputation}:} Considering $m = n= r$ and $I = J = \{(i,j) \mid 1 \leq j \leq i \leq n\}$, it is easy to check that \eqref{eq: fixed_supp_prob} is equivalent to factorizing $A$ into a lower and an upper triangular matrix ($X$ and $Y$ respectively, cf. \Cref{fig:lmra_lu}(b)), and in this case, the \emph{infimum} of \eqref{eq: fixed_supp_prob} is always zero. It is worth noticing that there exists a non-empty set of matrices for which this infimum is not attained (or equivalently matrices which do not admit the $\mbf{LU}$ decomposition \cite{matrixcomputation}). This behaviour will be further discussed in \Cref{sec: NPhard} and \Cref{sec: spuriouslocal}. More importantly, our analysis of \eqref{eq: fixed_supp_prob} will cover the non-zero infimum case as well.
		\begin{figure}[h]
			\centering
			\includegraphics[scale=0.32]{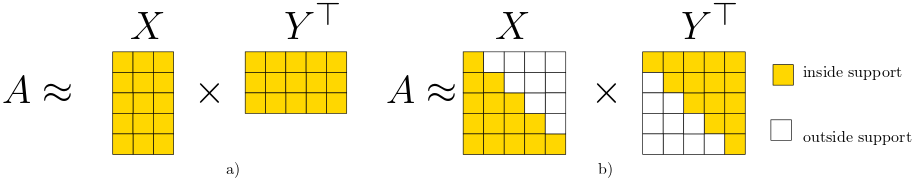}
			\caption{Illustrations for(a) LRMA and (b) $\mbf{LU}$ decomposition as instances of \eqref{eq: fixed_supp_prob}.}
			\label{fig:lmra_lu}
		\end{figure}
		\item \textbf{Butterfly structure and fast transforms \cite{DBLP:journals/corr/abs-1903-05895,chen2022pixelated,Dao2020Kaleidoscope:,ChasingButterfly,candesFourier}:} Many linear operators admit fast algorithms since their associated matrices can be written as a product of sparse factors whose supports are known to possess the \emph{butterfly structure} (and they are \emph{known} in advance). This is the case for instance of the Discrete Fourier Transform (DFT) or the Hadamard transform (HT). For example, a Hadamard transform of size $2^N \times 2^N$ can be written as the product of $N$ factors of size $2^N \times 2^N$ whose factors have two non-zero coefficients per row and per column. \Cref{fig: hadamard} illustrates such a factorization for $N = 3$.
		\begin{figure}[h]
			\centering
			\includegraphics[scale=0.45]{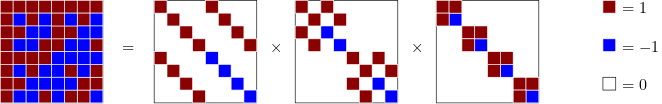}
			\caption{The factorization of the Hadamard transform of size $8 \times 8\; (N = 3)$.}
			\label{fig: hadamard}
		\end{figure}
		Although our analysis of \eqref{eq: fixed_supp_prob} only deals with $N = 2$, the butterfly structure allows one to reduce to the case $N = 2$ in a recursive\footnote{{{While revising this manuscript we heard about the work of Dao et al \cite{monarch} introducing the ``Monarch'' class of structured matrices, essentially corresponding to the first stage of the recursion from \cite{papierICASSP,papierLeon}.}}} manner \cite{papierICASSP,papierLeon}. 
		\item \textbf{Hierarchical $\mc{H}$-matrices \cite{Hackbusch1999ASM,Hackbusch2000ASH}:} We prove in \Cref{app:hierarchical} that the class of hierarchically off-diagonal low-rank (HODLR) matrices (defined in \cite[Section 3.1]{HODLR}, \cite[Section 2.3]{Hackbusch1999ASM}), a subclass of hierarchical  $\mc{H}$-matrices, can be expressed as the product of two factors with fixed supports, that are illustrated on Figure~\ref{fig:hodlr}. Therefore, the task of finding the best $\mc{H}$-matrix from this class to approximate a given matrix is reduced to \eqref{eq: fixed_supp_prob}.
		\begin{figure}[h]
			\centering
			\includegraphics[scale=0.35]{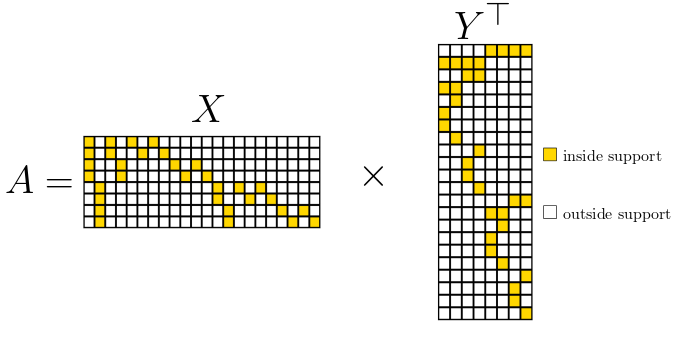}
			\caption{Two fixed supports for factors of a HODLR matrix of size $8 \times 8$ illustration based on analysis of \Cref{app:hierarchical}.}
			\label{fig:hodlr}
		\end{figure}
		\item \textbf{Matrix completion:} We show that matrix completion can be reduced to \eqref{eq: fixed_supp_prob}, which is the main result of \Cref{sec: NPhard}.
	\end{itemize}
\end{enumerate}
}
Our aim is to then study the theoretical properties of \eqref{eq: fixed_supp_prob} and in particular to assess its difficulty.  
{This leads us to consider four
complementary aspects}. 

First, we show the NP-hardness of \eqref{eq: fixed_supp_prob}. While this result contrasts with the theory established for coefficient recovery with a fixed support in the classical sparse recovery problem (that can be trivially addressed by least squares), it is in line with the known hardness of related matrix factorization with additional constraints or different losses. Indeed, famous variants of matrix factorization such as non-negative matrix factorization (NMF) \cite{NMFExactNPHard,NMFRankNPhard}, weighted low rank \cite{WLRANPHard} and matrix completion \cite{WLRANPHard} were all proved to be NP-hard. We prove the NP-hardness by reduction from the Matrix Completion problem with noise. To our knowledge this proof is new and cannot be trivially deduced from any existing result on the more
classical full support case.

{
Second, we show that besides its NP-hardness, problem \eqref{eq: fixed_supp_prob} also shares some properties with another hard problem: low-rank tensor approximation \cite{TensorRank}. Similarly to the classical example of \cite{TensorRank}, which shows that the set of rank-two tensors is not closed, we show that there are support constraints $I,J$ such that the set of matrix products $XY^\top$ with ``feasible'' $(X,Y)$ (i.e.,  $\{XY^\top \mid \supp(X) \subseteq I, \supp(Y) \subseteq J\}$),  is not a closed set. Important examples are the supports $(I,J)$ for which \eqref{eq: fixed_supp_prob} corresponds to $\mathbf{LU}$ matrix factorization. For such support constraints, there exists a matrix $A$ such that the infimum of $L(X,Y)$ is zero and can only be approached if either $X$ or $Y$ have at least an arbitrarily large coefficient. This is precisely one of the settings leading to a diverging behavior of PALM (cf \Cref{rem:divergencePALM}). }

{Third,} we show that despite the hardness of \eqref{eq: fixed_supp_prob} in the general case, many pairs of support constraints $(I,J)$ make the problem solvable by an effective direct algorithm based on the block singular value decomposition (SVD). The investigation of those supports is also covered in this work and  a dedicated polynomial algorithm is proposed to deal with this family of supports. This includes for example the full support case. Our analysis of tractable instances of \eqref{eq: fixed_supp_prob} actually includes and substantially generalizes the analysis of the instances that can be classically handled with the SVD decomposition. In fact, the presence of the constraints on the support makes it impossible to directly use the SVD to solve the problem, because coefficients outside the support have to be zero. However, the presented family of support constraints allows for an iterative decomposition of the problem into "blocks" that can be exploited to build up an optimal solution using blockwise SVDs. This technique  can be seen in many sparse representations of matrices (for example, hierarchical {$\mc{H}$-}matrices \cite{Hackbusch1999ASM, Hackbusch2000ASH}) to allow fast matrix-vector and matrix-matrix multiplication. 


The {fourth} contribution of this paper is the study of the landscape of the objective function $L$ of \eqref{eq: fixed_supp_prob}. Notably, we investigate the existence of \emph{spurious local minima} and \emph{spurious local valleys}, which will be collectively referred to as \emph{spurious objects}. They will be formally introduced in \Cref{sec: spuriouslocal}, but intuitively these objects may represent a challenge for the convergence of local optimization methods.

The {global} landscape of the loss functions for {matrix decomposition related problems (matrix sensing \cite{LRMR, LRMO}, phase retrieval \cite{phaseretrieval}, matrix completion \cite{MatrixCompletionNoSLM, NoLocalMinimaGeometryAnalysis,localminimaanalysiskernelPCA}) and neural network training (either with linear \cite{DBLP:journals/corr/abs-1805-04938, NIPS2016_6112, venturi2020spurious} or non-linear activation functions \cite{nguyen2019connected, nguyen2017loss})} 
has been a popular subject of study recently. 
These works have direct link to ours since matrix factorization {\em without any support constraint} can be seen {either as a matrix decomposition problem or as} a specific case of neural network (with two layers, no bias and linear activation function). 
 Notably it has been proved \cite{DBLP:journals/corr/abs-1805-04938} that for linear neural networks, every local minimum is a global minimum and if the network is shallow (i.e., there is only one hidden layer), critical points are either global minima or strict saddle points (i.e., their Hessian have at least one --\emph{strictly}-- negative eigenvalue). However, there is still a \textit{tricky} type of landscape that could represent a challenge for local optimization methods and has not been covered until recently: spurious local valleys \cite{nguyen2019connected,venturi2020spurious}.  {In particular, the combination of these results shows the benign landscape for LMRA, a particular instance of \eqref{eq: fixed_supp_prob}.}

{However}, to the best of our knowledge, existing analyses of {landscape} are only proposed for {neural network training in general and matrix factorization problem in particular} \emph{without support constraints}, cf.  \cite{DBLP:journals/corr/abs-1805-04938,venturi2020spurious,NIPS2016_6112}, while the study of the landscape of \eqref{eq: fixed_supp_prob} remains untouched in the literature and our work can be considered as a generalization of such previous results.  {Moreover, unlike many existing results of matrix decomposition problems that are proved to hold with high probability under certain random models \cite{LRMR,LRMO,phaseretrieval,MatrixCompletionNoSLM,NoLocalMinimaGeometryAnalysis,localminimaanalysiskernelPCA, reviewLRMF}), our result deterministically ensures the benign landscape for \emph{each} matrix $A$, under certain conditions on the support constraints $(I,J)$}.

To summarize, our main contributions in this paper are:
\begin{itemize} 
	\item[1)] We prove that \eqref{eq: fixed_supp_prob} is NP-hard in \cref{theorem: NPhardfixedsupport}. {In addition, in light of classical results on the $\mbf{LU}$ decomposition, we highlight  in \Cref{sec: NPhard}
	a challenge related to the possible non-existence of an optimal solution of \eqref{eq: fixed_supp_prob} .} 
	\item[2)] We introduce  families of support constraints $(I,J)$ making \eqref{eq: fixed_supp_prob} tractable  (\cref{theorem:disjoint_totally_overlapping} and \cref{theorem: reduction_disjoint_overlapping}) and provide dedicated polynomial algorithms for those families.
	
	\item[3)]  We show that the landscape of \eqref{eq: fixed_supp_prob} corresponding to the support pairs $(I,J)$ in these families are free of spurious local valleys, regardless of the factorized matrix $A$ (\autoref{theorem:noSpuriousSimple}, \autoref{th:MainNoSpuriousComplex}). We also investigate the presence of spurious local minima for such families (\autoref{theorem:noSpuriousSimple}, \autoref{theorem:nospuriousminimaComplex}). 
	
	\item[4)] These results might suggest a conjecture that holds true for the full support case: an instance of \eqref{eq: fixed_supp_prob} is tractable if and only if {its} corresponding landscape is benign, i.e. free of spurious objects. We give a counter-example to this conjecture (\cref{ex:spuriousinstances})
	and  illustrate numerically that even with support constraints ensuring a benign landscape, state-of-the-art gradient descent methods can be significantly slower than the proposed dedicated algorithm.
\end{itemize}

\subsection{Notations}
\label{sec:notation}
For $n \in \mb{N}$, define $\llbracket n\rrbracket:= \{1, \ldots, n\}$. The notation $\mbf{0}$ (resp. $\mbf{1}$) stands for a  matrix with all zeros (resp. all  ones) coefficients. The identity matrix of size $n \times n$ is denoted by $\mbf{I}_n$. Given a matrix $A \in \mb{R}^{m \times n}$ and $T \subseteq \llbracket n\rrbracket$, $A_{\bullet, T} \in \mb{R}^{m \times |T|}$  is the submatrix of $A$ {restricted} to the columns indexed in $T$ while $A_T \in \mb{R}^{m \times n}$ is the matrix that has the same columns as $A$ for {indices} in $T$ and is zero elsewhere. If $T = \{k\}$ is a singleton, $A_{\bullet, T}$ is simplified as $A_{\bullet, k}$ (the $k^{th}$ column of $A$). For $(i,j) \in \intset{m} \times \intset{n}, A_{i,j}$ is the coefficient of $A$ at index $(i,j)$.
If $S \subseteq \llbracket m\rrbracket, T \subseteq \llbracket n\rrbracket$, then $A_{S, T} \in \mb{R}^{|S| \times |T|}$ is the submatrix of $A$ {restricted} to rows and columns indexed in $S$ and $T$ respectively.

A support constraint $I$ on a matrix $X \in \mb{R}^{m \times r}$ can be interpreted either as a subset $I \subseteq \llbracket m\rrbracket \times \llbracket r \rrbracket$ or as its indicator matrix $1_I \in \{0,1\}^{m \times r}$ defined as: $(1_I)_{i,j} = 1$ if $(i,j) \in I$ and  $0$ otherwise. Both representations will be used interchangeably and the meaning should be clear from the context. For $T \subseteq \llbracket r\rrbracket$, we use the notation $I_{T} := I \cap \left( \llbracket m \rrbracket \times T\right)$ (this is {consistent} with the notation $A_T$ introduced earlier).

The notation $\supp(A)$ is used for both vectors and matrices: if $A \in \mb{R}^m$ is a vector, then $\supp(A) = \{i \mid A_i \neq 0\} \subseteq \intset{m}$; if $A \in \mb{R}^{m \times n}$ is a matrix, then $\supp(A) = \{(i,j) \mid A_{i,j} \neq 0\} \subseteq \intset{m} \times \intset{n}$.
Given two matrices $A, B \in \mb{R}^{m\times n}$, the Hadamard product $A \odot B$ between $A$ and $B$ is defined as $(A \odot B)_{i,j} = A_{i,j}B_{i,j}, \forall (i,j) \in \llbracket m\rrbracket \times \llbracket n \rrbracket$. Since a support constraint $I$ of a matrix $X$ can be thought of as a binary matrix of the same size, we define $X \odot I := X \odot 1_I$ analogously (it is a matrix whose coefficients in $I$ are unchanged while the others are set to zero).

\section{Matrix factorization with fixed support is NP-hard}
\label{sec: NPhard}

To show that \eqref{eq: fixed_supp_prob} is NP-hard we use the classical technique to prove NP-hardness: reduction. Our choice of reducible problem is matrix completion with noise \cite{WLRANPHard}.

\begin{definition}[Matrix completion with noise  \cite{WLRANPHard}]
	\label{def: matrixcompletion}
	Let $W \in \{0,1\}^{m \times n}$ be a binary matrix. Given $A \in \mb{R}^{m \times n}, s \in \mathbb{N}$, the matrix completion problem (MCP) is:
	\begin{equation}\label{eq:MCP}
		\tag{MCP}
		\underset{X \in \mb{R}^{m \times s}, Y \in \mb{R}^{n \times s}}{\emph{\text{Minimize }}} \|A - XY^\top\|_W^2 = \|(A - XY^\top) \odot W\|^2.
	\end{equation}
\end{definition}
This problem is NP-hard even when $s = 1$ \cite{WLRANPHard} by its reducibility from Maximum-Edge Biclique Problem, which is NP-complete \cite{MaximumBicliqueNPComplete}. This is given in the following theorem:

\begin{theorem}[NP-hardness of matrix completion with noise \cite{WLRANPHard}]
	\label{theorem:MCPhard}
	Given a binary weighting matrix $W \in \{0,1\}^{m \times n}$ and $A \in [0,1]^{m \times n}$, the optimization problem
	\begin{equation}
		\label{eq: WLRA1}
		\tag{MCPO}
		\underset{{x \in \mb{R}^m, y \in \mb{R}^n}}{\emph{\text{Minimize }}} \|A - xy^\top\|_W^2.
	\end{equation}
	is called rank-one matrix completion problem (MCPO). Denote $p^*$ the infimum of \eqref{eq: WLRA1} and let $\epsilon = 2^{-12}(mn)^{-7}$. It is NP-hard to find an approximate solution with objective function accuracy less than $\epsilon$, i.e. with objective value $p \leq p^* + \epsilon$.
\end{theorem}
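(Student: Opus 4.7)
The plan is to prove NP-hardness by polynomial reduction from the \emph{Maximum-Edge Biclique Problem} (MBP), which is NP-complete: given a bipartite graph $G=(U\cup V, E)$ with $|U|=m$, $|V|=n$ and an integer $K$, decide whether $G$ has a biclique with at least $K$ edges. Starting from $G$, I would construct in polynomial time an MCPO instance $(A,W)$ with $A\in[0,1]^{m\times n}$ and $W\in\{0,1\}^{m\times n}$ such that the optimal value $p^*$ encodes the maximum biclique size. The natural design is to take $W$ as the indicator of $E$ (or a slight padding of it) together with an appropriate $\{0,1\}$- or constant-valued $A$, so that binary rank-one candidates $x=\mathbf{1}_S$, $y=\mathbf{1}_T$ with $S\subseteq U$, $T\subseteq V$ correspond to subset choices and yield $xy^\top=\mathbf{1}_{S\times T}$. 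With a suitable choice of $A$, the weighted squared error becomes a strictly decreasing function of $|S|\cdot |T|$ whenever $S\times T\subseteq E$, so that minimizing the objective over binary rank-one pairs amounts exactly to solving MBP.

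The second step is a continuous relaxation lemma: since $x\in\mathbb{R}^m$, $y\in\mathbb{R}^n$ are unrestricted, one must show that the infimum $p^*$ is either attained or arbitrarily well approximated by a near-binary solution. This is done by exploiting the bilinear structure: with $y$ fixed, the optimum in each $x_i$ has a closed-form weighted least-squares expression, and symmetrically for $y_j$. From these first-order optimality relations I would derive quantitative bounds on the magnitudes of $x_i^* y_j^*$ at any minimizer, and then argue that optimal coefficients, up to precision $\epsilon$, can be rounded to a binary rank-one factorization encoding a genuine biclique in $G$.

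The last step is a rounding argument: any $(x,y)$ with objective value at most $p^*+\epsilon$ is transformed in polynomial time into binary vectors $(\mathbf{1}_S,\mathbf{1}_T)$ with objective differing by at most~$\epsilon$, which would allow one to decide the MBP instance. The main obstacle is the calibration of the precision parameter. The value $\epsilon=2^{-12}(mn)^{-7}$ is an inverse polynomial in $mn$ and reflects the inverse-polynomial gap between objective values attained by distinct bicliques (whose sizes are bounded integers), plus the slack produced by the continuous relaxation. The technical heart of the argument is therefore a careful perturbation analysis: simultaneously bounding the discrepancy between the continuous optimum and the best binary solution, and lower bounding the minimum objective separation between two bicliques of different sizes. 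Establishing both quantities as $\mathrm{poly}(mn)^{-1}$ is what forces and yields the stated value of $\epsilon$, and in turn transfers the NP-hardness of MBP to the $\epsilon$-approximate MCPO.
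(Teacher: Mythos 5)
This theorem is not proved in the paper at all: it is imported verbatim from Gillis and Glineur \cite{WLRANPHard} and used as a black box in the reduction to \eqref{eq: fixed_supp_prob}. The paper only remarks that the hardness comes from a reduction from the Maximum-Edge Biclique Problem, which is also the strategy you propose, so at the level of ``which problem do we reduce from'' your plan matches the cited source. There is therefore no in-paper proof to compare against; the meaningful question is whether your sketch is a workable reconstruction of the argument in \cite{WLRANPHard}.

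There is a concrete gap in your first step. If you take $W$ to be the indicator of the edge set $E$ and $A$ to be the all-ones matrix (or the biadjacency matrix), then the objective is $\sum_{(i,j)\in E}(1-x_i y_j)^2$, and the choice $x=\mathbf{1}_m$, $y=\mathbf{1}_n$ drives it to zero regardless of $G$. Nothing about the biclique structure is encoded, because matrix completion only ``sees'' the observed entries, so the non-edges (which are what make the biclique problem hard) impose no constraint at all. The actual construction in \cite{WLRANPHard} is more delicate: it has to observe a pattern that includes non-edges (or padded rows/columns), with carefully chosen target values in $A$, so that good fits on observed entries force $x_iy_j$ to be small on non-edges while being rewarded for being large on a combinatorially large rectangle of edges. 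Your second and third steps (closed-form alternating least-squares bounds, rounding a near-optimal continuous solution to a binary biclique, and tracking the inverse-polynomial objective gaps that yield $\epsilon=2^{-12}(mn)^{-7}$) are the right shape of argument, but they cannot be started without a correct encoding in step one; the precision constant in particular is an artifact of the specific $(A,W)$ construction and the resulting separation between objective values of distinct bicliques, and cannot be derived abstractly from ``a suitable choice of $A$'' as you leave it.
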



The following lemma gives a reduction from  \eqref{eq: WLRA1} to \eqref{eq: fixed_supp_prob}.

\begin{lemma}
	\label{lem: NPhardness}
	For any binary matrix $W \in \{0,1\}^{m \times n}$, there exist an integer $r$ and two sets $I$ and $J$ such that for all $A \in \mb{R}^{m \times n}$, \eqref{eq: WLRA1} and \eqref{eq: fixed_supp_prob} share the same infimum. $I$ and $J$ can be constructed in polynomial time. Moreover, if one of the  problems has a known solution that provides objective function accuracy $\epsilon$, we can find a solution with the same accuracy for the other one in polynomial time.
\end{lemma}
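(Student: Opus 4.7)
The plan is to encode \eqref{eq: WLRA1} as an instance of \eqref{eq: fixed_supp_prob} by introducing, besides one free rank-one column pair, an auxiliary rank-one column pair with \emph{singleton} supports for every masked entry of $W$. Concretely, I would set $r := 1 + |\{(i,j) : W_{ij} = 0\}|$ and label the extra column indices by the zero entries of $W$, say $k_{(i,j)} \in \intset{r}\setminus\{1\}$ for each $(i,j)$ with $W_{ij}=0$. I then define
\begin{equation*}
I := \bigl(\intset{m}\times\{1\}\bigr) \cup \bigl\{(i,k_{(i,j)}) : W_{ij}=0\bigr\},
\qquad
J := \bigl(\intset{n}\times\{1\}\bigr) \cup \bigl\{(j,k_{(i,j)}) : W_{ij}=0\bigr\}.
\end{equation*}
Both $I$ and $J$ have size at most $m+n+mn$ and are constructed in polynomial time.

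With these support constraints, writing $XY^\top = X_{\bullet,1} Y_{\bullet,1}^\top + \sum_{k\ge 2} X_{\bullet,k} Y_{\bullet,k}^\top$, each auxiliary term $X_{\bullet,k_{(i,j)}} Y_{\bullet,k_{(i,j)}}^\top$ is supported on the single entry $(i,j)$ (and that $(i,j)$ satisfies $W_{ij}=0$), while the first term equals $xy^\top$ with $x := X_{\bullet,1}$ and $y := Y_{\bullet,1}$. Hence
\begin{equation*}
(XY^\top)_{i,j} \;=\; x_i y_j \quad\text{whenever } W_{ij}=1,
\end{equation*}
whereas for each $(i,j)$ with $W_{ij}=0$, the scalar $X_{i,k_{(i,j)}} Y_{j,k_{(i,j)}}$ can be tuned independently to make $(XY^\top)_{i,j}$ equal to any real value.

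From this observation the equality of the two infima, and the polynomial-time transfer of approximate solutions, are immediate in both directions. Given $(x,y)$ feasible for \eqref{eq: WLRA1}, I extend it to $(X,Y)$ feasible for \eqref{eq: fixed_supp_prob} by setting $X_{\bullet,1}=x$, $Y_{\bullet,1}=y$, and choosing the auxiliary entries so that $(XY^\top)_{i,j}=A_{ij}$ whenever $W_{ij}=0$; then the \eqref{eq: fixed_supp_prob} loss equals $\|(A-xy^\top)\odot W\|^2$, the \eqref{eq: WLRA1} loss at $(x,y)$. Conversely, given $(X,Y)$ feasible for \eqref{eq: fixed_supp_prob} with loss $v$, setting $x := X_{\bullet,1}$, $y := Y_{\bullet,1}$ yields an \eqref{eq: WLRA1} loss bounded above by $v$, since dropping the nonnegative contributions of indices with $W_{ij}=0$ can only decrease $\sum_{i,j}(A_{ij}-(XY^\top)_{ij})^2$. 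Combining the two directions also transfers any $\epsilon$-accurate solution verbatim.

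The argument is essentially mechanical; the one step that requires a little care is choosing the auxiliary supports to be \emph{singletons}, which is what guarantees that (i) a pointwise mismatch at any masked entry can be corrected at no cost, and (ii) the auxiliary columns contribute nothing outside the mask. There is no serious obstacle beyond this bookkeeping.
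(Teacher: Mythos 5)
Your construction is correct and proves the lemma as stated. It is essentially the same idea as the paper's: you keep one unconstrained rank-one column pair for $xy^\top$, and you add auxiliary rank-one column pairs whose tiny supports let you zero out the residual on every masked entry $(i,j)$ with $W_{ij}=0$, so that the FSMF loss coincides with the masked loss $\|(A-xy^\top)\odot W\|^2$. Both directions of the infimum inequality and the transfer of $\epsilon$-accurate solutions go through exactly as you argue.

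The one substantive difference is the granularity of the auxiliary columns, and it has a small downstream cost. You allocate one auxiliary column per masked entry, giving $r = 1 + |\{(i,j): W_{ij}=0\}|$, which can be as large as $mn+1$. The paper instead allocates one auxiliary column per \emph{column} of $W$: taking $J_{\bullet,j}$ supported on the singleton $\{j\}$ and $I_{\bullet,j}$ supported on $\{i: W_{ij}=0\}$, the $j$-th rank-one contribution is a rank-one matrix confined to column $j$ and can simultaneously fill in every masked entry of that column. This yields $r = \min(m,n)+1$. Both choices are polynomial, so the lemma is equally well served, but the paper's tighter $r$ is what makes the subsequent discussion (right after \cref{theorem: NPhardfixedsupport}) about hardness as a function of $r$ meaningful: the hard instance it produces has $r$ only $\min(m,n)+1$, whereas your encoding would not by itself rule out tractability for, say, $r = O(n)$. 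If you wanted to match the paper's construction, the observation to make is that auxiliary contributions confined to different columns of $A$ have disjoint supports, so you can merge all the per-entry columns within a fixed column index $j$ into a single rank-one column pair.
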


\begin{proof}[Proof sketch]
	Up to a transposition, we can assume without loss of generality that $m \geq n$. Let $r = n + 1 = \min(m,n) + 1$.
	We define $I \in \{0,1\}^{m \times (n + 1)}$ and $J \in \{0,1\}^{n \times (n + 1)}$ as follows:
	\begin{equation*}
		\begin{aligned}
			I_{i,j} &= \begin{cases}
				1 - W_{i,j} & \text{if } j \neq n\\
				1 & \text{if } j = n + 1
			\end{cases},
			J_{i,j} &= \begin{cases}
				1 & \text{if } j = i \text{ or } j = n + 1\\
				0 & \text{otherwise}
			\end{cases}
		\end{aligned}
	\end{equation*}
	This construction can clearly be made in polynomial time. 
	We show in the supplementary material (\cref{sm:NPhardness}) that the two problems share the same infimum.
\end{proof}

Using \cref{lem: NPhardness}, we obtain a result of NP-hardness for \eqref{eq: fixed_supp_prob} as follows. 
\begin{theorem}
	\label{theorem: NPhardfixedsupport}
	When $A \in [0,1]^{m \times n}$, it is NP-hard to solve \eqref{eq: fixed_supp_prob} with arbitrary index sets $I,J$ and objective function accuracy less than $\epsilon = 2^{-12}(mn)^{-7}$.
\end{theorem}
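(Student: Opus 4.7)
The plan is a direct chaining of the two prerequisite results: Theorem \ref{theorem:MCPhard} (NP-hardness of \eqref{eq: WLRA1} to accuracy $\epsilon = 2^{-12}(mn)^{-7}$) and the polynomial reduction given by Lemma \ref{lem: NPhardness}. The strategy is to argue by contradiction: assume a polynomial-time algorithm $\mathcal{A}$ solves \eqref{eq: fixed_supp_prob} to accuracy $\epsilon$ on any input $A \in [0,1]^{m \times n}$ with arbitrary $I,J$, and derive a polynomial-time algorithm for \eqref{eq: WLRA1} to the same accuracy, contradicting Theorem \ref{theorem:MCPhard}.

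Concretely, I would proceed as follows. Given an arbitrary instance of \eqref{eq: WLRA1}, i.e. a binary matrix $W \in \{0,1\}^{m \times n}$ and $A \in [0,1]^{m \times n}$, invoke the construction in Lemma \ref{lem: NPhardness} to produce, in polynomial time, an integer $r = \min(m,n)+1$ and support sets $I,J$ such that \eqref{eq: WLRA1} and \eqref{eq: fixed_supp_prob} (instantiated with the same $A$ and these $I,J$) share the same infimum $p^\star$. Since $A$ remains in $[0,1]^{m \times n}$, the assumed algorithm $\mathcal{A}$ applies and returns in polynomial time a feasible pair $(X,Y)$ for \eqref{eq: fixed_supp_prob} with $L(X,Y) \leq p^\star + \epsilon$. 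The second half of Lemma \ref{lem: NPhardness} then converts this pair in polynomial time into a feasible $(x,y)$ for \eqref{eq: WLRA1} with objective value at most $p^\star + \epsilon$. This yields a polynomial-time $\epsilon$-approximation algorithm for \eqref{eq: WLRA1}, contradicting Theorem \ref{theorem:MCPhard} unless $\mathrm{P}=\mathrm{NP}$.

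There is no real obstacle here, since Lemma \ref{lem: NPhardness} already does all the nontrivial work (matching infima and transferring approximate solutions with preserved accuracy). The only points I would be careful about when writing out the argument are: (i) verifying that the value $\epsilon = 2^{-12}(mn)^{-7}$ used in Theorem \ref{theorem:MCPhard} is preserved exactly under the reduction, which follows because the objective values themselves (not merely their scales) are preserved; (ii) noting that the dimensions of the \eqref{eq: fixed_supp_prob} instance are polynomial in $(m,n)$ (in fact one side remains $m$ and $n$, while the inner dimension is $r = \min(m,n)+1$), so that ``polynomial time'' in the FSMF instance is still polynomial time in the original MCPO instance; and (iii) the hypothesis $A \in [0,1]^{m\times n}$ carries over unchanged, so the hardness statement of Theorem \ref{theorem:MCPhard} indeed applies to the constructed instance. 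With these verifications the theorem follows in a short paragraph.
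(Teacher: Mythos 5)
Your proposal is correct and matches the paper's proof essentially verbatim: both invoke \cref{lem: NPhardness} to map an MCPO instance $(A,W)$ to an FSMF instance $(A,I,J)$ with the same infimum and with accuracy-preserving solution transfer, then cite \cref{theorem:MCPhard} to conclude. The additional sanity checks you list (preservation of $\epsilon$, polynomial dimensions, $A\in[0,1]^{m\times n}$ unchanged) are implicit in the paper's shorter argument but correct and worth stating.
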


\begin{proof}
	Given any instance of \eqref{eq: WLRA1} (i.e., two matrices $A \in [0,1]^{m \times n}$ and $W \in \{0,1\}^{m \times n}$), we can produce an instance of \eqref{eq: fixed_supp_prob} (the same  matrix $A$ and $I\in \{0,1\}^{m \times r}, J \in \{0,1\}^{n \times r}$) such that both have the same infimum (\cref{lem: NPhardness}). Moreover, for any given objective function accuracy, we can use the procedure of \cref{lem: NPhardness} to make sure the solutions of both problems share the same accuracy. 
	
	Since all procedures are polynomial, this defines a polynomial reduction from \eqref{eq: WLRA1} to \eqref{eq: fixed_supp_prob}. Because \eqref{eq: WLRA1} is NP-hard to obtain a solution with objective function accuracy less than $\epsilon$ (\cref{theorem:MCPhard}), so is \eqref{eq: fixed_supp_prob}.
\end{proof}

We point out that, while the result is interesting on its own, for some applications, such as those arising in machine learning, the accuracy bound $O((mn)^{-7})$ may not be really appealing.  We thus keep as an interesting open research direction to determine if some precision threshold exists that make the general problem easy.

\Cref{lem: NPhardness} constructs a hard instance where $(I,J) \in \{0,1\}^{m \times r} \times \{0,1\}^{n \times r}$ and $r = \min(m,n) + 1$. It is also interesting to investigate the hardness of \eqref{eq: fixed_supp_prob} given a fixed $r$. When $r = 1$, the problem is polynomially tractable since this case is  covered by \Cref{theorem:disjoint_totally_overlapping} below. On the other hand, when $r \geq 2$, the question becomes complicated due to the fact that the set $\{XY^\top \mid \supp(X) \subseteq I, \supp(Y) \subseteq J\}$ is not always closed. In \Cref{rem: no_global_min}, we show an instance of \eqref{eq: fixed_supp_prob} where the infimum is zero but cannot be attained. Interestingly enough, this is exactly the example for the non-existence of an exact $\mbf{LU}$ decomposition of a matrix in $\mb{R}^{2 \times 2}$ presented in \cite[Chapter 3.2.12]{matrixcomputation}. We emphasize that this is not a mere consequence of the non-coercivity of $L(X,Y)$ -- which follows from rescaling invariance, see e.g. \Cref{rem:nostrict} -- as we will also present support constraints for which the problem always admits a global minimizer and can be solved with an efficient algorithm. More generally, one can even show that the set $\mc{L}$ of square matrices of size $n \times n$ having an exact $\mbf{LU}$ decomposition ({\em i.e.}, $\mc{L}:= \{XY^\top \mid \supp(X) \subseteq I, \supp(J) \subseteq J\}$ where $I = J = \{(i,j) \mid 1 \leq j \leq i \leq n\}$) is {open and} dense in $\mb{R}^{n \times n}$ (since a matrix having all non-zero leading principal minors admits an exact $\mbf{LU}$ factorization \cite[Theorem 3.2.1]{matrixcomputation}) but $\mc{L} \subsetneq \mb{R}^{n \times n}$. Thus, $\mc{L}$ is not closed.

{Furthermore, one might wonder whether the pathological cases consists of a ``zero measure'' set, as many results for deterministic matrix completion problem \cite{algebraiccombLRMC,universalmatrixcompletion} can be established for ``almost all instances''. Our examples on the \textbf{LU} decomposition seem to corroborate this hypothesis as well. Nevertheless, for the problem of tensor decomposition, which is very closely related to ours,  \cite{TensorRank} showed the converse: the pathological cases related to the projection of a \emph{real} tensor of size $2 \times 2 \times 2$ to the set of rank two tensors consists of an open subset of $\mb{R}^{2 \times 2 \times 2}$, thus ``non-negligible''. The answer also changes depending on the underlying field ($\mb{R}$ or $\mb{C}$) of the tensor/matrix \cite{complexbestrankr,algebraiccombLRMC}. Given the richness of this topic, we leave this question open as a future research direction.}

\section{Tractable instances of matrix factorization with fixed support}
\label{sec:easyinstance}
Even though \eqref{eq: fixed_supp_prob} is generally NP-hard, when we consider the full support case $I = \llbracket m\rrbracket \times \llbracket r\rrbracket, J = \llbracket n\rrbracket \times \llbracket r \rrbracket$ 
the problem is equivalent to 
LRMA \cite{eckart1936approximation}, which can be solved using the Singular Value Decomposition (SVD) \cite{SVDcompute}\footnote{SVD can be computed to machine precision in $O(mn^2)$ \cite{kumar2016literature}, see also \cite[Lecture 31, page 236]{trefethen1997numerical}. It is thus convenient to think of LRMA as polynomially solvable.}. This section is devoted to enlarge the family of supports for which \eqref{eq: fixed_supp_prob} can be solved by an effective direct algorithm {based on blockwise SVDs}. We start with an important definition:

\begin{definition}[Support of rank-one contribution]
	\label{def: support_rank_one_synthesis}
	Given two support constraints  $I\in\{0,1\}^{m\times r}$ and $J\in\{0,1\}^{n\times r}$ of \eqref{eq: fixed_supp_prob} and $k \in \llbracket r \rrbracket$, we define the $k^{th}$ rank-one contribution support $\mc{S}_k(I,J)$ (or in short, $\mS_k$) as: $\mc{S}_k(I,J) = I_{\col{k}}J_{\col{k}}^{\top}$. This can be seen either as:	a tensor product: $\mS_k \in \{0,1\}^{m \times n}$ is a binary matrix or a Cartesian product: $\mS_k$ is a set of matrix indices defined as $ \supp(I_\col{k}) \times  \supp(J_\col{k})$.
\end{definition}

Given a pair of support constraints $I, J$, if $\texttt{supp}(X) \subseteq I, \texttt{supp}(Y) \subseteq J$, we have: $\texttt{supp}(X_\col{k}Y_\col{k}^\top) \subseteq \mS_k,\; \forall k \in \llbracket r\rrbracket$. Since $XY^\top = \sum_{k = 1}^r X_\col{k} Y_\col{k}^\top$
the notion of contribution support $\mS_k$ captures the constraint on the support of the $k^{th}$ \emph{rank-one contribution}, $X_\col{k} Y_\col{k}^\top$, of the matrix product $XY^\top$ (illustrated in \cref{fig:rank1supp}).

{In the case of full supports ($\mS_k = \mbf{1}_{m \times n}$ for each $k \in \intset{r}$), the optimal solution can be obtained in a greedy manner: indeed, it is well known that 
\Cref{algorithm0} computes factors achieving the best rank-$r$ approximation to $A$ (notice that here the algorithm also works for complex-valued matrices):
\begin{algorithm}[H]
	\centering
	\caption{Generic Greedy Algorithm} 
	\label{algorithm0}
	\begin{algorithmic}[1]
		\Require $A \in \mathbb{R}^{m \times n}$ or $\mathbb{C}^{m \times n}$; $\{\mS_{k}\}_{k \in \intset{r}}$ rank-one supports 
		\For {$i \in \intset{r}$}
		\State $(X_\col{i}, Y_\col{i}) = (u,v)$  where $uv^{\top}$ is any best rank-one approximation to $A \odot \mS_{i}$ \label{algo0:line2}			
		\State $A = A - X_\col{i}, Y_\col{i}^\top$	
		\EndFor
		\State \Return $(X,Y)$
	\end{algorithmic}
\end{algorithm}
Even beyond the full support case, the output of \Cref{algorithm0} always satisfies the support constraints due to line~\ref{algo0:line2}, however it may not always be the optimal solution of \eqref{eq: fixed_supp_prob}. 
Our analysis of the polynomial tractability conducted below will 
allow us to show that, under appropriate assumptions on $I,J$, one can compute in polynomial time an  optimal solution of \eqref{eq: fixed_supp_prob} using variants of \Cref{algorithm0}. The definition of these variants will involve a partition of $\llbracket r \rrbracket$ in terms of equivalence classes of rank-one supports:}


\begin{figure}[h]
	\centering
	\includegraphics[scale = 0.3]{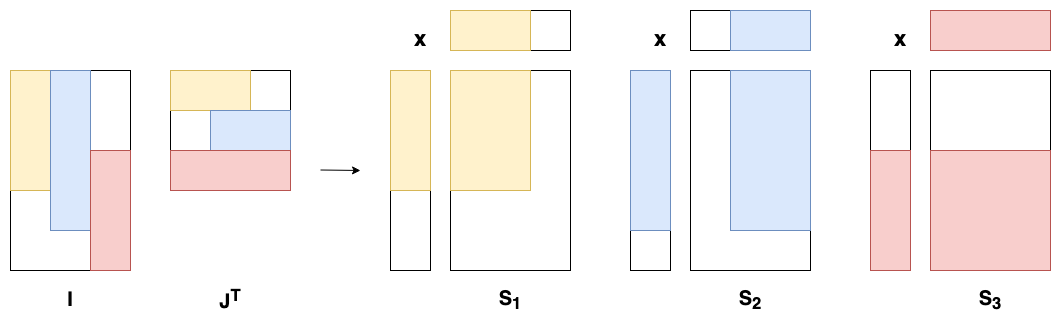}
	\caption{Illustration the idea of support of rank-one contribution. Colored rectangles indicate the support constraints $(I,J)$ and the support constraints $\mS_k$ on each component matrix $X_\col{k}Y_\col{k}^\top$.}
	\label{fig:rank1supp}
\end{figure}
\begin{figure}
	\centering
	\includegraphics[scale=0.4]{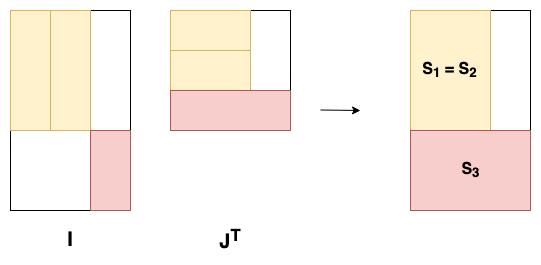}
	\caption{An instance of support constraints $(I,J)$ satisfying \cref{theorem:disjoint_totally_overlapping}. We use colored rectangles to indicate the support constraints $(I,J)$. The indices belonging to the same equivalence class share the same color.}
	\label{fig:tractable1}
\end{figure}
\begin{definition}[Equivalence classes of rank-one supports, representative rank-one supports] 
	\label{def:completeeqclass}
	Given $I \in \{0,1\}^{m \times r}$, $J \in \{0,1\}^{n \times r}$, define an equivalence relation on $\llbracket r \rrbracket$ as: $i\sim j$ if and only if $\mS_i =  \mS_j$ (or equivalently $(I_\col{i},J_\col{i})=(I_\col{j},J_\col{j})$). This yields a partition of $\llbracket r \rrbracket$ into equivalence classes.

	Denote $\mathcal{P}$ the collection of equivalence classes.
	For each class $P \in \mathcal{P}$ denote $\mS_{P}$ a representative rank-one support, $R_{P} \subseteq \intset{m}$ and $C_{P} \subseteq \intset{n}$ the supports of rows and columns in $\mS_{P}$, respectively. For every $k \in P$ we have $ \mS_k = \mS_{P}$ and $\supp(I_\col{k}) = R_{P}$, $\supp(J_\col{k})=C_{P}$. 
	
	For every $\mathcal{P}' \subseteq \mathcal{P}$ denote $\mS_{\mathcal{P}'} = \cup_{P \in \mathcal{P}'} \mS_{P} \subseteq \intset{m} \times \intset{n}$ and $\bar{\mS}_{\mathcal{P}'} = (\intset{m} \times \intset{n}) \backslash \mS_{\mathcal{P}'}$.
\end{definition}
For instance, in the example in \cref{fig:rank1supp} we have three distinct {equivalence} classes. {With the introduction of equivalence classes, one can modify \Cref{algorithm0} to make it more efficient, as in \Cref{algorithm0.5}: Instead of computing the SVD $r$ times, one can simply compute it only $|\mc{P}|$ times. For the full support case, we have $\mc{P} = \{\intset{r}\}$, thus \Cref{algorithm0.5} is identical to the classical SVD.
\begin{algorithm}[H]
	\centering
	\caption{Alternative Generic Greedy Algorithm} 
	\label{algorithm0.5}
	\begin{algorithmic}[1]
		\Require $A \in \mathbb{R}^{m \times n}$ or $\mathbb{C}^{m \times n}$; $\{\mS_{P}\}_{P \in \mc{P}}$ representative rank-one supports
		\For {$P \in \mc{P}$}
		\State $(X_\col{P}, Y_\col{P}) = (U,V)$  where $UV^{\top}$ is any best rank-$|P|$ approximation to 
		$A \odot \mS_{P}$
		\State $A = A - X_\col{P}, Y_\col{P}^\top$	
		\EndFor
		\State \Return $(X,Y)$
	\end{algorithmic}
\end{algorithm}
}
A first simple sufficient condition ensuring the tractability of an instance of \eqref{eq: fixed_supp_prob} is {stated in the following theorem. }

\begin{theorem}
	\label{theorem:disjoint_totally_overlapping}
	Consider $I \in \{0,1\}^{m \times r}$, $J \in \{0,1\}^{n \times r}$, and $\mc{P}$ the collection of equivalence classes of \cref{def:completeeqclass}. If the representative rank-one supports are pairwise disjoint, i.e., $\mS_P \cap \mS_{P'} = \emptyset$ for each distinct $P, P' \in \mc{P}$, then matrix factorization with fixed support is tractable for any $A\in \mathbb{R}^{m \times n}$.
\end{theorem}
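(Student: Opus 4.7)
The plan is to show that disjointness of the representative rank-one supports forces the objective to decompose into independent low-rank approximation subproblems on disjoint blocks, each of which is solved exactly by an SVD on a submatrix. This is precisely the structure that \Cref{algorithm0.5} exploits.

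First, I would rewrite $XY^\top$ as $\sum_{P \in \mathcal{P}} X_{\col{P}} Y_{\col{P}}^\top$, grouping the rank-one contributions by equivalence class. For any feasible $(X,Y)$, the support of $X_{\col{P}} Y_{\col{P}}^\top$ is contained in $\mS_P$ by \Cref{def: support_rank_one_synthesis} and \Cref{def:completeeqclass}. Since the $\mS_P$ are pairwise disjoint, the sets $\{\mS_P\}_{P \in \mathcal{P}}$ together with $\bar{\mS}_{\mathcal{P}}$ form a partition of $\intset{m} \times \intset{n}$, so the squared Frobenius norm splits additively as
\begin{equation*}
\|A - XY^\top\|^2 = \|A \odot \bar{\mS}_{\mathcal{P}}\|^2 + \sum_{P \in \mathcal{P}} \|(A \odot \mS_P) - X_{\col{P}} Y_{\col{P}}^\top\|^2.
\end{equation*}
The first term is a constant (independent of $(X,Y)$), and the remaining terms depend on disjoint blocks of variables: the pair $(X_{\col{P}}, Y_{\col{P}})$ appears in exactly one summand.

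Next, I would reduce each subproblem to a standard low-rank approximation on a submatrix. Fix $P \in \mathcal{P}$: the matrix $A \odot \mS_P$ vanishes outside $R_P \times C_P$, while every feasible $X_{\col{P}} Y_{\col{P}}^\top$ also has support in $R_P \times C_P$. Thus minimizing $\|(A \odot \mS_P) - X_{\col{P}} Y_{\col{P}}^\top\|^2$ is equivalent to minimizing $\|A_{R_P, C_P} - U V^\top\|^2$ over $U \in \mathbb{R}^{|R_P| \times |P|}$ and $V \in \mathbb{R}^{|C_P| \times |P|}$ (no further support constraint inside the block, because every index of $I_{\col{k}}$ supported in $R_P$ and every index of $J_{\col{k}}$ supported in $C_P$ is allowed for $k \in P$). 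By the Eckart–Young theorem, this is solved by any rank-$|P|$ truncated SVD of $A_{R_P, C_P}$, which can be computed in polynomial time.

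Finally, I would verify that \Cref{algorithm0.5} realizes this optimum. Processing the classes in any order, after each update the residual $A$ restricted to any unprocessed $\mS_{P'}$ is unchanged, because the rank-$|P|$ approximation subtracted at step $P$ is supported in $\mS_P$ and $\mS_P \cap \mS_{P'} = \emptyset$. Hence every iteration independently solves its own block, and the returned $(X,Y)$ attains the infimum. The only delicate point is making the indexing rigorous, in particular checking that the partition of $\intset{m} \times \intset{n}$ into $\{\mS_P\}_{P \in \mathcal{P}} \cup \{\bar{\mS}_{\mathcal{P}}\}$ cleanly decouples the objective, and that within each block there is no residual support constraint beyond rank; both follow directly from \Cref{def:completeeqclass} and the disjointness hypothesis, so I do not expect a genuine obstacle.
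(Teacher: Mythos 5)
Your proposal is correct and follows essentially the same route as the paper's proof: decompose $XY^\top=\sum_{P}X_{\col{P}}Y_{\col{P}}^\top$, use disjointness of the $\mS_P$ together with $\bar{\mS}_{\mathcal{P}}$ to split the squared Frobenius norm additively, observe that within each class the constraints reduce to an unconstrained block $X_{R_P,P}Y_{C_P,P}^\top$, and solve each independent LRMA subproblem by truncated SVD. The only differences are cosmetic (your use of $X_{\col{P}}$ versus the paper's zero-padded $X_P$, and your explicit invocation of Eckart--Young), so there is nothing substantive to add.
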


\begin{proof}
In this proof, for each equivalent class $P \in \mc{P}$ (\cref{def:completeeqclass}) we use the notations $X_P \in \mb{R}^{m \times r}, Y_P \in \mb{R}^{n \times r}$ (introduced in \Cref{sec:notation}). We also use the notations $R_P,C_P$ (\cref{def:completeeqclass}). For each equivalent class $P$, we have:
\begin{equation}
	\label{eq:submatrixfac}
	(X_PY_P^\top)_{R_P,C_P} = X_{R_P,P} Y_{C_P,P}^\top
\end{equation}
and the product $XY^\top$ can be decomposed as: $XY^\top = \sum_{P \in \mc{P}} X_PY_P^\top$. Due to the hypothesis of this theorem, with $P, P' \in \mc{P}, P' \neq P$, we further have: 
\begin{equation}
	\label{eq:Hadamardproduct}
	X_{P'}Y_{P'}^\top \odot \mS_{P} = \mathbf{0}
\end{equation}

\begin{algorithm}[H]
	\centering
	\caption{Fixed support matrix factorization (under \cref{theorem:disjoint_totally_overlapping} assumptions)} \label{algorithm1}
	\begin{algorithmic}[1]
		\Procedure{SVD\_FSMF}{$A \in \mb{R}^{m \times n}, I \in \{0,1\}^{m \times r}, J \in \{0,1\}^{n \times r}$}
		\State Partition $\intset{r}$ into $\mc{P}$ (\cref{def:completeeqclass}) {to get $\{\mS_{P}\}_{P \in \mc{P}}$}
		\State \Return $(X,Y)$ {using \Cref{algorithm0.5} with input $A$, $\{\mS_{P}\}_{P \in \mc{P}}$}
		\EndProcedure
	\end{algorithmic}
\end{algorithm}

The objective function $L(X,Y)$ is:
\begin{equation}
	\label{eq:decomposedisjointoverlapping}
	\begin{split}
		\|A - XY^\top\|^2 & = \left(\sum_{P \in \mc{P}}\|(A -XY^\top) \odot \mS_{P}\|^2\right) + \|(A -XY^\top) \odot \bar{\mS}_\mc{P}\|^2\\
		&= \left(\sum_{P \in \mc{P}}\|(A - \sum_{P' \in \mc{P}} X_{P'}Y_{P'}^\top) \odot \mS_{P}\|^2\right) + \|(A - \sum_{P' \in \mc{P}} X_{P'}Y_{P'}^\top) \odot \bar{\mS}_\mc{P}\|^2\\
		&\overset{\eqref{eq:Hadamardproduct}}{=} \left(\sum_{P \in \mc{P}}\|(A - X_PY_P^\top) \odot \mS_{P}\|^2\right) + \|A \odot \bar{\mS}_\mc{P}\|^2\\
		&= \left(\sum_{P \in \mc{P}}\|A_{R_P,C_P} - (X_PY_P^\top)_{R_P,C_P}\|^2\right) + \|A \odot \bar{\mS}_\mc{P}\|^2\\
		&\overset{\eqref{eq:submatrixfac}}{=} \left(\sum_{P \in \mc{P}}\|A_{R_P,C_P} - X_{R_P,P}Y_{C_P,P}^\top\|^2\right) + \|A \odot \bar{\mS}_\mc{P}\|^2\\
	\end{split}
\end{equation}
Therefore, if we ignore the constant term $\|A \odot \bar{\mS}_\mc{P}\|^2$,  the function $L(X,Y)$ is decomposed into a sum of functions $\|A_{R_P,C_P} - X_{R_P,P}Y_{C_P,P}^\top\|^2$, which are LRMA instances. Since all the optimized parameters are $\{(X_{R_P,P}, Y_{C_P,P})\}_{P \in \mc{P}}$, an optimal solution of $L$ is $\{(X^\star_{R_P,P}, Y^\star_{C_P,P})\}_{P \in \mc{P}}$, where $(X^\star_{R_P,P}, Y^\star_{C_P,P})$ is a  minimizer of $\|A_{R_P,C_P} - X_{R_P,P}Y_{C_P,P}^\top\|^2$ {which is computed efficiently using a truncated SVD. Since the blocks associated to distinct $P$ are disjoint, these 
SVDs can be performed blockwise, in any order, and even in parallel.}
\end{proof}

For these easy instances, we can therefore recover the factors in polynomial time with the procedure described in \cref{algorithm1}. Given a target matrix $A \in \mb{R}^{m \times n}$ and support constraints $I \in \{0,1\}^{m \times r}$, $J \in \{0,1\}^{n \times r}$ satisfying the condition in \cref{theorem:disjoint_totally_overlapping}, \cref{algorithm1} returns two factors $(X,Y)$ solution of \eqref{eq: fixed_supp_prob}.  

{As simple as this condition is, it is satisfied in some important cases, for instance for a class of Hierarchical matrices (HODLR, cf.  \Cref{app:hierarchical}), or for the so-called \emph{butterfly supports}: in the latter case, the condition is used in \cite{papierICASSP,papierLeon} to design an efficient hierarchical factorization method, which is shown to outperform first-order optimization approaches commonly used in this context, in terms both of computational time and accuracy.
}


In the next result, we explore the tractability of \eqref{eq: fixed_supp_prob} while allowing partial intersection between two representative rank-one contribution supports.
\begin{definition}[Complete equivalence classes of rank-one supports - CEC] 
	\label{def:completeeqclass2}	
	$P \in \mc{P}$ is a \emph{complete equivalence class} (or \emph{CEC}) if $|P| \geq \min\{|C_{P}|,|R_{P}|\}$ with $C_{P},R_{P}$ as  in \cref{def:completeeqclass}. Denote $\mathcal{P}^{\star} \subseteq \mathcal{P}$ the family of all complete equivalence classes, $T = \cup_{P \in \mathcal{P}^{\star}} P \subseteq \llbracket r\rrbracket$, $\bar{T} = \llbracket r \rrbracket \backslash T$, and the shorthand $\mS_{T} = \mS_{\mathcal{P}^{\star}}$.
\end{definition}

The interest of complete equivalence classes is that their expressivity is powerful enough to represent any matrix whose support is included in $\mS_T$, as illustrated by the following lemma. 

\begin{lemma}
	\label{lem: expressibility_two}
	Given $I \in \{0,1\}^{m \times r}$, $J \in \{0,1\}^{n \times r}$, consider $T$, $\mS_T$ as in \cref{def:completeeqclass2}.
	For any matrix $A \in \mb{R}^{m \times n}$ such that $\supp(A) \subseteq \mS_{T}$, there exist $X \in \mb{R}^{m \times r}, Y \in \mb{R}^{n \times r}$ such that $A = XY^{\top}$ and
	$\supp(X) \subseteq I_T$, $\supp(Y) \subseteq J_T$. Such a pair can be computed using 
	{\Cref{algorithm1}} 
	{$(X,Y) = \text{SVD\_FSMF}(A,I_{T},J_{T})$.}	
\end{lemma}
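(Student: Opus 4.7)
The plan is to construct $(X,Y)$ by running $\text{SVD\_FSMF}(A,I_{T},J_{T})$ from \Cref{algorithm1} and verify that the greedy SVD step inside \Cref{algorithm0.5} performs an \emph{exact} factorization at every iteration, so that the final residual is zero. The key fact is the defining inequality $|P|\geq \min\{|R_P|,|C_P|\}$ of a complete equivalence class, which provides precisely the rank budget required to represent exactly the portion of $A$ landing inside $\mS_P$.

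Order the complete equivalence classes as $P_1,\dots,P_N$ (the remaining equivalence class of $(I_T,J_T)$, corresponding to columns of $\bar{T}$, is handled trivially by outputting zero columns). Define the residuals $A_0 := A$ and $A_i := A_{i-1} - U_i V_i^\top$, where $U_i V_i^\top$ is the best rank-$|P_i|$ approximation of $A_{i-1}\odot \mS_{P_i}$ computed at iteration $i$, with $U_i \in \mb{R}^{m\times |P_i|}$ having columns supported in $R_{P_i}$ and $V_i \in \mb{R}^{n\times |P_i|}$ having columns supported in $C_{P_i}$. The central observation is that $A_{i-1}\odot \mS_{P_i}$ is supported in $R_{P_i}\times C_{P_i}$, hence identifies with a matrix of size $|R_{P_i}| \times |C_{P_i}|$ whose rank is at most $\min\{|R_{P_i}|,|C_{P_i}|\} \leq |P_i|$; thus its truncated rank-$|P_i|$ SVD is exact, $U_i V_i^\top = A_{i-1}\odot \mS_{P_i}$. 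Consequently $A_i = 0$ on $\mS_{P_i}$ and $A_i = A_{i-1}$ elsewhere. By induction $A_N$ vanishes on $\mS_T = \bigcup_i \mS_{P_i}$ and coincides with $A$ outside; together with the hypothesis $\supp(A)\subseteq \mS_T$ this forces $A_N = 0$, so $A = \sum_i U_i V_i^\top = XY^\top$ with $X_\col{P_i} = U_i$, $Y_\col{P_i} = V_i$, and zero columns outside $T$. The inclusions $\supp(X)\subseteq I_T$ and $\supp(Y)\subseteq J_T$ then follow column by column from the construction, since the columns of $U_i$ (resp.\ $V_i$) lie in $R_{P_i}=\supp(I_\col{k})$ (resp.\ $C_{P_i}=\supp(J_\col{k})$) for every $k\in P_i$.

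The one subtle aspect, and the main place where the argument differs from the proof of \Cref{theorem:disjoint_totally_overlapping}, is that the representative rank-one supports $\mS_{P_i}$ are not assumed pairwise disjoint here: previous iterations may have already modified $A_{i-1}$ on part of $\mS_{P_i}$. This is harmless, because only the rank of the restriction $A_{i-1}\odot \mS_{P_i}$ matters, and that rank is controlled unconditionally by $\min\{|R_{P_i}|,|C_{P_i}|\}$. In short, disjointness was needed in \Cref{theorem:disjoint_totally_overlapping} to guarantee blockwise \emph{optimality}, whereas here we only require \emph{exactness} of each truncated SVD, which is preserved under arbitrary overlap.
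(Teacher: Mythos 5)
Your proof is correct and follows essentially the same route as the paper's: both iterate over the complete equivalence classes, observe that at each step the residual restricted to $\mS_{P_i}$ has rank at most $\min(|R_{P_i}|,|C_{P_i}|)\leq|P_i|$ (the defining inequality of a CEC) so the truncated rank-$|P_i|$ SVD is exact, and conclude that the residual vanishes after processing all CECs because $\supp(A)\subseteq\mS_T$. Your explicit remark distinguishing exactness from blockwise optimality, and explaining why overlap of the $\mS_{P_i}$'s is harmless, is a useful clarification of a point the paper leaves implicit.
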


The proof of \cref{lem: expressibility_two} is deferred to the supplementary material (\cref{subapp:expressibilitytwo}).
%
%
The next definition introduces the key properties that the indices $k \in \llbracket r \rrbracket$ which are not in any CEC need to satisfy in order to make \eqref{eq: fixed_supp_prob} overall tractable. 

\begin{definition}[Rectangular support outside CECs of rank-one supports] 
	\label{def: supp_outsideCEC}
	Given $I \in \{0,1\}^{m \times r}$, $J \in \{0,1\}^{n \times r}$, consider $T$ and $\mS_T$ as in \cref{def:completeeqclass2} and $\bar{T} = \llbracket r \rrbracket \setminus T$. For $k \in\bar T$ define the support outside CECs of the $k^{th}$ rank-one support.
	as: $\mS'_k = \mS_k \setminus \mS_T$. If $\mc{S}'_k = R_k \times C_k$ for some $R_k \subseteq \llbracket m\rrbracket, C_k \subseteq \llbracket n\rrbracket$, (or equivalently $\mS_k'$ is of rank at most one), we say the support outside CECs of the $k^{th}$ rank-one support $\mc{S}'_k$ is \emph{rectangular}. 
\end{definition}
To state our tractability result, we further categorize the indices in $I$ and $J$ as follows:
\begin{definition}[Taxonomy of indices of $I$ and $J$]
	\label{def:taxonomy}
	With the notations of \cref{def: supp_outsideCEC}, assume that $\mS'_k$ is rectangular for all $k\in \bar{T}$. We decompose the indices of $I$ (resp $J$) into three sets as follows:
	\begin{table}[H]
		\centering
		\begin{tabular}{ccc}
			\toprule
			& Classification for $I$ & Classification for $J$\\
			\midrule
			$1$ & $I_T = \{(i, k) \mid k \in T, i \in \llbracket m\rrbracket \} \cap I$ & $J_T = \{(j, k) \mid k \in T, j \in \llbracket n\rrbracket \} \cap J$\\
			\midrule
			$2$ & $I_{\bar{T}}^1 = \{(i, k) \mid k \notin T, i \in R_k\} \cap I$ & $J_{\bar{T}}^1 = \{(j, k) \mid k \notin T, j \in C_k\} \cap J$\\
			\midrule
			$3$ & $I_{\bar{T}}^2 = \{(i, k) \mid k \notin T, i \notin R_k\} \cap I$  & $J_{\bar{T}}^2 = \{(j, k) \mid k \notin T, j\notin C_k\} \cap J$\\
			\bottomrule
		\end{tabular}
	\end{table}
\end{definition}
The following theorem generalizes \cref{theorem:disjoint_totally_overlapping}.

\begin{theorem}
	\label{theorem: reduction_disjoint_overlapping}
	Consider $I \in \{0,1\}^{m \times r}$, $J \in \{0,1\}^{n \times r}$. Assume that for all $k \in \bar{T}$, $\mS'_k$ is rectangular and that for all $k, l \in \bar{T}$ 
	we have $\mS'_k = \mS'_l$ or $\mS'_k \cap \mS'_l = \emptyset$. Then, $(I_{\bar{T}}^1, J_{\bar{T}}^1)$ satisfy the assumptions of \cref{theorem:disjoint_totally_overlapping}. Moreover, for any matrix $A\in \mathbb{R}^{m \times n}$, two instances of \eqref{eq: fixed_supp_prob} with data $(A, I, J)$ and $(A \odot {\bar{\mS}_T}, I_{\bar{T}}^1, J_{\bar{T}}^1)$ respectively, share the same infimum. Given an optimal solution of one instance, we can construct the optimal solution of the other in polynomial time. {In other word, \eqref{eq: fixed_supp_prob} with $(A,I,J)$ is polynomially tractable.}
\end{theorem}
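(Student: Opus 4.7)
The plan is to reduce the given problem $(A, I, J)$ to the restricted instance $(A \odot \bar{\mS}_T, I_{\bar{T}}^1, J_{\bar{T}}^1)$, which will satisfy the hypothesis of \cref{theorem:disjoint_totally_overlapping}, and then to move optimal solutions back and forth using \cref{lem: expressibility_two}. To check the hypothesis, I would observe that for the reduced support pair $(I_{\bar{T}}^1, J_{\bar{T}}^1)$, the $k$-th rank-one contribution support is exactly $R_k \times C_k = \mS'_k$ for every $k \in \bar{T}$ (and empty when $k \in T$, since those columns vanish in $I_{\bar{T}}^1, J_{\bar{T}}^1$). By the pairwise equality-or-disjointness assumption on $\{\mS'_k\}_{k \in \bar{T}}$, the representatives of the induced equivalence classes are pairwise disjoint, so \cref{theorem:disjoint_totally_overlapping} applies and the reduced problem is polynomially tractable via \cref{algorithm1}.

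Next, I would partition $\intset{m} \times \intset{n}$ into $\mS_T$ and $\bar{\mS}_T$ and split the loss as $L(X, Y) = \|(A - XY^\top) \odot \mS_T\|^2 + \|(A - XY^\top) \odot \bar{\mS}_T\|^2$. Writing $XY^\top = X_T Y_T^\top + X_{\bar{T}} Y_{\bar{T}}^\top$ (block of columns indexed by $T$ versus $\bar{T}$), two observations do most of the work: first, $\supp(X_T Y_T^\top) \subseteq \mS_T$ by the very definition of $T$, so the $T$-columns do not influence the second term; second, for $k \in \bar{T}$ the restriction $X_{\col{k}} Y_{\col{k}}^\top \odot \bar{\mS}_T$ is supported in $\mS_k \cap \bar{\mS}_T = \mS'_k = R_k \times C_k$ and therefore depends only on the entries of $X_{\col{k}}, Y_{\col{k}}$ indexed by $R_k, C_k$, i.e., exactly the free variables of the reduced problem. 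Combining both, the second term of $L(X,Y)$ coincides with the reduced objective $\|A \odot \bar{\mS}_T - X' (Y')^\top\|^2$ evaluated at the restricted factors $(X', Y')$ with supports in $(I_{\bar{T}}^1, J_{\bar{T}}^1)$.

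To conclude, I would argue equality of infima in both directions. Given an optimal $(X^\star, Y^\star)$ of the reduced problem, I extend it by keeping its $\bar{T}$-columns, computing the residual $M := (A - X^\star (Y^\star)^\top) \odot \mS_T$ (whose support lies in $\mS_T$), and invoking \cref{lem: expressibility_two} on $(I_T, J_T)$ to produce $T$-columns whose product equals $M$. This makes the first term of $L$ vanish and yields objective value equal to that of the reduced problem, giving $\inf L \leq \inf L_{\text{red}}$. Conversely, from any feasible $(X, Y)$ for the full problem, I restrict each $\bar{T}$-column of $X$ (resp. $Y$) to the rows in $R_k$ (resp. $C_k$) and drop the $T$-columns; the restriction yields a feasible $(X', Y')$ for the reduced problem, and by the same two observations the reduced objective at $(X', Y')$ is at most $L(X, Y)$. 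Both transformations are polynomial (each being a call to \cref{algorithm1} or a simple projection), which establishes the polynomial tractability of $(A, I, J)$.

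The main obstacle I expect is the bookkeeping for the restriction step in the converse direction: I must verify that any entry $X_{i,k}$ with $k \in \bar{T}$ and $i \notin R_k$ contributes only to $\mS_T$, not $\bar{\mS}_T$. This is precisely where the rectangularity hypothesis $\mS'_k = R_k \times C_k$ combined with the Cartesian-product structure $\mS_k = \supp(I_{\col{k}}) \times \supp(J_{\col{k}})$ is essential: it forces $(i,j) \in \mS_k \cap \bar{\mS}_T$ to imply $i \in R_k$ and $j \in C_k$, so dropping the entries outside $R_k \times C_k$ does not alter $X_{\col{k}} Y_{\col{k}}^\top \odot \bar{\mS}_T$. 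Once this is settled, the rest is linear-algebraic and the final algorithm is a clean two-stage pipeline: solve the reduced problem by blockwise SVDs via \cref{algorithm1}, then extend to the full instance by a second invocation of \cref{algorithm1} on $\mS_T$ as prescribed by \cref{lem: expressibility_two}.
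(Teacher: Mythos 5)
Your proposal is correct and takes essentially the same approach as the paper: split the loss across $\mS_T$ and $\bar{\mS}_T$, use \cref{lem: expressibility_two} to zero out the $\mS_T$-residual in one direction, and drop the $T$-columns and $I_{\bar{T}}^2, J_{\bar{T}}^2$-entries in the other. Your "two observations" (that $X_T Y_T^\top$ lives in $\mS_T$ and that $X_{\col{k}} Y_{\col{k}}^\top \odot \bar{\mS}_T$ for $k\in\bar{T}$ depends only on the $R_k,C_k$-entries) are precisely the content of the paper's auxiliary \cref{lem:taxonomy}, just organized column-by-column rather than via the four cross-products $X_{\bar{T}}^i(Y_{\bar{T}}^j)^\top$.
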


\cref{theorem: reduction_disjoint_overlapping} is proved in the supplementary material (\cref{subapp: reduction_disjoint_overlapping}). It implies that solving the problem with support constraints $(I,J)$ can be achieved by reducing to another problem, with  support constraints satisfying the assumptions of \cref{theorem:disjoint_totally_overlapping}. The latter problem can thus be efficiently solved by \cref{algorithm1}. In particular, \cref{theorem:disjoint_totally_overlapping} is a special case of \cref{theorem: reduction_disjoint_overlapping} when all the equivalent classes (including CECs) have disjoint representative rank-one supports. 

\cref{fig:tractable2} shows an instance of $(I,J)$ satisfying the assumptions of \cref{theorem: reduction_disjoint_overlapping}. {The extension in   \cref{theorem: reduction_disjoint_overlapping} is not directly motivated by concrete examples, but it is rather introduced as a first step to show that the family of polynomially tractable supports $(I,J)$ can be enlarged, as it is not restricted to just the family introduced in \cref{theorem:disjoint_totally_overlapping}.}
\begin{figure}
	\centering
	\includegraphics[scale = 0.4]{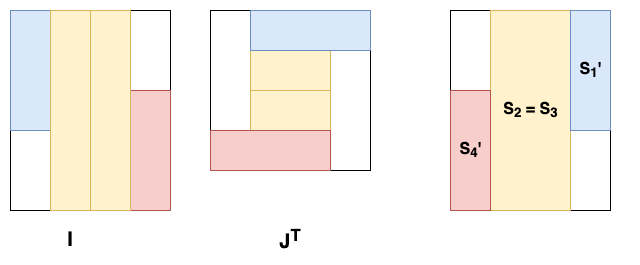}
	\caption{An instance of support constraints $(I,J)$ satisfying the assumptions of \cref{theorem: reduction_disjoint_overlapping}. We have $T = \{2,3\}$. The supports outside CEC $\mS_1'$ and $\mS_4'$ are disjoint.}
	\label{fig:tractable2}
\end{figure}
An algorithm for instances satisfying the assumptions of \cref{theorem: reduction_disjoint_overlapping} is given in \cref{algorithm3} (more details can be found in \cref{cor:optimalsol} and \cref{rem:optimalsol} in \cref{app:resultsec3} in the supplementary material). {In \cref{algorithm3}, two calls to 
 \cref{algorithm1} are made, they can be done in any order (Line~\ref{line:algo1} and Line~\ref{line:algo2} can be switched without changing the result).}
\begin{algorithm}[H]
	\centering
	\caption{Fixed support matrix factorization (under \cref{theorem: reduction_disjoint_overlapping}'s assumptions)}\label{algorithm3}
	\begin{algorithmic}[1]
		\Procedure{SVD\_FSMF2}{$A \in \mb{R}^{m \times n}, I \in \{0,1\}^{m \times r}, J \in \{0,1\}^{n \times r}$}
		\State Partition the indices of $I, J$ into $I_{T}, I_{\bar{T}}^1, I_{\bar{T}}^2$ (and $J_{T}, J_{\bar{T}}^1, J_{\bar{T}}^2$) (\cref{def: supp_outsideCEC}).
		\State {$(X_T, Y_T) = \text{SVD\_FSMF}(A \odot {\mS_T},I_{T},J_{T})$ ($T, \mS_T$ as in \cref{def:completeeqclass2}).}{\label{line:algo1}}
		\State{\label{line:algo2}}{$(X_{\bar{T}}^1, Y_{\bar{T}}^1) = \text{SVD\_FSMF}(A \odot {\bar{\mS_T}}, I_{\bar{T}}^1, J_{\bar{T}}^1)$} 
		\State \Return $(X_T + X_{\bar{T}}^1, Y_T + Y_{\bar{T}}^1)$
		\EndProcedure
	\end{algorithmic}
\end{algorithm}
 
\section{Landscape of matrix factorization with fixed support}
\label{sec: spuriouslocal}

In this section, we first recall the definition of \emph{spurious local valleys} and \emph{spurious local minima}, which are undesirable objects in the landscape of a function, as they may prevent local optimization methods to converge to globally optimal solutions. Previous works \cite{venturi2020spurious,DBLP:journals/corr/abs-1805-04938,NIPS2016_6112} showed that the landscape of the optimization problem associated to low rank approximation is free of such \emph{spurious objects}, which potentially gives the intuition for its tractability. 

We prove that similar results hold for the much richer family of tractable support constraints for \eqref{eq: fixed_supp_prob} that we introduced in \cref{theorem:disjoint_totally_overlapping}. The landscape with the assumptions of \cref{theorem: reduction_disjoint_overlapping} is also analyzed. These results might  suggest a natural conjecture: an instance of \eqref{eq: fixed_supp_prob} is tractable if and only if the landscape is benign. However, this is not true. We show an example that contradicts this conjecture: we show an instance of \eqref{eq: fixed_supp_prob} that can be solved efficiently, despite the fact that  its corresponding landscape contains spurious objects.  

\subsection{Spurious local minima and spurious local valleys}
\label{subsec: background}

We start by recalling the classical definitions of global and local minima of a real-valued function.

\begin{definition}[Spurious local minimum \cite{DBLP:journals/corr/abs-1805-04938,NumericalOptimization}]
	Consider $L: \mb{R}^{d} \to \mb{R}$. A vector $x^* \in \mb{R}^{d}$ is a:
	\begin{itemize}[leftmargin=*]
	\item \textbf{global minimum} (of $L$) if $L(x^*) \leq L(x), \forall x$.
	
	\item \textbf{local minimum} if there is a neighborhood $\mc{N}$ of $x^*$ such that $L(x^*) \leq L(x), \forall x \in \mc{N}$.
	
	\item \textbf{strict local minimum} if there is a neighborhood $\mc{N}$ of $x^*$ such that $L(x^*) < L(x), \forall x \in \mc{N}, x \neq x^*$.
	
	\item \textbf{(strict) spurious local minimum} if $x^*$ is a (strict) local minimum but it is not a global minimum. 
	\end{itemize}
\end{definition}

The presence of spurious local minima is undesirable because local optimization methods can get stuck  in one of them and never reach the global optimum. 
\begin{remark}\label{rem:nostrict}
With the loss functions $L(X,Y)$ considered in this paper, strict local minima do not exist since for every invertible diagonal matrix $D$, possibly arbitrarily close to the identity, we have $L(XD,YD^{-1}) = L(X,Y)$.
\end{remark}

However, this is not the only undesirable landscape in an optimization problem: spurious local valleys, as defined next, are also challenging.

\begin{definition}[Sublevel Set \cite{ConvexOptimization}]
	Consider $L: \mathbb{R}^{d} \to \mathbb{R}$. For every $\alpha \in \mathbb{R}$, the $\alpha$-level set of $L$ is the set $E_{\alpha} = \{x \in \mb{R}^{d} \mid L(x) \leq \alpha\}$.
\end{definition}

\begin{definition}[Path-Connected Set and Path-Connected Component]
	A subset $S \subseteq \mb{R}^{d}$ is path-connected if for every $x,y \in S$, there is a continuous function $r: [0,1] \to S$ such that $r(0) = x, r(1) = y$. A path-connected component of $E \subseteq \mb{R}^{d}$ is a maximal path-connected subset: $S \subseteq E $ is  path-connected, and if $S' \subseteq E$ is path-connected with $S \subseteq S'$ then $S=S'$.
\end{definition}
\begin{definition}[Spurious Local Valley \cite{venturi2020spurious,nguyen2019connected}]
	Consider $L: \mathbb{R}^{d} \to \mathbb{R}$ and a set $S \subset \mb{R}^{d}$.
	\begin{itemize}
	\item $S$ is a \textbf{local valley} of $L$ if it is a non-empty path-connected component of some sublevel set. 
	
	\item $S$ is a \textbf{spurious  local valley} of $L$ if it is a local valley of $L$ and does not contain a global minimum.
	\end{itemize}
\end{definition}

The notion of spurious local valley is inspired by the definition of a \emph{strict} spurious local minimum.
If $x^*$ is a strict spurious local minimum, then $\{x^*\}$ is a spurious local valley. However, the notion of spurious local valley has a wider meaning than just a neighborhood of a strict spurious local minimum. \cref{fig: spuriousfig} illustrates some other scenarios:
\begin{figure}[tbhp]
	\centering
	\subfloat{\label{fig:spuriouslocalvalley}\includegraphics[width=0.33\textwidth]{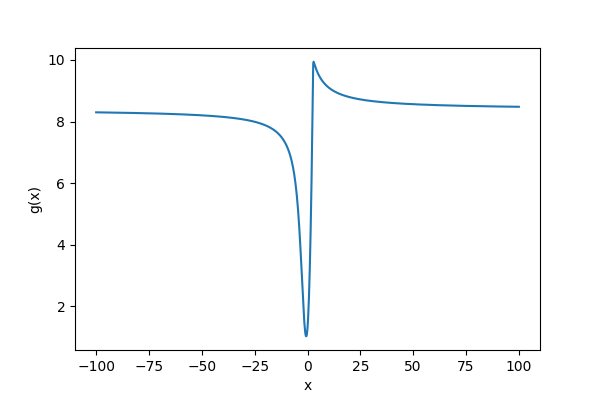}}
	\subfloat{\label{fig:spuriouslocalvalley2}\includegraphics[width=0.33\textwidth]{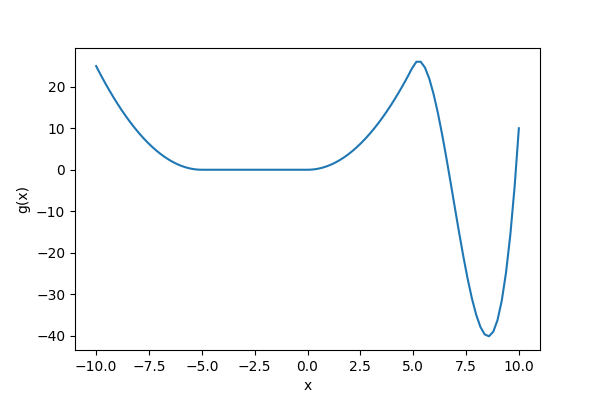}}
	\subfloat{\label{fig:spuriousfig2}\includegraphics[width=0.33\textwidth]{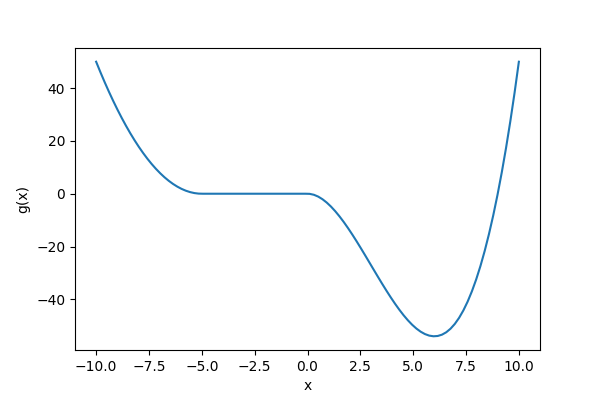}}
	\caption{Examples of functions with spurious objects.}
	\label{fig: spuriousfig}
\end{figure}
as shown on \cref{fig:spuriouslocalvalley}, the segment (approximately) $[10,+\infty)$ creates a spurious local valley, and this function has only one local (and global) minimizer, at zero; in \cref{fig:spuriouslocalvalley2}, there are spurious local minima that are not strict, but form a spurious local valley anyway. It is worth noticing that the concept of a spurious local valley does \emph{not} cover that of a spurious local minimum. Functions can have spurious (non-strict) local minima even if they do not possess any spurious local valley (\cref{fig:spuriousfig2}). Therefore, in this paper, we treat the existence of spurious local valleys and spurious local minima independently. The common point is that if the landscape possesses either of them, local optimization methods need to have proper initialization to have guarantees of convergence to a global minimum.

\subsection{Previous results on the landscape}
Previous works \cite{NIPS2016_6112, DBLP:journals/corr/abs-1805-04938} studied the non-existence of spurious local minima of \eqref{eq: fixed_supp_prob} in the classical case of ``low rank matrix approximation'' (or \emph{full support matrix factorization})\footnote{Since previous works also considered the case $r \geq m,n$, low rank approximation might be misleading sometimes. That is why we occasionally use the name full support matrix factorization to emphasize this fact., where no support constraints are imposed ($I = \llbracket m\rrbracket \times \llbracket r\rrbracket, J = \llbracket n\rrbracket \times \llbracket r\rrbracket$)}. 
To prove that a critical point 
is never a spurious local minimum, previous work used the notion of {\em strict saddle point} (i.e a point where the Hessian is not positive semi-definite, or equivalently has at least one --{\em strictly} -- negative eigenvalue), see \Cref{def:strictsaddle} below.
To prove the non-existence of spurious local valleys, the following lemma was employed in previous works \cite{venturi2020spurious,nguyen2019connected}:

\begin{lemma}[Sufficient condition for the non-existence of any spurious local valley {\cite[Lemma 2]{venturi2020spurious}}]
	\label{lemma: non_spurious_local_valley}
	Consider a continuous function $L:\mb{R}^d \to \mb{R}$. Assume that, for any initial parameter $\tilde{x} \in \mb{R}^d$, there exists a continuous path $f: t \in [0,1] \to \mb{R}^d$ such that:
	\begin{enumerate}
		\item[a)] $f(0) = \tilde{x}$.
		
		\item[b)] $f(1) \in \argmin_{x \in \mb{R}^d} L(x)$. 
		
		\item[c)] The function $L \circ f: t \in [0,1] \to \mb{R}$ is non-increasing.
	\end{enumerate}
	Then there is no spurious local valley in the landscape of function $L$.
\end{lemma}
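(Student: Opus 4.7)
The plan is to argue by contradiction. Suppose a spurious local valley $S$ exists. By definition, $S$ is a path-connected component of some sublevel set $E_{\alpha} = \{x \mid L(x) \leq \alpha\}$, and $S$ contains no global minimizer of $L$. The goal is to construct, from the hypothesis, a continuous curve that stays entirely inside $E_{\alpha}$ and connects a point of $S$ to a global minimum, which will force $S$ itself to contain that global minimum.

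Concretely, pick any $\tilde{x} \in S$ (which is non-empty by definition of a local valley). Since $\tilde{x} \in E_{\alpha}$, we have $L(\tilde{x}) \leq \alpha$. Apply the hypothesis with this starting point to obtain a continuous path $f:[0,1]\to\mb{R}^d$ with $f(0)=\tilde{x}$, $f(1)\in \argmin_{x} L(x)$, and $L\circ f$ non-increasing. From the monotonicity,
\begin{equation*}
L(f(t)) \leq L(f(0)) = L(\tilde{x}) \leq \alpha \quad \text{for all } t \in [0,1],
\end{equation*}
so the image $f([0,1])$ is contained in $E_{\alpha}$. Because $f$ is continuous, $f([0,1])$ is a path-connected subset of $E_{\alpha}$ that contains $\tilde{x}\in S$.

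Now invoke maximality: $S$ is, by definition, a maximal path-connected subset of $E_{\alpha}$ containing $\tilde{x}$. Since $S \cup f([0,1])$ is path-connected (two path-connected sets sharing the point $\tilde{x}$) and contained in $E_{\alpha}$, maximality forces $f([0,1]) \subseteq S$. In particular, $f(1)\in S$, and $f(1)$ is a global minimizer of $L$. This contradicts the assumption that $S$ is spurious, concluding the proof.

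The argument is essentially a topological one and there is no real obstacle: the only subtlety is that the existence of the path in the hypothesis implicitly requires $\argmin L$ to be non-empty, so condition (b) is meaningful; once that is granted, every step is a direct application of continuity, monotonicity along $f$, and the maximality property in the definition of a path-connected component.
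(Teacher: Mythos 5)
Your proof is correct. The paper itself does not prove \cref{lemma: non_spurious_local_valley} and simply defers to \cite{venturi2020spurious}; your argument by contradiction --- showing that the monotone path from $\tilde{x}$ to a global minimizer stays inside the sublevel set $E_\alpha$, so by maximality of the path-connected component $S$ the global minimizer $f(1)$ must lie in $S$ --- is precisely the standard topological argument and supplies the omitted details, including the correct observation that hypothesis (b) presupposes $\argmin L \neq \emptyset$.
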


The result is intuitive and a formal proof can be found in \cite{venturi2020spurious}. The theorem claims that given any initial point, if one can find a continuous path connecting the initial point to a global minimizer and the loss function is non-increasing on the path, then there does not exist any spurious local valley. We remark that although \eqref{eq: fixed_supp_prob} is a constrained optimization problem, \cref{lemma: non_spurious_local_valley} is still applicable because one can think of the objective function as defined on a subspace: $L: \mb{R}^{|I| + |J|} \to \mb{R}$. In this work, to apply \cref{lemma: non_spurious_local_valley}, the constructed function $f$ has to be a \emph{feasible path}, defined as:
\begin{definition}[Feasible path]
	\label{def:feasiblepath}
	A feasible path w.r.t the support constraints $(I,J)$ (or simply a feasible path) is a continuous function $f(t) = (X_f(t), Y_f(t)): [0,1] \to \mb{R}^{m \times r} \times \mb{R}^{n \times r}$ satisfying $\supp(X_f(t)) \subseteq I, \supp(Y_f(t)) \subseteq J, \forall t \in [0,1]$. 
\end{definition}
Conversely, we generalize and formalize an idea from \cite{venturi2020spurious} into the following lemma, which gives a sufficient condition for the existence of a spurious local valley:

\begin{lemma}[Sufficient condition for the existence of a spurious local valley]
	\label{lemma: spurious_local_valley}
	Consider a continuous function $L: \mb{R}^d \to \mb{R}$ whose global minimum is attained. Assume we know three subsets $S_1, S_2, S_3 \subset \mb{R}^d$ such that:
	\begin{enumerate}
		\item[1)] The global minima of $L$ are in $S_1$.
		
		\item[2)] Every continuous path from $S_3$ to $S_1$ passes through $S_2$.
		
		\item[3)] $\displaystyle\inf_{x \in S_2} L(x) > \inf_{x \in S_3} L(x) > \inf_{x \in S_1} L(x)$.
	\end{enumerate}
	Then $L$ has a spurious local valley. {Moreover, any $x \in S_3$ such that 
	$\displaystyle L(x) < \inf_{x \in S_2} L(x)$  is a point inside a spurious local valley.}
\end{lemma}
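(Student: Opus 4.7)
The plan is to construct the spurious local valley explicitly as a path-connected component of a carefully chosen sublevel set $E_\alpha$. First I would use assumption 3 to pick a threshold $\alpha$ with $\inf_{x \in S_3} L(x) < \alpha < \inf_{x \in S_2} L(x)$; this is possible precisely because the first strict inequality in assumption 3 leaves a nonempty gap. By definition of the infimum, I can then select a point $x_0 \in S_3$ with $L(x_0) < \alpha$, so $x_0 \in E_\alpha$. I would take $V$ to be the path-connected component of $E_\alpha$ containing $x_0$: it is nonempty and, by definition, a local valley of $L$. The only remaining task is to show that $V$ contains no global minimizer.

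This exclusion is the main step, and the only delicate one, since it is where the purely topological hypothesis (assumption 2) has to be converted into a sublevel-set contradiction. I would argue by contradiction. If some global minimizer $x^\star$ belonged to $V$, then assumption 1 would give $x^\star \in S_1$, and path-connectedness of $V$ would yield a continuous path $\gamma : [0,1] \to V$ with $\gamma(0) = x_0 \in S_3$ and $\gamma(1) = x^\star \in S_1$. Assumption 2 would then force $\gamma(t^\star) \in S_2$ for some $t^\star \in [0,1]$, and the two constraints on $\gamma(t^\star)$ would collide: on one hand $\gamma(t^\star) \in S_2$ gives $L(\gamma(t^\star)) \geq \inf_{x \in S_2} L(x) > \alpha$, and on the other $\gamma(t^\star) \in V \subseteq E_\alpha$ gives $L(\gamma(t^\star)) \leq \alpha$. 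The contradiction shows that $V$ is indeed a spurious local valley. The existence of a global minimum (assumed in the statement) is what makes ``spurious'' meaningful here, while the second inequality $\inf_{S_3} L > \inf_{S_1} L$ of assumption 3 is implicitly invoked only to know that the global minimum is strictly below $\alpha$ and hence cannot coincide with $x_0$.

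For the moreover statement, I would fix any $x \in S_3$ with $L(x) < \inf_{x \in S_2} L(x)$ and run the same recipe with $x$ in place of $x_0$: pick any $\alpha' \in [L(x), \inf_{x \in S_2} L(x))$, note that $x \in E_{\alpha'}$, and take the path-connected component of $E_{\alpha'}$ containing $x$. The identical contradiction argument, which uses only assumptions 1 and 2 together with the strict bound $\alpha' < \inf_{S_2} L$, shows that this component is a spurious local valley containing $x$. Thus the first inequality of assumption 3 is needed in the first part only to guarantee the existence of at least one valid starting point $x_0$, while the moreover statement localises the same construction around any $x$ satisfying the stronger hypothesis.
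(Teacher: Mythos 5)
Your proof is correct and follows essentially the same route as the paper's: pick a point of $S_3$ below $\inf_{S_2} L$, take the path-connected component of the corresponding sublevel set containing it, and derive a contradiction from assumption 2. The only cosmetic difference is that the paper uses the sublevel threshold $L(\tau)$ directly rather than an intermediate $\alpha$ in the gap; both work identically. Your side observation that the second inequality $\inf_{S_3} L > \inf_{S_1} L$ is never truly needed in the contradiction step (the step already rules out $x_0$ being a global minimizer) is accurate, and is also true of the paper's proof.
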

\begin{proof}
	Denote $\Sigma = \{x \mid L(x) = \inf_{x \in \mb{R}^d} L(\theta)\}$ the set of global minimizers of $L$. $\Sigma$ is not empty due to the assumption that the global minimum is attained, and $\Sigma \subseteq S_{1}$ by the first assumption.
	
	Since $\inf_{x \in S_2} L(x) > \inf_{x \in S_3} L(x) $, there exists $\tau \in S_3, L(\tau) < \inf_{x \in S_2} L(x)$. 
	Consider $\Phi$ the path-connected component of the sublevel set $\{x \mid L(x) \leq L(\tau)\}$ that contains $\tau$. Since $\Phi$ is a non-empty path-connected component of a level set, it is a local valley. It is thus sufficient to prove that $\Phi \cap \Sigma = \emptyset$ to obtain that it matches the very definition of a spurious local valley.
	
	Indeed, by contradiction, let's assume that there exists $\tau' \in \Phi \cap \Sigma$. Since $\tau,\tau' \in \Phi$ and $\Phi$ is path-connected, by definition of path-connectedness there exists a continuous function $f: [0,1] \to \Phi$ such that $f(0) = \tau \in S_3, f(1) = \tau' \in \Sigma \subseteq S_1$. Due to the assumption that every continuous path from $S_3$ to $S_1$ has to pass through a point in $S_2$, there must exist $t \in (0,1)$ such that $f(t) \in S_2 \cap \Phi$. Therefore, $L(f(t)) \leq L(\tau)$ (since $f(t) \in \Phi$) and $L(f(t)) > L(\tau)$ (since $f(t) \in S_2$), which is a contradiction.
\end{proof}

To finish this section, we formally recall previous results which are related to \eqref{eq: fixed_supp_prob} and will be used in our subsequent proofs. The questions of the existence of spurious local valleys and spurious local minima were addressed in previous works for full support matrix factorization and deep linear neural networks \cite{venturi2020spurious,nguyen2019connected,DBLP:journals/corr/abs-1805-04938,NIPS2016_6112}. We present only results related to our problem of interest.

\begin{theorem}[No spurious local valleys in linear networks {\cite[Theorem 11]{venturi2020spurious}}]
	\label{theorem: Nospuriousvalley}
	Consider linear neural networks of any depth $K \geq 1$ and of any layer widths $p_k \geq 1$ and any input - output dimension $n, m \geq 1$ with the following form: $\Phi(b, \theta) = W_K \ldots W_1 b$ where $\theta = (W_i)_{i=1}^K$, and $b \in \mb{R}^n$ is a training input sample. With the squared loss function, there is no spurious local valley. More specifically, the function 
$L(\theta) = \|A-\Phi(B, \theta)\|^2$
	satisfies the condition of \cref{lemma: non_spurious_local_valley} for any matrices $A \in \mb{R}^{m \times N}$ and $B \in \mb{R}^{n \times N}$ ($A$ and $B$ are the whole sets of training output and input respectively). 
\end{theorem}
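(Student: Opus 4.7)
The plan is to verify the hypotheses of \cref{lemma: non_spurious_local_valley}: starting from any $\theta_0 = (W_1^0,\dots,W_K^0)$, I construct a continuous, loss-non-increasing path $\theta(t)$ to a global minimizer, which suffices by that lemma.

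The key reduction is that $L(\theta) = f(M(\theta))$ where $M(\theta) := W_K\cdots W_1$ and $f(M) := \|A - MB\|^2$ is a convex quadratic on $\mathbb{R}^{m\times n}$. The image of the product map is exactly $\mathcal{M}_r := \{M \in \mathbb{R}^{m\times n} : \operatorname{rank}(M) \leq r\}$ with $r := \min_{0\leq k\leq K} p_k$ (setting $p_0=n$, $p_K=m$). The strategy is therefore to (i) build a continuous non-increasing path $M(t) \in \mathcal{M}_r$ from $M_0 := M(\theta_0)$ to some minimizer $M^*$ of $f$ on $\mathcal{M}_r$, and then (ii) lift it to a path $\theta(t)$ matching $\theta(0)=\theta_0$.

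For (i), the existence of $M^*$ is classical reduced-rank regression: $M^*$ is obtained by truncating to rank $r$ the SVD of the unconstrained least-squares solution $A B^\top (B B^\top)^{\dagger}$. Using compatible SVDs $M_0 = U_0 \Sigma_0 V_0^\top$ and $M^* = U^* \Sigma^* (V^*)^\top$, I would construct $M(t)$ in two stages: first continuously deform $(U_0, V_0)$ into $(U^*, V^*)$ inside the Stiefel manifolds (which are path-connected whenever $r \leq \min(m,n)$, which holds here), keeping $\Sigma_0$ fixed so that $\operatorname{rank}(M(t))\leq r$; then linearly interpolate the singular values from $\Sigma_0$ to $\Sigma^*$. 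To guarantee $f(M(t))$ is non-increasing along each stage, I would reparametrize each 1-parameter sub-family by the value minimizing $f$ along it (a convex 1D problem thanks to convexity of $f$), and, if needed, first rescale $M_0 \mapsto (1-s)M_0$ to pass through $M=0$, where the convex quadratic $s \mapsto f((1-s)M_0)$ is either monotone or reaches its unconstrained minimum on $[0,1]$.

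For (ii), given the continuous $M(t) = U(t)\Sigma(t)V(t)^\top$ in $\mathcal{M}_r$, I would define a canonical lift by absorbing $U(t)$ and $V(t)^\top$ into the outermost factors $W_K,W_1$ and realizing the intermediate ones via fixed rank-$r$ embeddings carrying powers $\Sigma(t)^{1/(K-1)}$ between dimensions $p_{k-1}$ and $p_k$. This yields a continuous $\theta(t)$ with $M(\theta(t)) = M(t)$, but at $t=0$ the lift is generally \emph{not} $\theta_0$; I would therefore prepend a \emph{gauge path} inside the fiber $\{\theta : M(\theta) = M_0\}$ joining $\theta_0$ to the canonical factorization of $M_0$. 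Since $L$ is constant on this fiber, non-increase is automatic, and the fiber is path-connected because two factorizations of the same $M_0$ differ by compositions with $\mathrm{GL}$-type transformations that can be continuously deformed to the identity, possibly routing intermediate factors through $0$ (allowed since $\mathcal{M}_r$ is closed under scaling). The main obstacle will be ensuring continuity of the lift when $\operatorname{rank}(M(t))$ drops, or when orientation issues arise from the two components of $O(n)$; both can be handled by explicitly routing the rotation sub-path through a point of rank strictly less than $r$, which provides the extra degree of freedom needed to circumvent these obstructions while preserving non-increase of $f$.
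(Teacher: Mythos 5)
This theorem is not proved in the paper --- it is quoted from Venturi, Bandeira, and Bruna \cite[Theorem 11]{venturi2020spurious} and used as an imported ingredient --- so there is no in-paper proof to measure your reconstruction against. Your high-level blueprint (pass to the end-to-end map $M(\theta)=W_K\cdots W_1$ with image $\mathcal{M}_r=\{M:\operatorname{rank}(M)\leq r\}$, build a loss-nonincreasing path there, then lift to $\theta$-space) is a sensible decomposition of the problem, but both halves contain gaps that, as stated, would make the argument fail.

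For the path in $\mathcal{M}_r$: $f(M)=\|A-MB\|^2$ is convex in $M$, but the curve $t\mapsto U(t)\Sigma_0V(t)^\top$ is not affine, so $f$ restricted to the Stiefel-rotation stage is not a one-dimensional convex function, and ``reparametrizing by the minimizer along each sub-family'' does not produce monotonicity. The fallback of rescaling $M_0\mapsto (1-s)M_0$ to route through $0$ requires $f(0)=\|A\|^2\geq f(M_0)$, which fails whenever $M_0B$ already approximates $A$ reasonably well; in that (generic) situation the segment to $0$ strictly \emph{increases} the loss. This is precisely the non-trivial obstruction you cannot wave away: $\mathcal{M}_r$ is a non-convex variety, and a non-increasing path from an arbitrary $M_0$ to a global minimizer on $\mathcal{M}_r$ has to be engineered, not assumed. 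For the lift: the claim that the fiber $\{\theta:M(\theta)=M_0\}$ is path-connected because ``$\mathrm{GL}$-type transformations can be continuously deformed to the identity'' is false --- $\mathrm{GL}_k(\mathbb{R})$ has two connected components --- and the proposed repair of ``routing an intermediate factor through $0$'' changes the product to $0\neq M_0$, so it exits the fiber and changes the loss, which is exactly what a gauge move must not do. The argument in \cite{venturi2020spurious} sidesteps both issues by constructing the non-increasing path directly in parameter space, alternating gauge paths (product-preserving, hence loss-constant, deformations that bring the intermediate factors to a convenient full-rank form) with coordinate-wise convex descent (holding all but one factor fixed, the loss is a convex quadratic in the remaining factor), rather than first building an $\mathcal{M}_r$-path and then trying to lift it.
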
 
\begin{definition}[Strict saddle property {\cite[Definition $3$]{DBLP:journals/corr/abs-1805-04938}}]
\label{def:strictsaddle}
Consider a twice differentiable function $f: \mathbb{R}^{d} \to \mathbb{R}$. If each critical point of $f$ is either a global minimum or a \emph{strict} saddle point then $f$ is said to have the strict saddle property. When this property holds, $f$ has no spurious local minimum.
\end{definition}
Even if $f$ has the strict saddle property, it may have no global minimum, consider e.g. the function $f(x) = -\|x\|_{2}^{2}$.

\begin{theorem}[No spurious local minima in shallow linear networks {\cite[Theorem 3]{DBLP:journals/corr/abs-1805-04938}}]
	\label{theorem: nospuriousminima}
	Let $B \in \mb{R}^{d_0 \times N}, A \in \mb{R}^{d_2 \times N}$ be input and output training examples. Consider the problem:
	\begin{equation*}
		\begin{aligned}
			\underset{X \in \mb{R}^{d_0 \times d_1}, Y \in \mb{R}^{d_1 \times d_2}}{\emph{\text{Minimize}}} \quad 
			L(X, Y) = \|A-XYB\|^2
		\end{aligned}
	\end{equation*}
	If $B$ is full row rank, $f$ has the strict saddle property (see \Cref{def:strictsaddle}) hence $f$ has no spurious local minimum.
\end{theorem}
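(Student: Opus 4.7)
The plan is to reduce this shallow problem to classical unconstrained two-factor matrix factorization of a fixed matrix, for which the strict saddle property is already known (ultimately Baldi--Hornik, as recorded in \cite{DBLP:journals/corr/abs-1805-04938}), and then transport the conclusion back through the reduction. The full-row-rank assumption on $B$ is precisely what makes this reduction a global diffeomorphism of the parameter space.

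First, I would peel off a constant piece of the loss using the orthogonal projector $P := B^\top(BB^\top)^{-1}B$ onto the row space of $B$, which is well-defined because $BB^\top \succ 0$. Since $XYB$ always lies in the row space of $B$, the cross term in the decomposition $A = AP + A(I - P)$ vanishes and one gets
\begin{equation*}
L(X,Y) \;=\; \|AP - XYB\|^2 + \|A(I - P)\|^2.
\end{equation*}
Writing $AP = \tilde{A}B$ with $\tilde{A} := AB^\top(BB^\top)^{-1}$ and using $\|MB\|^2 = \|M(BB^\top)^{1/2}\|^2$, the non-constant part of $L$ becomes $\|(\tilde{A} - XY)(BB^\top)^{1/2}\|^2$.

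Because $(BB^\top)^{1/2}$ is a square invertible matrix, the linear change of variables $Y' := Y(BB^\top)^{1/2}$ is a global diffeomorphism of the parameter space, and it turns the problem into
\begin{equation*}
\min_{X,\,Y'}\;\big\|A' - X Y'\big\|^2, \qquad A' := \tilde{A}(BB^\top)^{1/2}.
\end{equation*}
This is exactly the classical unconstrained two-factor matrix factorization of the fixed matrix $A'$, for which every critical point is known to be either a global minimum or to possess a Hessian with a strictly negative eigenvalue, i.e.\ the strict saddle property of \cref{def:strictsaddle} holds. Because the change of variables is a linear diffeomorphism, it bijects critical points to critical points, and by Sylvester's law of inertia it preserves the signature of the Hessian at every critical point. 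Hence $L$ inherits the strict saddle property and therefore has no spurious local minimum.

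The main obstacle, and the reason the hypothesis on $B$ cannot be dropped, is precisely making the reduction a genuine diffeomorphism. If $BB^\top$ were merely positive semidefinite, the map $M \mapsto MB$ would have a nontrivial kernel; $L$ would then be constant along entire affine subspaces of parameters, strict-saddle directions of the classical problem would fail to pull back, and spurious flat minima could emerge. A secondary, purely dimensional, point is verifying that $(BB^\top)^{1/2}$ can indeed be absorbed into $Y$ on the correct side given the shapes of $X$, $Y$ and $B$ in the statement; once this bookkeeping is in place, the result is a direct invocation of the classical strict-saddle theorem on the reduced problem.
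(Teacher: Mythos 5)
The paper itself does not prove this statement: it is a citation theorem, quoted verbatim from reference \cite{DBLP:journals/corr/abs-1805-04938}, and is used as a black box. So there is no ``paper's proof'' to compare against; I can only assess your argument on its own merits, and it is correct. The orthogonal decomposition $\|A-XYB\|^2 = \|AP - XYB\|^2 + \|A(I-P)\|^2$ is valid because $XYB(I-P)=\mathbf 0$ and $BP=B$; the rewriting $\|MB\|^2 = \|M(BB^\top)^{1/2}\|^2$ is a straightforward trace identity; and because $B$ has full row rank, $(BB^\top)^{1/2}$ is invertible, so $(X,Y)\mapsto(X,\,Y(BB^\top)^{1/2})$ is a linear bijection of parameter space. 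That bijection carries critical points of $L$ to critical points of $L'(X,Y')=\|A'-XY'\|^2$, it carries global minimizers to global minimizers, and the Hessians at corresponding points are congruent, so by Sylvester's law of inertia the number of strictly negative eigenvalues is preserved -- exactly the transfer you need for the strict saddle property. The one ingredient you import without proof is the $B=\mathbf{I}$ case; that is legitimate, since that special case is classical (Baldi--Hornik for the critical-point characterization; Kawaguchi, cited in the paper as \cite{NIPS2016_6112}, for the explicit strict-saddle formulation in the one-hidden-layer case), and does not depend on the full strength of the theorem you are proving, so there is no circularity. One cosmetic point: as written in the paper, the shapes $X\in\mathbb{R}^{d_0\times d_1}$, $Y\in\mathbb{R}^{d_1\times d_2}$, $B\in\mathbb{R}^{d_0\times N}$ do not make $XYB$ a well-formed product except when $d_0=d_2$; this is a typo inherited from the paper's statement, not an error in your argument, and your closing remark about ``bookkeeping'' already acknowledges that the shapes have to be reconciled. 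Compared with the direct proof in the cited reference, which handles general input Gram matrices and deeper networks without whitening, your reduction is more elementary and transparent for this shallow, squared-loss case, at the cost of not extending past it.
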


Both theorems are valid for a particular case of matrix factorization with fixed support: full support matrix factorization. Indeed, given a factorized matrix $A \in \mb{R}^{m \times n}$, in \cref{theorem: Nospuriousvalley}, if $K = 2, B = \mbf{I}_n$ $(n = N)$, then the considered function is $L = \|A - W_2W_1\|^2$. This is \eqref{eq: fixed_supp_prob} without support constraints $I$ and $J$ (and without a transpose on $W_1$, which does not change the nature of the problem). \cref{theorem: Nospuriousvalley} guarantees that $L$ satisfies the conditions of \cref{lemma: non_spurious_local_valley}, thus has no spurious local valley.

Similarly, in \cref{theorem: nospuriousminima}, if $B = \mbf{I}_{d_0}$ ($d_0 = N$, therefore $B$ is full row rank), we return to the same situation of \cref{theorem: Nospuriousvalley}. In general, \cref{theorem: nospuriousminima} claims that the landscape of the full support matrix factorization problem has the strict saddle property and thus, does not have spurious local minima. 

However, once we turn to \eqref{eq: fixed_supp_prob} with \emph{arbitrary} $I$ and $J$, such benign landscape is not guaranteed anymore, as we will show in \cref{ex:spuriousinstances}. Our work in the next subsections studies conditions on the support constraints $I$ and $J$ ensuring the absence / allowing the presence of spurious objects, and can be considered as a generalization of previous results with full supports. \cite{DBLP:journals/corr/abs-1805-04938,venturi2020spurious,NIPS2016_6112}.

\subsection{Landscape of matrix factorization with fixed support constraints}
\label{sec: landscape}
We start with the first result on the landscape in the simple setting of \cref{theorem:disjoint_totally_overlapping}.

\begin{theorem}
	\label{theorem:noSpuriousSimple}
	Under the assumption of \cref{theorem:disjoint_totally_overlapping}, the function $L(X,Y)$ in \eqref{eq: fixed_supp_prob} does not admit any spurious local valley for any matrix $A$. In addition, $L$ has the strict saddle property.
\end{theorem}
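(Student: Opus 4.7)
The plan is to exploit the full decoupling of $L$ already established in the proof of \cref{theorem:disjoint_totally_overlapping}. Equation \eqref{eq:decomposedisjointoverlapping} rewrites
\[
L(X,Y) = \sum_{P\in\mc{P}} \|A_{R_P,C_P} - X_{R_P,P}Y_{C_P,P}^\top\|^2 + \|A\odot\bar{\mS}_\mc{P}\|^2.
\]
Since $\{P\}_{P\in\mc{P}}$ partitions $\intset{r}$ and the constraints $\supp(X)\subseteq I$, $\supp(Y)\subseteq J$ force every column $X_{\bullet,k}$ with $k\in P$ to be supported in $R_P$ (and similarly $Y_{\bullet,k}$ in $C_P$), the only free parameters are the pairwise disjoint blocks $\{(U_P,V_P):=(X_{R_P,P},Y_{C_P,P})\}_{P\in\mc{P}}$ of entries of $(X,Y)$. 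Hence, up to an additive constant, $L$ decomposes as $\sum_{P\in\mc{P}} L_P(U_P,V_P)$ with $L_P(U_P,V_P)=\|A_{R_P,C_P}-U_PV_P^\top\|^2$, which is a classical full-support (unconstrained) LRMA instance in independent variables.

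To rule out spurious local valleys, I would invoke \cref{lemma: non_spurious_local_valley}. By \cref{theorem: Nospuriousvalley} applied with depth $K=2$ and input matrix $B=\mbf{I}$, each $L_P$ satisfies the hypothesis of that lemma: for any starting block $(\tilde U_P,\tilde V_P)$ there exists a continuous path to a minimizer of $L_P$ along which $L_P$ is non-increasing. Given an arbitrary initial feasible point $(\tilde X,\tilde Y)$, the plan is to concatenate these paths one class at a time: during the $P$-th segment, move $(U_P,V_P)$ along its individual path while freezing every other block. Each segment is continuous and feasible, the $P$-th term of the sum decreases while all the others (and the constant term) stay unchanged, so $L$ is non-increasing along the full concatenation, and the endpoint is a global minimizer of $L$. \cref{lemma: non_spurious_local_valley} then yields the claim for every $A$.

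For the strict saddle property, the same decoupling implies that the Hessian of $L$ is block diagonal, with one block per class $P$ equal to the Hessian of $L_P$ and zero entries on the coordinates frozen by the support constraints. Critical points of $L$ are therefore exactly tuples $(U_P^\star,V_P^\star)_{P\in\mc{P}}$ at which every $L_P$ is critical. By \cref{theorem: nospuriousminima} applied with $B=\mbf{I}_{|C_P|}$ (which is full row rank), each $L_P$ has the strict saddle property. Consequently, at any critical point of $L$, either each $L_P$ is at its global minimum---in which case $L$ is globally minimal---or some $L_P$ sits at a strict saddle, whose strictly negative Hessian eigenvalue lifts (by zero-padding the eigenvector) to a strictly negative eigenvalue of the block-diagonal Hessian of $L$. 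This delivers the strict saddle property for $L$, and in particular the absence of spurious local minima.

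The main conceptual step is recognizing that the pairwise-disjoint-representatives hypothesis turns \eqref{eq: fixed_supp_prob} into a finite direct sum of \emph{unconstrained} LRMA problems on pairwise disjoint coordinates; once this separation is in hand, the transfer of the known benign landscape of each LRMA piece to the sum is routine and requires no further computation.
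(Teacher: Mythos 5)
Your proposal is correct and follows essentially the same route as the paper: decompose $L$ into a sum of decoupled full-support LRMA pieces $L_P$ via \eqref{eq:decomposedisjointoverlapping}, invoke \cref{theorem: Nospuriousvalley} together with \cref{lemma: non_spurious_local_valley} for the absence of spurious local valleys, and use the block-diagonal structure of the Hessian together with \cref{theorem: nospuriousminima} for the strict saddle property. The only cosmetic difference is that you concatenate the per-block paths sequentially (freezing the others), whereas the paper runs them in parallel; both produce a feasible non-increasing path to a global minimizer.
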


\begin{proof}
Recall that under the assumption of \cref{theorem:disjoint_totally_overlapping}, all the variables to be optimized are decoupled into ``blocks'' $\{ (X_{R_P,P}, Y_{C_P,P})\}_{P \in \mc{P}}$ ($P, \mc{P}$ are defined in \cref{def:completeeqclass}). We denote $\mc{P} = \{P_1, P_2, \ldots, P_{\ell}\}$, $P_i \subseteq \intset{r}$, $1\leq i \leq \ell$. From \Cref{eq:decomposedisjointoverlapping}, we have:
\begin{equation}
	\label{eq:decomposition}
	\|A - XY^\top\|^2 = \left(\sum_{P \in \mc{P}}\|A_{R_P,C_P} - X_{R_P,P}Y_{C_P,P}^\top\|^2\right) + \|A \odot \bar{\mS_\mc{P}}\|^2\\
\end{equation}
Therefore, the function $L(X,Y)$ is a sum of functions $L_P(X_{R_P,P},Y_{C_P,P}) := \|A_{R_P,C_P} - X_{R_P,P}Y_{C_P,P}^\top\|^2$, which do \emph{not} share parameters and are instances of the full support matrix factorization problem restricted to the corresponding blocks in $A$. The global minimizers of $L$ are $\{(X^\star_{R_P,P}, Y^\star_{C_P,P})\}_{P \in \mc{P}}$, where for each $P \in \mc{P}$ the pair $(X^\star_{R_P,P}, Y^\star_{C_P,P})$ is any global minimizer of $\|A_{R_P,C_P} - X_{R_P,P}Y_{C_P,P}^\top\|^2$.

\begin{itemize}[leftmargin = 14pt]
	\item[1)] \textbf{Non-existence of any spurious local valley}: By \cref{theorem: Nospuriousvalley}, from any initial point $(X_{R_P,P}^0, Y_{C_P,P}^0)$, there exists a continuous function $f_P(t) = (\te{X}_P(t),\te{Y}_P(t)): [0,1] \mapsto \mb{R}^{|R_P| \times |P|} \times \mb{R}^{|C_P| \times |P|}$ satisfying the conditions in \cref{lemma: non_spurious_local_valley}, which are:
	\begin{itemize}
		\item[i)] $f_P(0) = (X_{R_P,P}^0, Y_{C_P,P}^0)$.
		
		\item[ii)] $f_P(1) = (X^\star_{R_P,P}, Y^\star_{C_P,P})$.
		
		\item[iii)] $L_P \circ f_P: [0,1] \to \mb{R}$ is non-increasing.
	\end{itemize}

	Consider a feasible path (\cref{def:feasiblepath}) $f(t) = (\te{X}(t), \te{Y}(t)):[0,1] \mapsto \mathbb{R}^{m\times r}\times  \mathbb{R}^{r\times n}$ defined in such a way that $\te{X}(t)_{R_P,P} = \te{X}_P(t)$ for each $P \in \mc{P}$ and similarly for $\te{Y}(t)$. Since $L \circ f = \sum_{P \in \mc{P}} L_P \circ f_P + \|A \odot \bar{\mS_\mc{P}}\|^2$, $f$ satisfies the assumptions of \cref{lemma: non_spurious_local_valley}, which shows the non-existence of any spurious local valley.
	
	\item[2)] \textbf{Non-existence of any spurious local minimum}: Due to the decomposition in \Cref{eq:decomposition}, the gradient and Hessian of $L(X,Y)$ have the following form:
	\begin{equation*}
		\frac{\partial L}{\partial X_{R_P, P}} = \frac{\partial L_{P}}{\partial X_{R_P, P}}, \qquad \frac{\partial L}{\partial Y_{C_P,P}} = \frac{\partial L_P}{\partial Y_{C_P,P}}, \;\forall P \in \mathcal{P}
	\end{equation*}
	\begin{equation*}
		H(L)_{\mid (X,Y)}
		 \begin{pmatrix}
			H(L_{P_1})_{\mid (X_{R_{P_1}, P_1},Y_{C_{P_1}, P_1}))}
			& \ldots & \mathbf{0}\\
			\vdots & \ddots & \vdots \\
			\mathbf{0} & \ldots & 
			H(L_{P_\ell})_{\mid (X_{R_{P_\ell}, P_\ell},Y_{C_{P_\ell}, P_\ell}))}
		\end{pmatrix}
	\end{equation*}
	Consider a critical point $(X,Y)$ of $L(X,Y)$ that is not a global minimizer. Since $(X,Y)$ is a critical point of $L(X,Y)$, $(X_{R_P, P}, Y_{C_PP})$ is a critical point of the function $L_P$ for all $P \in \mc{P}$. Since $(X,Y)$ is not a global minimizer of $L(X,Y)$, there exists $P \in \mc{P}$ such that $(X_{R_P,P}, Y_{C_P,P})$ is not a global minimizer of $L_P$. By \cref{theorem: nospuriousminima}, $H(L_P)_{\mid (X_{R_P,P},Y_{C_P,P})}$ is not positive semi-definite. Hence, $H(L)_{\mid (X,Y)}$ is not positive semi-definite either (since $H(L)_{\mid (X,Y)}$ has block diagonal form). This implies that $(X,Y)$ it is a strict saddle point as well (hence, not a spurious local minimum).
\end{itemize}
\end{proof}
For spurious local valleys, we have the same results for the setting in \cref{theorem: reduction_disjoint_overlapping}. The proof is, however, less straightforward.

\begin{theorem}\label{th:MainNoSpuriousComplex}
	If $I$, $J$ satisfy the assumptions of \cref{theorem: reduction_disjoint_overlapping}, then for each matrix $A$ the landscape of $L(X,Y)$ in \eqref{eq: fixed_supp_prob} has no spurious local valley.
\end{theorem}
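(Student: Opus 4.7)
The plan is to apply \cref{lemma: non_spurious_local_valley}: from any feasible initial $(\tilde X, \tilde Y)$, we build a continuous feasible path ending at a global minimizer of $L$ along which $L$ is non-increasing. The starting observation is the additive decomposition $L = L_T + L_{\bar T}$ with $L_T := \|(A - XY^\top) \odot 1_{\mS_T}\|^2$ and $L_{\bar T} := \|(A - XY^\top) \odot 1_{\bar\mS_T}\|^2$. Using the taxonomy of \cref{def:taxonomy} and the fact that for each $k \in \bar T$ the part of $\mS_k$ lying in $\bar\mS_T$ is the rectangle $\mS'_k = R_k \times C_k$, a case-by-case inspection of the rank-one contributions $X_\col{k} Y_\col{k}^\top$ shows that $L_{\bar T}$ depends only on $(U, V) := (X|_{I_{\bar T}^1}, Y|_{J_{\bar T}^1})$. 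On the other hand, \cref{lem: expressibility_two} ensures that the CEC factors $(X_T, Y_T)$ can realize any matrix supported in $\mS_T$, so $L_T$ can always be driven to $0$ by a suitable choice of $(X_T, Y_T)$. In particular, $\min L = \min_{U,V} L_{\bar T}(U, V)$, which matches the optimum of the reduced problem with supports $(I_{\bar T}^1, J_{\bar T}^1)$ highlighted in \cref{theorem: reduction_disjoint_overlapping}.

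The path is then constructed in three phases. \emph{Phase~1 (cancel the CEC residual).} Keeping all variables but $(X_T, Y_T)$ frozen, move $(X_T, Y_T)$ along a continuous path that drives $L_T$ down to $0$. The $(X_T,Y_T)$-subproblem splits blockwise over the (disjoint) CECs into independent full-support matrix approximation instances, and by \cref{theorem:noSpuriousSimple} each such landscape is free of spurious local valleys, so such a path exists. We arrange that it terminates with $X_T$ in a canonical form, e.g.\ $X_{R_P, P} = [\mbf{I}_{|R_P|}, \mbf{0}]$ on each CEC $P$ (assuming $|R_P| \leq |C_P|$ w.l.o.g., which is possible because $|P| \geq |R_P|$). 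Throughout this phase $L_{\bar T}$ is constant. \emph{Phase~2 (zero out $I_{\bar T}^2, J_{\bar T}^2$).} Linearly contract $X|_{I_{\bar T}^2}$ and $Y|_{J_{\bar T}^2}$ to $0$; since these entries only contribute to $\mS_T$, $L_{\bar T}$ stays constant. Simultaneously, with $X_T$ held in its canonical form, continuously update $Y_T$ (linearly determined by the current target) so that $(XY^\top)|_{\mS_T} = A|_{\mS_T}$ at every $t$, preserving $L_T = 0$. \emph{Phase~3 (optimize the reduced problem).} The supports $(I_{\bar T}^1, J_{\bar T}^1)$ fulfil \cref{theorem:disjoint_totally_overlapping}, so \cref{theorem:noSpuriousSimple} yields a continuous path $(U(t), V(t))$ from $(\tilde U, \tilde V)$ to a global minimizer of $L_{\bar T}$ with $L_{\bar T}$ non-increasing; lift it to the full variables by again updating $Y_T$ continuously so that $L_T$ remains $0$. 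Concatenating the three pieces gives the required non-increasing feasible path, and \cref{lemma: non_spurious_local_valley} concludes.

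The main technical hurdle lies in the \emph{continuous lifting} performed in Phases~2 and~3: as the target matrix on $\mS_T$ varies continuously with the other variables, the CEC factors must vary continuously while realizing this target exactly. A generic factorization recipe such as SVD is not continuous in the target (its gauge freedom is not always a connected group). We bypass this by ensuring Phase~1 terminates with $X_T$ in the canonical form above---made possible by the CEC inequality $|P| \geq \min(|R_P|, |C_P|)$---so that in subsequent phases $Y_T$ is determined by a linear equation in the target and therefore depends continuously on it. The remaining verification to make rigorous is that Phase~1 can be concluded in the chosen canonical form while staying within the zero-level set of $L_T$, which follows from the path-connectedness of the exact factorization set on each CEC block in the over-parameterized regime afforded by the CEC definition.
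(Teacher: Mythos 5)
Your overall strategy --- split $L = L_T + L_{\bar T}$, kill the CEC residual first, zero out $I_{\bar T}^2, J_{\bar T}^2$, then optimize the reduced problem $(I_{\bar T}^1, J_{\bar T}^1)$ while lifting through a ``rigid'' $X_T$ --- matches the paper's proof in spirit (the paper's \cref{lem:fullrankhypothesis}, \cref{lem:connecttozeroCEC}, \cref{lem:connecttooptimal} and the key technical tool \cref{lem: continous_function_of_S2}). However, there is a concrete gap in your Phase~1 that you cannot paper over.

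You assert that ``the $(X_T,Y_T)$-subproblem splits blockwise over the (disjoint) CECs into independent full-support matrix approximation instances'' and then invoke \cref{theorem:noSpuriousSimple}. But nothing in the hypotheses of \cref{theorem: reduction_disjoint_overlapping} forces \emph{distinct} CECs to have \emph{disjoint} supports $\mS_P$; the theorem only imposes the disjoint-or-equal condition on the \emph{non}-CEC rectangles $\mS'_k$, $k\in\bar T$. Two CECs can easily have overlapping $R_P\times C_P$ (e.g.\ $R_{P}=\{1,2\},C_{P}=\{1,2\}$ and $R_{P'}=\{1\},C_{P'}=\{1,2,3\}$). In that case $L_T$ does not decompose into independent blocks, $(I_T,J_T)$ do \emph{not} satisfy the assumptions of \cref{theorem:disjoint_totally_overlapping}, and \cref{theorem:noSpuriousSimple} is not applicable to the subproblem. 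The paper avoids this by \cref{lem: continous_function_of_S2}, which proves the continuous-lift statement by \emph{induction over the CECs} with a careful bookkeeping of the overlap regions ($h_1, h_2$ in its proof); that inductive argument --- or something equivalent --- cannot be skipped.

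Two smaller issues. First, the specific canonical form $X_{R_P,P}=[\mbf{I}_{|R_P|},\mbf{0}]$ need not be reachable within the zero-level set of $L_T$: when $|P|=\min(|R_P|,|C_P|)$ (allowed by the CEC definition) and the block target has full rank, the exact-factorization set is homeomorphic to $GL(|R_P|)$ and has two path components (sign of the determinant), so ``path-connectedness of the exact factorization set on each CEC block'' is false in general. What you actually need is only that $X_{R_P,P}$ be \emph{full row rank}, not that it equal a fixed matrix; this is exactly what the paper's CEC-full-rank notion (\cref{def:fullrankCEC}) and \cref{lem:fullrankhypothesis} provide, and then $Y_T$ is tracked via the pseudo-inverse as in \cref{lemma: continuous function adapt}. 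Second, when updating $Y_T$ during Phases~2--3 you should set $Y_T(t)=Y_T(0)+\text{(linear correction in the target increment)}$ rather than ``the'' solution of the linear system; otherwise the path can jump at $t=0$ since the initial $Y_T$ is only one of many solutions. Both of these are fixable, but the first gap (overlapping CECs) genuinely invalidates the proof as written and forces you to reproduce the inductive lifting argument of \cref{lem: continous_function_of_S2}.
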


The following is a concept which will be convenient for the proof of \cref{th:MainNoSpuriousComplex}. 
\begin{definition}[CEC-full-rank]
	\label{def:fullrankCEC}
	A feasible point $(X,Y)$ is said {to be} \emph{CEC-full-rank} if $\forall P \in \mc{P}^\star$, either $X_{R_P,P}$ or $Y_{C_P,P}$ is full row rank. 
\end{definition}

We need three following lemmas to prove \cref{th:MainNoSpuriousComplex}:

\begin{lemma}
	\label{lem:fullrankhypothesis}
	Given $I \in \{0,1\}^{m \times r}$, $J \in \{0,1\}^{n \times r}$, consider $T$ and $\mS_T$ as in \cref{def:completeeqclass} and a feasible point $(X,Y)$. There exists a feasible path $f: [0,1] \to \mb{R}^{m \times r} \times \mb{R}^{n \times r}: f(t) = (X_f(t), Y_f(t))$ such that: 
	\begin{itemize}
		\item[1)] $f$ connects $(X,Y)$ with a CEC-full-rank point: $f(0) = (X,Y)$, and $f(1)$ is CEC-full-rank.
		\item[2)] $X_f(t)(Y_f(t))^{\top} = XY^\top, \forall t \in [0,1]$.
	\end{itemize}
\end{lemma}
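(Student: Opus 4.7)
The plan is to construct the path one CEC at a time. Since for distinct $P,P' \in \mathcal{P}^{\star}$ the column index sets $P$ and $P'$ are disjoint, any deformation that changes only the columns of $X$ (resp.\ $Y$) indexed by $P$ modifies only the summand $X_{\bullet,P}Y_{\bullet,P}^{\top}$ of $XY^{\top}=\sum_{P\in\mathcal{P}}X_{\bullet,P}Y_{\bullet,P}^{\top}$. Hence if we can, for each $P\in\mathcal{P}^{\star}$ independently, build a continuous deformation of $(X_{\bullet,P},Y_{\bullet,P})$ that keeps the partial product $M_P := X_{R_P,P}Y_{C_P,P}^{\top}$ constant and that ends at a pair for which $X_{R_P,P}$ or $Y_{C_P,P}$ is full row rank, then concatenating these deformations (and leaving columns indexed by $\bar T$ untouched) yields a feasible path $f$ with $f(0)=(X,Y)$, $f(1)$ CEC-full-rank, and $X_f(t)Y_f(t)^{\top}=XY^{\top}$ for all $t$.

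So fix $P\in\mathcal{P}^{\star}$, write $a=|R_P|$, $b=|C_P|$, $p=|P|$, and set $M=M_P\in\mathbb{R}^{a\times b}$. By definition of a CEC, $p\geq \min(a,b)$, which in particular gives $p\geq r:=\text{rank}(M)$. Without loss of generality assume $p\geq a$. By \cref{lem: expressibility_two} applied to $M$ (extended by zeros to the ambient size and using the supports $I_T,J_T$), there exists a pair $(X^{\star}_{R_P,P},Y^{\star}_{C_P,P})$ of the right support with $X^{\star}_{R_P,P}(Y^{\star}_{C_P,P})^{\top}=M$ and with $X^{\star}_{R_P,P}$ of full row rank $a$ (the truncated SVD construction underlying \cref{algorithm0.5} produces such a pair explicitly). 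The remaining task is to exhibit a continuous path inside
\[
\mathcal{F}_M:=\{(A,B)\in\mathbb{R}^{a\times p}\times\mathbb{R}^{b\times p}\ :\ AB^{\top}=M\}
\]
joining $(X_{R_P,P},Y_{C_P,P})$ to $(X^{\star}_{R_P,P},Y^{\star}_{C_P,P})$; this is precisely what is needed since any entries outside $R_P\times P$ (resp.\ $C_P\times P$) remain zero and the support constraints are automatically satisfied.

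To produce such a path I would proceed in two stages. First, use the action $(A,B)\mapsto (AQ,BQ^{-\top})$ of $GL_+(p)$ to rotate $B$ into the form $\tilde B=[B_1\mid 0]$ with $B_1\in\mathbb{R}^{b\times r}$ of full column rank $r$; this is possible because there is a full-rank change of coordinates on $\mathbb{R}^p$ sending the column space of $B^{\top}$ to the first $r$ coordinates, and $GL_+(p)$ is path connected via a continuous curve joining the identity to that change of coordinates. Under the constraint $AB^{\top}=M$, the corresponding $\tilde A=[\tilde A_1\mid \tilde A_2]$ satisfies $\tilde A_1 B_1^{\top}=M$ while $\tilde A_2\in\mathbb{R}^{a\times(p-r)}$ is completely free. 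In the second stage, use a linear interpolation in $\tilde A_2$ (keeping $\tilde A_1$ and $\tilde B$ fixed) to fill these ``spare'' columns with an arbitrary matrix: because $p-r\geq a-r$ (as $p\geq a$), one can choose the target value of $\tilde A_2$ so that $\tilde A$ becomes full row rank. This yields a continuous path in $\mathcal{F}_M$ ending at a pair with $A$ full row rank. Repeating the argument symmetrically to reach the specific target $(X^{\star}_{R_P,P},Y^{\star}_{C_P,P})$ produced by \cref{lem: expressibility_two} amounts to composing with another $GL_+(p)$ path, which is again continuous. Sign issues in $GL(p)$ are absorbed into the free stage, because a single sign flip on one spare column of $\tilde A$ (matched by the zero column of $\tilde B$) can be realized by a continuous rotation in the plane spanned by that column and any other spare column.

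The main obstacle is the second stage: one must keep the deformation inside $\mathcal{F}_M$ while letting the rank of $A$ grow from $\mathrm{rank}(M)$ up to $a$. The key mechanism that makes this possible is the overparametrization $p\geq \min(a,b)\geq \mathrm{rank}(M)$, which provides $p-r$ genuinely free columns of $A$ once $B$ is put in the form $[B_1\mid 0]$; without this slack the orbit of $GL(p)$ would preserve the rank of $A$ and the argument would break. Once this local deformation for each $P$ is in place, gluing them together and reparametrizing on $[0,1]$ delivers the feasible path required by the lemma.
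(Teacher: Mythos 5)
Your high-level plan matches the paper's: distinct CECs occupy disjoint column blocks of $(X,Y)$, so one deforms each block $(X_{\bullet,P},Y_{\bullet,P})$ independently while freezing its partial product and then glues the paths (the paper runs \cref{cor:to_fullrank}, built on \cref{lemma: to_inversible}, on each $P\in\mc{P}^\star$ and adds the results, leaving the $\bar{T}$ columns untouched). The gap is in the deformation of a single block. In your notation $a=|R_P|$, $b=|C_P|$, $p=|P|$, $A=X_{R_P,P}$, $B=Y_{C_P,P}$, $M=AB^\top$, $r=\mathrm{rank}(M)$: the $GL_+(p)$ action $(A,B)\mapsto(AQ,BQ^{-\top})$ preserves $\mathrm{rank}(B)$, and in general $\mathrm{rank}(B)$ is only known to satisfy $\mathrm{rank}(B)\ge r$, with strict inequality possible. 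After your rotation you can therefore zero out only the last $p-\mathrm{rank}(B)$ columns of $\tilde B$, not the last $p-r$, and the count $p-r\ge a-r$ you invoke is not the relevant one. Concretely, take $a=2$, $b=3$, $p=2$, $A=\bigl[\begin{smallmatrix}1 & 0\\0 & 0\end{smallmatrix}\bigr]$, $B=\bigl[\begin{smallmatrix}1 & 0\\0 & 1\\0 & 0\end{smallmatrix}\bigr]$: then $\mathrm{rank}(M)=1$ but $\mathrm{rank}(B)=2=p$, so no $Q\in GL(p)$ produces a zero column in $BQ^{-\top}$, $\tilde A$ has \emph{no} free column, and your construction stalls, whereas the lemma's conclusion does hold (e.g.\ connect to $(\mbf{I}_2,M^\top)$).

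What is missing is exactly the redistribution step in \cref{lemma: to_inversible}: writing $q=\mathrm{rank}(A)$, assuming WLOG that the first $q$ columns of $A$ are independent and expressing $A_\col{k}=\sum_{i\le q}\alpha_i^k A_\col{i}$ for $k>q$, the \emph{linear} interpolation from $B$ to $\te{B}$, where $\te{B}_\col{i}=B_\col{i}+\sum_{k>q}\alpha_i^k B_\col{k}$ for $i\le q$ and $\te{B}_\col{i}=\mbf{0}$ for $i>q$, keeps $A\,B(t)^\top\equiv M$ along the way while \emph{dropping} $\mathrm{rank}(B)$ to at most $q$ — a rank change that the $GL_+(p)$ orbit can never produce. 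After this, the last $p-q$ columns of $A$ are free (they multiply zero columns of $\te B$), and since $p\ge a$ one has $p-q\ge a-q$ spare columns, enough to complete $A$ to full row rank by a second linear interpolation. If you insert this redistribution after your rotation stage, the argument is repaired; without it, it breaks whenever $\mathrm{rank}(B)>r$. (A minor secondary point: \cref{lem: expressibility_two} guarantees a factorization of $M$, not one with full-row-rank left factor, but the explicit target is unnecessary anyway since the lemma only asks for \emph{some} CEC-full-rank endpoint.)
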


\begin{lemma}
	\label{lem:connecttozeroCEC}
	Under the assumption of \cref{theorem: reduction_disjoint_overlapping}, for any CEC-full-rank feasible point $(X,Y)$, there exists feasible path $f: [0,1] \to \mb{R}^{m \times r} \times \mb{R}^{n \times r}: f(t) = (X_f(t), Y_f(t))$ such that:
	\begin{itemize}
		\item[1)] $f(0) = (X,Y)$.
		
		\item[2)] $L \circ f$ is non-increasing.
		
		\item[3)] $(A -X_f(1)(Y_f(1))^\top) \odot {\mS_T} = \mathbf{0}$.
	\end{itemize} 
\end{lemma}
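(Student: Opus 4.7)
My plan is to construct the path in two consecutive phases, separating the complete equivalence classes according to which factor is full row rank. Using the CEC-full-rank hypothesis, partition $\mc{P}^\star = \mc{P}^\star_X \sqcup \mc{P}^\star_Y$, where $\mc{P}^\star_X = \{P \in \mc{P}^\star : X_{R_P,P} \text{ has full row rank}\}$ (so necessarily $Y_{C_{P'},P'}$ has full row rank for every $P' \in \mc{P}^\star_Y$). Set $T_X = \cup_{P \in \mc{P}^\star_X} P$, $T_Y = T \setminus T_X$, and $\mS_X = \cup_{P \in \mc{P}^\star_X} \mS_P$, $\mS_Y = \cup_{P' \in \mc{P}^\star_Y} \mS_{P'}$, so that $\mS_T = \mS_X \cup \mS_Y$.

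\medskip

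In Phase 1, on $[0,1/2]$, I freeze $X$ and the columns $Y_{\col k}$ for $k \notin T_X$, and pick any minimizer $Y^\ast$ of the resulting finite-dimensional convex quadratic least-squares problem in the remaining free entries of $Y'$ (with $\supp(Y') \subseteq J$); existence is standard since the objective is bounded below by $0$. Set $f(t) = (X,\, (1-2t)Y + 2tY^\ast)$. Feasibility along the segment follows from convexity of the support constraint. Monotonicity of $L \circ f$ follows from the classical one-dimensional argument: the convex quadratic $g(s) = L(X,(1-s)Y + sY^\ast)$ has $g'(1) = \langle \nabla_Y L(X,Y^\ast),\, Y^\ast - Y\rangle = 0$ because $\nabla_Y L(X, Y^\ast)$ vanishes on the free coordinates while $Y^\ast - Y$ vanishes on the frozen ones. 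The first-order optimality conditions on the free entries read $X_{R_P,P}^\top (A - X(Y^\ast)^\top)_{R_P, C_P} = 0$ for each $P \in \mc{P}^\star_X$; since $X_{R_P,P}^\top$ has trivial kernel, this forces $(A - X(Y^\ast)^\top) \odot \mS_P = 0$, and unioning over $P \in \mc{P}^\star_X$ gives $(A - X(Y^\ast)^\top) \odot \mS_X = 0$.

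\medskip

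Phase 2, on $[1/2,1]$, is symmetric with the roles of $X$ and $Y$ swapped: freeze $Y = Y^\ast$ and the columns $X_{\col k}$ for $k \notin T_Y$, observe that $Y^\ast_{C_{P'}, P'}$ was unchanged by Phase 1 and therefore still has full row rank for each $P' \in \mc{P}^\star_Y$, then pick a minimizer $X^\ast$ of $L(X', Y^\ast)$ over the feasible free entries and linearly interpolate $X$ to $X^\ast$. The same convex-quadratic argument gives $L$ non-increasing, and the analogous first-order conditions combined with $Y^\ast_{C_{P'}, P'}$ of full row rank yield $(A - X^\ast (Y^\ast)^\top) \odot \mS_{P'} = 0$ for each $P' \in \mc{P}^\star_Y$, hence $(A - X^\ast(Y^\ast)^\top) \odot \mS_Y = 0$.

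\medskip

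The step I expect to be most delicate is showing that Phase 2 does not destroy the zero-residual property established in Phase 1, since CEC supports $\mS_P$ and $\mS_{P'}$ may overlap between $\mc{P}^\star_X$ and $\mc{P}^\star_Y$. The resolution is that Phase 2 only modifies entries $X^\ast_{R_{P'}, P'}$ with $P' \in \mc{P}^\star_Y$, so the induced change in the product $X Y^{\ast\top}$ is supported entirely inside $\mS_Y$: on $\mS_X \setminus \mS_Y$ the Phase 1 zero-residual persists untouched, and on the overlap $\mS_X \cap \mS_Y \subseteq \mS_Y$ the Phase 2 optimality already forces the residual to zero. Combining yields $(A - X^\ast(Y^\ast)^\top) \odot \mS_T = 0$, which is condition 3). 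Continuity of the concatenated path at $t = 1/2$ is automatic since both phases meet at $(X, Y^\ast)$.
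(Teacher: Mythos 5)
Your proof is correct, but it follows a genuinely different route from the paper's. The paper proceeds at the \emph{product} level: it defines the interpolation $B(t) = Z \odot \bar{\mS}_T + (At + (1-t)Z)\odot \mS_T$ with $Z = XY^\top$, observes that $\|A - B(t)\|^2 = \|(A-Z)\odot\bar\mS_T\|^2 + (1-t)^2\|(A-Z)\odot\mS_T\|^2$ is manifestly non-increasing, and then invokes the technical \cref{lem: continous_function_of_S2} to lift $B(t)$ to a feasible path $(X_f(t),Y_f(t))$ in factor space (using the CEC-full-rank hypothesis as the hypothesis of that lemma). Your argument instead works entirely in \emph{factor} space: you split $\mc{P}^\star$ according to which factor carries the full row rank and run a two-phase partial alternating least-squares pass, obtaining monotonicity from the elementary fact that a convex quadratic with vanishing derivative at $s=1$ is non-increasing on $[0,1]$, and obtaining the zero residual on $\mS_T$ from the first-order optimality conditions together with the injectivity of $X_{R_P,P}^\top$ (resp. $Y_{C_{P'},P'}^\top$). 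Your handling of the overlap $\mS_X \cap \mS_Y$ is the right point to isolate: since Phase~2 only alters columns indexed by $T_Y$, the induced change in the product is supported in $\mS_Y$, so the Phase~1 residual on $\mS_X\setminus\mS_Y$ persists while Phase~2 optimality covers $\mS_Y$; this is exactly what makes the two phases compose. The paper's approach is more uniform with the surrounding section since \cref{lem: continous_function_of_S2} is reused for \cref{lem:connecttooptimal} and elsewhere, whereas yours is more self-contained and elementary. One very minor point: the existence of $Y^\ast$ is not merely because the objective is bounded below (that alone does not imply an attained infimum), but because it is a linear least-squares problem, i.e.\ a convex quadratic bounded below on an affine subspace, which always attains its minimum at any solution of the normal equations.
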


\begin{lemma}
	\label{lem:connecttooptimal}
	Under the assumption of \cref{theorem: reduction_disjoint_overlapping}, for any CEC-full-rank feasible point $(X,Y)$ {satisfying}: $(A - XY^\top) \odot {\mS_T} = \mathbf{0}$, there exists a feasible path $f: [0,1] \to \mb{R}^{m \times r} \times \mb{R}^{n \times r}: f(t) = (X_f(t), Y_f(t))$ such that:
	\begin{itemize}
		\item[1)] $f(0) = (X,Y)$.
		
		\item[2)] $L \circ f$ is non-increasing.
		
		\item[3)] $f(1)$ is an optimal solution of $L$.
	\end{itemize} 
\end{lemma}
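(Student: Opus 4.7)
The plan is to construct $f$ as the concatenation of two feasible segments: an \emph{absorption} segment that redistributes all contributions of the $\bar T$-columns that land in $\mS_T$ into the $T$-columns while leaving $X Y^{\top}$ (hence $L$) constant, followed by a \emph{descent} segment that optimizes the $\bar T$-columns through the reduced problem identified by \cref{theorem: reduction_disjoint_overlapping}.

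For the absorption segment $f_1: [0, 1/2] \to \mb{R}^{m \times r} \times \mb{R}^{n \times r}$, I scale every entry of $X$ indexed by $I_{\bar T}^2$ and every entry of $Y$ indexed by $J_{\bar T}^2$ by the factor $(1-2s)$, leaving all other entries untouched. For $k \in \bar T$, only the entries of $X_{\col k}$ in $\supp(I_{\col k}) \setminus R_k$ and of $Y_{\col k}$ in $\supp(J_{\col k}) \setminus C_k$ get scaled, and these entries only contribute to parts of $X_{\col k} Y_{\col k}^{\top}$ that lie in $\mS_k \cap \mS_T$ (never in $\mS'_k = R_k \times C_k \subseteq \bar{\mS}_T$). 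Hence the rescaling modifies $M(s) := (X_{\bar T}(s) Y_{\bar T}(s)^{\top}) \odot \mS_T$ continuously but leaves $(X_{\bar T}(s) Y_{\bar T}(s)^{\top}) \odot \bar{\mS}_T$ constant. To keep $XY^{\top}$ invariant throughout, I impose
\begin{equation*}
X_T(s)\, Y_T(s)^{\top} \;=\; X_T(0)\, Y_T(0)^{\top} \;+\; M(0) - M(s),
\end{equation*}
whose right-hand side is a continuous family of matrices supported in $\mS_T$ agreeing with $X_T(0) Y_T(0)^{\top}$ at $s = 0$. Existence of feasible factorizations is provided by \cref{lem: expressibility_two}; to obtain a \emph{continuous} selection I invoke the CEC-full-rank assumption: for each $P \in \mathcal{P}^{\star}$, one of $X_{R_P,P}(0)$, $Y_{C_P,P}(0)$ is full row rank, and I keep that factor fixed along the path while updating the other via the pseudo-inverse of the resulting linear system. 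Since the fixed factor remains full row rank, the solution depends continuously (in fact smoothly) on $s$. At $s = 1/2$, $M(1/2) = 0$ so $X^{(1)}_T (Y^{(1)}_T)^{\top} = X_T(0) Y_T(0)^{\top} + M(0) = (XY^{\top})\odot \mS_T = A \odot \mS_T$, while the $I_{\bar T}^2 \cup J_{\bar T}^2$ entries of $(X^{(1)}, Y^{(1)})$ all vanish. Along $f_1$, $XY^{\top}$ and hence $L$ are constant.

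For the descent segment $f_2 : [1/2, 1] \to \mb{R}^{m \times r} \times \mb{R}^{n \times r}$, I freeze $X_T, Y_T$ and the zeroed $I_{\bar T}^2, J_{\bar T}^2$ entries from $(X^{(1)}, Y^{(1)})$, and modify only $X_{I_{\bar T}^1}, Y_{J_{\bar T}^1}$. At such points $X_{\bar T} Y_{\bar T}^{\top}$ is supported in $\bar{\mS}_T$ (since $X_{\col k}$ lies in $R_k$ and $Y_{\col k}$ in $C_k$), so the loss splits as
\begin{equation*}
L(X,Y) = \|(A - X_T Y_T^{\top}) \odot \mS_T\|^2 + \|A \odot \bar{\mS}_T - X_{\bar T} Y_{\bar T}^{\top}\|^2,
\end{equation*}
whose first term is zero at $(X^{(1)}, Y^{(1)})$ and stays zero throughout (the frozen $T$-entries realize $A \odot \mS_T$). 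The second term is precisely the reduced fixed-support problem with target $A \odot \bar{\mS}_T$ and supports $(I_{\bar T}^1, J_{\bar T}^1)$. By \cref{theorem: reduction_disjoint_overlapping}, these supports satisfy the hypotheses of \cref{theorem:disjoint_totally_overlapping}; \cref{theorem:noSpuriousSimple} and \cref{lemma: non_spurious_local_valley} then provide a feasible non-increasing path in the reduced problem from $(X^{(1)}_{I_{\bar T}^1}, Y^{(1)}_{J_{\bar T}^1})$ to a global minimizer. Extending this path by keeping the frozen entries fixed yields $f_2$, along which $L$ is non-increasing, and by the equality of infima in \cref{theorem: reduction_disjoint_overlapping}, $f_2(1)$ attains $\inf L$.

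The concatenation $f = f_1 \cdot f_2$ (reparameterized to $[0,1]$) meets the three requirements. The main technical difficulty is the \emph{continuous} selection of factorizations in the absorption phase: \cref{lem: expressibility_two} only guarantees existence, and a naive blockwise SVD-based construction is not continuous across rank changes. The CEC-full-rank assumption is what reduces the problem to a linear system admitting a continuous (indeed smooth) right-inverse, which is precisely what is needed to splice the two segments into a genuine continuous path.
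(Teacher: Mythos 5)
Your proof is correct, and it takes a genuinely different (arguably cleaner) route than the paper's. The paper constructs a \emph{single} descent segment: it keeps $X_{\bar T}^2,Y_{\bar T}^2$ fixed at their initial (nonzero) values, drives $(X_{\bar T}^1,Y_{\bar T}^1)$ along the reduced-problem descent path from \cref{theorem:noSpuriousSimple}, and \emph{continuously re-adjusts the $T$-block} to keep $(A-XY^\top)\odot\mS_T=\mathbf{0}$, by tracking the moving target $g(t) = (A - (X_f^{\bar T}(t) + X_{\bar T}^2)(Y_f^{\bar T}(t) + Y_{\bar T}^2)^\top) \odot \mS_T$ via \cref{lem: continous_function_of_S2}. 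You instead split the path into an absorption phase that sends $X_{\bar T}^2,Y_{\bar T}^2$ to zero (with $XY^\top$ held constant by moving the $T$-block) followed by a descent phase in which the $T$ and $\bar T^2$ entries are completely frozen and only the reduced problem on $(I_{\bar T}^1,J_{\bar T}^1)$ is optimized. This decoupling buys a conceptually cleaner argument: once the cross-terms $X_{\bar T}^iY_{\bar T}^j{}^\top$ (for $(i,j)\neq(1,1)$) have been eliminated, the loss separates cleanly and no $T$-tracking is needed during the descent. The cost is the extra absorption phase, but this requires exactly the same machinery.

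One place to tighten: your absorption-phase continuous-selection argument is a re-derivation of \cref{lem: continous_function_of_S2}, and you would do better to cite it directly rather than sketch the pseudo-inverse construction. If you insist on the direct pseudo-inverse route, you should say explicitly (i) that after fixing the full-rank factor in each $P\in\mc{P}^\star$ the resulting linear map $\Phi$ is \emph{surjective} onto the space of $\mS_T$-supported matrices (this follows because each block map already surjects onto $\mS_P$-supported matrices and $\cup_P\mS_P=\mS_T$), and (ii) that you take the particular affine selection $Y_T(s)=Y_T(0)+\Phi^\dagger\bigl(g(s)-g(0)\bigr)$ so the path actually starts at $(X,Y)$ (the bare pseudo-inverse $\Phi^\dagger(g(s))$ need not equal $Y_T(0)$ at $s=0$). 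With those two sentences added, your argument is fully rigorous and a valid alternative to the paper's.
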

The proofs of \cref{lem:fullrankhypothesis}, \cref{lem:connecttozeroCEC} and \cref{lem:connecttooptimal} can be found in \cref{subapp:fullrankhypothesis}, \cref{subapp:connecttozeroCEC} and \cref{subapp:connecttooptimal} of the supplementary material. 
\begin{proof}[Proof of \cref{th:MainNoSpuriousComplex}]
	Given any initial point $(X^{0},Y^{0})$,  \cref{lem:fullrankhypothesis} shows the existence of a continuous path along which the product of $XY^\top = X^0(Y^0)^\top$ does not change (thus, $L(X,Y)$ is constant) and ending at a CEC-full-rank point. Therefore it is sufficient to prove the theorem under the additional assumption that $(X^0,Y^0)$ is CEC-full-rank. 	
	With this additional assumption, one can employ \cref{lem:connecttozeroCEC} to build a continuous path $f_1(t) = (X_1(t), Y_1(t))$, such that $t \mapsto L(X_1(t), Y_1(t))$ is non-increasing, that connects $(X^0, Y^0)$ to a point $(X^1, Y^1)$ satisfying: 
	\begin{equation*}
		(A - X^1(Y^1)^\top) \odot {\mS_T} = \mathbf{0}.
	\end{equation*}	
	Again, one can assume that $(X^1, Y^1)$ is CEC-full-rank (one can invoke \cref{lem:fullrankhypothesis} one more time). Therefore, $(X^1, Y^1)$ satisfies the conditions of \cref{lem:connecttooptimal} . Hence, there exists a continuous path $f_2(t) = (X_2(t), Y_2(t))$ that makes $L(X_2(t), Y_2(t))$ non-increasing and that connects $(X^1, Y^1)$ to $(X^*,Y^*)$, a global minimizer.
	
	Finally, since the concatenation of $f_1$ and $f_2$ satisfies the assumptions of \cref{lemma: non_spurious_local_valley}, we can conclude that there is no spurious local valley in the landscape of $\|A - XY^\top\|^2$.
\end{proof}
The next natural question is whether spurious local minima exist in the setting of \cref{theorem: reduction_disjoint_overlapping}. While in the setting of \cref{theorem:disjoint_totally_overlapping}, all critical points which are not global minima are saddle points, the setting of \cref{theorem: reduction_disjoint_overlapping} allows second order critical points (point whose gradient is zero and Hessian is positive semi-definite), which are not global minima.

\begin{example}
	\label{ex:ex1}
	Consider the following pair of support contraints $I, J$ and factorized matrix $I = \bigl[\begin{smallmatrix}	1 & 1 \\ 0 & 1 \end{smallmatrix}\bigr], \; J = \bigl[\begin{smallmatrix}	1 & 1 \\ 1 & 1 \end{smallmatrix}\bigr], \; A = \bigl[\begin{smallmatrix}	10 & 0 \\ 0 & 1 \end{smallmatrix}\bigr]$.	
	With the notations of \cref{def:completeeqclass2} we have $T = \{1\}$ and one can check that this choice of $I$ and $J$ satisfies the assumptions of \cref{theorem: reduction_disjoint_overlapping}.
	The infimum of $L(X,Y) = \|A - XY^\top\|^2$ is zero, and attained, for example at $X^* = I_2, Y^* = A$. Consider the following feasible point $(X_0,Y_0)$: $X_0 = \bigl[\begin{smallmatrix}	0 & 1 \\ 0 & 0 \end{smallmatrix}\bigr], \; Y_0 = \bigl[\begin{smallmatrix}	0 & 10 \\ 0 & 0 \end{smallmatrix}\bigr]$. Since $X_0Y_0^\top = \bigl[\begin{smallmatrix}	10 & 0 \\ 0 & 0 \end{smallmatrix}\bigr] \neq A$, $(X_0,Y_0)$ is not a global optimal solution. Calculating the gradient of $L$ verifes that $(X_0, Y_0)$ is a critical point:
	 \begin{equation*}
	 	\nabla L(X_0,Y_0) = ((A - X_0Y_0^\top)Y_0, (A^\top - Y_0X_0^\top)X_0) = (\mbf{0}, \mbf{0})
	 \end{equation*}
	 Nevertheless, the Hessian of the function $L$ at $(X_0,Y_0)$ is positive semi-definite. Direct calculation can be found in \cref{sec:proof_ex} of the supplementary material.
\end{example}
This example shows that if we want to prove the non-existence of spurious local minima in the new setting, one cannot rely on the Hessian. This is challenging since the second order derivatives computation is already tedious. Nevertheless, with \cref{def:fullrankCEC}, we can still say something about spurious local minima in the new setting. 
\begin{theorem}
	\label{theorem:nospuriousminimaComplex}
	Under the assumptions of \cref{theorem: reduction_disjoint_overlapping}, if a feasible point $(X,Y)$ is CEC-full-rank, then $(X,Y)$ is not a spurious local minimum of \eqref{eq: fixed_supp_prob}.
	Otherwise there is a feasible path, along which  $L(\cdot,\cdot)$ is constant, that joins $(X,Y)$ to some $(\te{X},\te{Y})$ which is not a spurious local minimum.
\end{theorem}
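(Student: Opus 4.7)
My overall plan is to reduce the general case to the CEC-full-rank case, and then handle the latter by a contrapositive argument that splits further into two sub-cases. In the ``otherwise'' case (non-CEC-full-rank), \cref{lem:fullrankhypothesis} directly yields a feasible path along which $XY^\top$ (hence $L$) is constant, ending at some CEC-full-rank $(\te X,\te Y)$; once the CEC-full-rank case is settled, $(\te X,\te Y)$ is not a spurious local minimum, which is exactly the second assertion. So the substantive goal is to prove that any CEC-full-rank feasible point $(X,Y)$ that is not a global minimum admits an arbitrarily small admissible perturbation strictly decreasing $L$.

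I would distinguish two sub-cases according to whether $(A-XY^\top)\odot\mS_T=0$. If not, pick $P\in\mathcal{P}^\star$ with $(A-XY^\top)\odot\mS_P\neq 0$; swapping the roles of $X$ and $Y$ if needed, assume $X_{R_P,P}$ has full row rank. Because columns of $X$ and $Y$ indexed by $P$ have supports $R_P$ and $C_P$ respectively, varying only $Y_{C_P,P}$ modifies $XY^\top$ exclusively on $\mS_P$, and surjectivity of $Y_{C_P,P}\mapsto X_{R_P,P}Y_{C_P,P}^\top$ onto $\mathbb{R}^{|R_P|\times|C_P|}$ makes the restricted loss a strictly convex quadratic in $Y_{C_P,P}$ with gradient $-2X_{R_P,P}^\top\bigl[(A-XY^\top)\odot\mS_P\bigr]\neq 0$. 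An infinitesimal step against this gradient strictly decreases $L$, contradicting local minimality.

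The remaining sub-case — $(A-XY^\top)\odot\mS_T=0$ with $(X,Y)$ not globally optimal — is where I expect the main obstacle. The strategy couples \cref{theorem: reduction_disjoint_overlapping} with \cref{theorem:noSpuriousSimple}. Using the taxonomy of \cref{def:taxonomy} and the rectangularity of $\mS'_k$, I would first verify that columns in $T$ and variables in $I^2_{\bar T}\cup J^2_{\bar T}$ contribute to $XY^\top$ only on $\mS_T$. Setting $\te X:=X\odot I^1_{\bar T}$ and $\te Y:=Y\odot J^1_{\bar T}$ then yields $(XY^\top)\odot\bar{\mS}_T=\te X\,\te Y^\top$, hence $L(X,Y)=\|A\odot\bar{\mS}_T-\te X\,\te Y^\top\|^2$, which is exactly the loss of the reduced instance $(A\odot\bar{\mS}_T,I^1_{\bar T},J^1_{\bar T})$. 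Since the two infima agree, $(\te X,\te Y)$ is not globally optimal for the reduced problem, and \cref{theorem:noSpuriousSimple} (applicable as $(I^1_{\bar T},J^1_{\bar T})$ fulfils the hypothesis of \cref{theorem:disjoint_totally_overlapping}) supplies a local descent direction $(\Delta\te X,\Delta\te Y)$ supported in $(I^1_{\bar T},J^1_{\bar T})$.

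Finally, I would lift this descent back to the original problem. Perturbing $(X,Y)$ by $\epsilon(\Delta\te X,\Delta\te Y)$ decreases the error on $\bar{\mS}_T$ at first order but injects an $O(\epsilon)$ error on $\mS_T$ through the cross terms $X\Delta\te Y^\top+\Delta\te X\,Y^\top$. I would cancel this by simultaneously updating the CEC blocks: since by \cref{lem: expressibility_two} the map $(X_T,Y_T)\mapsto X_T Y_T^\top$ covers every matrix supported on $\mS_T$, and its linearisation at the current CEC-full-rank iterate — using the full-row-rank factors for each $P\in\mathcal{P}^\star$ — is surjective onto matrices supported on $\mS_T$, one can pick $O(\epsilon)$ CEC corrections that kill the induced residual on $\mS_T$ while depending continuously on $\epsilon$ (continuity of the Moore--Penrose pseudoinverse on the open full-rank locus). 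Since CEC columns do not touch $\bar{\mS}_T$, these corrections preserve the first-order gain on $\bar{\mS}_T$, and the net change in $L$ is $-c\epsilon+O(\epsilon^2)<0$ for small $\epsilon>0$, contradicting local minimality. The main technical difficulty is the support bookkeeping justifying the reduction identity, together with the surjectivity of the linearised CEC map, the latter being subtle because different CECs $\mS_P,\mS_{P'}$ are not required to be disjoint under \cref{theorem: reduction_disjoint_overlapping}.
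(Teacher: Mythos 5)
Your proposal reproduces the paper's proof architecture faithfully: reduce to the CEC-full-rank case via \cref{lem:fullrankhypothesis}; establish that a CEC-full-rank local minimum must satisfy $(A-XY^\top)\odot\mS_T=\mathbf{0}$; and, given that identity, lift a strict improvement from the reduced instance $(A\odot\bar{\mS}_T,I^1_{\bar T},J^1_{\bar T})$ back to the full problem by correcting the CEC blocks. Within this shared skeleton you make two local substitutions. For the necessary condition, you fix $X$ and differentiate $L$ in $Y_{C_P,P}$ alone, using full row rank of $X_{R_P,P}$ to show the gradient cannot vanish when $(A-XY^\top)_{R_P,C_P}\neq 0$; the paper instead views $(X_P,Y_P)$ jointly as a sub-instance with a single (trivially disjoint) equivalence class and invokes \cref{theorem:noSpuriousSimple} plus expressivity of CECs (\cref{lem: expressibility_two}). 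Your version is slightly more elementary, though your phrase ``strictly convex quadratic'' is not literally true (when $|P|>|R_P|$ the map $Y'\mapsto X_{R_P,P}Y'^\top$ has a nontrivial kernel); all you need and actually use is that the gradient is nonzero. For the lifting step, you replace the paper's \cref{cor: bounded_operator} (an explicit Lipschitz bound on exact cancellation of the $\mS_T$ residual, built from the inductive construction in \cref{lem: continous_function_of_S2}) with surjectivity of the linearized CEC map and continuity of the pseudoinverse; this is a valid alternative but only gives first-order cancellation, so you must explicitly track that the leftover $\mS_T$-residual is $O(\epsilon^2)$ (contributing $O(\epsilon^4)$ to $L$) while the gain on $\bar{\mS}_T$ is at worst $\Theta(\epsilon^2)$ when $(\te X,\te Y)$ is a strict saddle of the reduced loss rather than a non-critical point — your stated ``$-c\epsilon+O(\epsilon^2)$'' implicitly assumes a nonzero reduced gradient and should be weakened accordingly. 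The paper's explicit corollary sidesteps both of these order-of-magnitude comparisons by achieving exact cancellation, and it does not require an implicit-function-theorem argument, which is why it is the cleaner route; but your version, once the bookkeeping above is inserted, is a correct proof along essentially the same lines.
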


When $(X,Y)$ is \emph{not} CEC-full-rank, the theorem guarantees that it is not a strict local minimum, since there is path starting from $(X,Y)$ with constant loss. This should however not be a surprise in light of \Cref{rem:nostrict}: indeed, the considered loss function admits no strict local minimum at all. Yet, the path with ``flat'' loss constructed in the theorem is fundamentally different from the ones naturally due to scale invariances of the problem and captured by  \Cref{rem:nostrict}. Further work would be needed to investigate whether this can be used to get a stronger result.
\begin{proof}[Proof sketch]
	To prove this theorem, we proceed through two main steps:
	\begin{itemize}
		\item[1)] First, we show that any local minimum satisfies:
		\begin{equation}
			\label{eq: special_critical_point}
			(A - XY^\top) \odot {\mS_T} = \mathbf{0}
		\end{equation}
		
		\item[2)] Second, we show that if a point $(X,Y)$ is CEC-full-rank and satisfies \Cref{eq: special_critical_point}, it cannot be a spurious local minimum. 
			\end{itemize}
	Combining the above to steps, we obtain as claimed that if a feasible pair $(X,Y)$ is CEC-full-rank, then it is not a spurious local minimum. Finally, if a feasible pair $(X,Y)$ is not CEC-full-rank, \cref{lem:fullrankhypothesis} yields a feasible path along which $L$ is constant that joins $(X,Y)$ to some feasible $(\te{X},\te{Y})$ which is CEC-full-rank, hence (as we have just shown) not a spurious local mimimum. 
	
	A complete proof is presented in \cref{sec:proof_4} of the supplementary material.
\end{proof}

Although \cref{theorem:nospuriousminimaComplex} does not exclude completely the existence of spurious local minima, together with \cref{theorem:noSpuriousSimple}, we eliminate a large number of such points.

\subsection{Absence of correlation between tractability and benign landscape}

So far, we have witnessed that the instances of \eqref{eq: fixed_supp_prob} satisfying the assumptions of \cref{theorem: reduction_disjoint_overlapping} are not only efficiently solvable using \cref{algorithm3}: they also have a landscape with no spurious local valleys and  favorable in terms of spurious local minima \cref{theorem:nospuriousminimaComplex}.
The question of interest is: Is there a link between such benign landscape and the tractability of the problem? Even if the natural answer could intuitively seem to be positive, as it is the case for the full support case, we prove that this conjecture is not true. We provide a counter example showing that tractability does not imply a benign landscape. First, we establish a sufficient condition for the \emph{existence} of a spurious local valley in \eqref{eq: fixed_supp_prob}.
\begin{theorem}
	\label{theorem: sufficient_existence_slv}
	Consider function $L(X,Y) = \|A - XY^\top\|^2$ in \eqref{eq: fixed_supp_prob}. Given two support constraints  
	 $I \in \{0,1\}^{m \times r}$, $J \in \{0,1\}^{n \times r}$,
	if there exist 
	$i_1 \neq i_2 \in \llbracket m \rrbracket, j_1 \neq j_2 \in \llbracket n \rrbracket$ and $k \in \llbracket r \rrbracket$ such that
	{$(i_2, j_2)$} belongs to at least $2$ rank-one supports, one of which is $\mc{S}_k$, and if {$(i_1, j_1), (i_2,j_1), (i_1,j_2)$} belong only to $\mc{S}_k$, then: 
	\begin{itemize}
		\item[1)] There exists $A$ such that: $L(X,Y)$ has a spurious local valley.	
		\item[2)] There exists $A$ such that: $L(X,Y)$ has a spurious local minimum.
	\end{itemize}
	{In both cases, $A$ can be chosen so that the global minimum of $L(X,Y)$ under the considered support constraints is achieved and is zero.}
\end{theorem}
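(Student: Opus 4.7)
The plan is to exhibit a single matrix $A$ witnessing both parts of the theorem. Take $A$ supported on $B = \{i_1, i_2\} \times \{j_1, j_2\}$ with $A|_B = \bigl(\begin{smallmatrix}1 & 0 \\ 0 & 1\end{smallmatrix}\bigr)$, and let $\mS_l \ni (i_2, j_2)$ with $l \neq k$ be the rank-one support guaranteed by the hypothesis. Since $(i_1, j_2)$ and $(i_2, j_1)$ belong only to $\mS_k$, writing $\mS_l = R_l \times C_l$ one has $i_1 \notin R_l$ and $j_1 \notin C_l$, so $\mS_l \cap B \subseteq \{(i_2, j_2)\}$. The global infimum $L = 0$ is attained at the feasible point given by $X_{i_1, k} = Y_{j_1, k} = 1$ and $X_{i_2, l} = Y_{j_2, l} = 1$, with all other entries of $X, Y$ set to zero: column $k$ supplies the entry $1$ at $(i_1, j_1)$ as a rank-one singleton, column $l$ supplies the entry $1$ at $(i_2, j_2)$, and all other positions of $XY^\top$ vanish, so $XY^\top = A$.

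For Part~2, consider the feasible point $(\bar X, \bar Y)$ with $\bar X_{i_2, k} = \bar Y_{j_2, k} = 1$ and all other entries zero; then $L(\bar X, \bar Y) = 1$ because only the residual at $(i_1, j_1)$ contributes. Using local coordinates $(a, b, c, d, u, v) = (X_{i_1, k}, X_{i_2, k}, Y_{j_1, k}, Y_{j_2, k}, X_{i_2, l}, Y_{j_2, l})$, all first-order partials vanish and the second-order Taylor expansion is $L - 1 = (\epsilon_a - \epsilon_c)^2 + (\epsilon_b + \epsilon_d)^2 + O(\epsilon^3)$, which is positive semidefinite with a four-dimensional kernel; parametrizing the kernel by $s = \epsilon_a = \epsilon_c$, $t = \epsilon_b = -\epsilon_d$, $p = \epsilon_u$, $q = \epsilon_v$, substitution into $L$ and simplification yield the exact closed form $L - 1 = s^2(s^2 + 2 t^2) + (pq - t^2)^2 \geq 0$. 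Perturbations in the remaining passive entries of $X, Y$ only add nonnegative squared contributions because at $(\bar X, \bar Y)$ those entries multiply zero factors and therefore affect $L$ only through positions where $A$ vanishes. Hence $(\bar X, \bar Y)$ is a local minimum with $L = 1 > 0 = \inf L$, which is spurious.

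For Part~1, apply \cref{lemma: spurious_local_valley} with $S_3 = \{(\bar X, \bar Y)\}$, $S_1 = L^{-1}(0)$, and $S_2 = L^{-1}(c)$ for some $c \in (1, L^\star)$, where $L^\star$ is a mountain-pass value between $(\bar X, \bar Y)$ and a global minimizer. Computing $L$ along the linear interpolation $(t, 1-t, t, 1-t, t, t)$ from $(\bar X, \bar Y)$ at $t=0$ to the global minimizer $(1, 0, 1, 0, 1, 1)$ at $t=1$ yields $L(t) = 1 + 4 t^2 - 12 t^3 + 7 t^4$, whose maximum on $[0, 1]$ is $L^\star = 1 + 224/2401 > 1$, attained at $t = 2/7$. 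Because $(\bar X, \bar Y)$ is a local minimum, the connected component of $L^{-1}((-\infty, c])$ containing $(\bar X, \bar Y)$ is disjoint from the component containing $S_1$ for $c < L^\star$, so every continuous path from $S_3$ to $S_1$ must cross $S_2$ by the intermediate value theorem. The strict inequalities $\inf_{S_2} L = c > 1 = \inf_{S_3} L > 0 = \inf_{S_1} L$ then yield the spurious local valley. The main obstacle is proving $(\bar X, \bar Y)$ is a genuine local minimum despite its Hessian being only PSD with a large kernel (including directions beyond the expected scale-invariances); the closed-form quartic expression $s^2(s^2 + 2 t^2) + (pq - t^2)^2$ along the kernel is the decisive technical step, as without this explicit fourth-order analysis one might mistakenly classify $(\bar X, \bar Y)$ as a higher-order saddle.
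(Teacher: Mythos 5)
Your proposal contains a genuine gap in Part~2, and Part~1 inherits the problem. The candidate point $(\bar X, \bar Y)$ with $\bar X_{i_2,k}=\bar Y_{j_2,k}=1$ and $A|_B=\bigl(\begin{smallmatrix}1&0\\0&1\end{smallmatrix}\bigr)$ is \emph{not} a local minimum. Your quartic analysis is carried out only on the four-dimensional kernel of the Hessian (the subspace $\epsilon_a=\epsilon_c$, $\epsilon_b=-\epsilon_d$), but you never rule out perturbations that mix kernel and non-kernel components, and these do break the claim. Take $\epsilon_a=\epsilon_c=\mu$, $\epsilon_b=\epsilon_d=-\eta$ (so $\epsilon_b+\epsilon_d=-2\eta\neq0$, outside the kernel), and $\epsilon_u=\epsilon_v=\sqrt{2\eta-\eta^2}$ (so the contribution to $(XY^\top)_{i_2,j_2}$ from column $l$ exactly fills the deficit left by column $k$). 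Then every residual outside the $2\times2$ block $B$ vanishes, the $(i_2,j_2)$ residual is exactly zero, and
\begin{equation*}
L = (1-\mu^2)^2 + 2\mu^2(1-\eta)^2 = 1 + \mu^2\bigl(\mu^2 - 4\eta + 2\eta^2\bigr),
\end{equation*}
which is strictly below $1$ as soon as $\mu^2 < 4\eta - 2\eta^2$. Concretely $\eta=0.01$, $\mu=0.1$ gives $L\approx 0.9997<1 = L(\bar X,\bar Y)$, and the perturbation tends to zero as $\eta\to 0$. The degenerate choice $a=b=1$ is exactly what the paper avoids: its matrix is $\operatorname{diag}(b,a)$ with $a>b>0$, and the slack $a-b>0$ is what makes the bound $L \ge 2(X_{i_2,k}Y_{j_2,k}-b)\,|X_{i_1,k}Y_{j_1,k}| + b^2$ a strict local-minimum certificate in a full neighborhood. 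With $a=b$ the inequality degenerates at the boundary and the point is only a boundary point of the plateau of $\sigma\mapsto\inf_{\delta=\sigma}L$, not an interior local minimum. This is precisely the kind of higher-order saddle your analysis was meant to exclude, and the explicit kernel quartic $s^2(s^2+2t^2)+(pq-t^2)^2$, while correct on the kernel, is not a sufficient condition for a local minimum when the Hessian is degenerate.

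Part~1 is therefore unsound as written: it relies on $(\bar X,\bar Y)$ being a local minimum, and also on an unjustified jump from a barrier along a single linear interpolation ($L^\star=1+224/2401$) to a topological separation of sublevel sets. Lemma~4.7 requires that \emph{every} path from $S_3$ to $S_1$ cross $S_2$; a maximum along one chosen path does not establish this. The paper instead slices the parameter space by the continuous scalar $\sigma=\sum_{p\neq 1}X_{i_2,p}Y_{j_2,p}$, computes the one-dimensional profile $g(\sigma)=\inf_{\sigma\text{ fixed}}L$ in closed form for the matrix $\bigl(\begin{smallmatrix}1&1\\1&0\end{smallmatrix}\bigr)$, and then applies the Intermediate Value Theorem to $\sigma$ to verify that every continuous path crosses the separating slice — a rigorous version of the mountain-pass step your argument gestures at but does not carry out. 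To repair the proposal you would need both (i) to restore the strict slack $a>b$ (or use the paper's asymmetric matrix $\bigl(\begin{smallmatrix}1&1\\1&0\end{smallmatrix}\bigr)$ for the valley), giving up the single-matrix unification, and (ii) to replace the mountain-pass heuristic with a genuine separating slice.
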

\begin{remark}
Note that the conditions of \Cref{theorem: sufficient_existence_slv} exclude these of \Cref{theorem:disjoint_totally_overlapping} and \Cref{theorem: reduction_disjoint_overlapping} (which is reasonable since the assumptions of \Cref{theorem:disjoint_totally_overlapping} and \Cref{theorem: reduction_disjoint_overlapping} rule out the possibility of spurious local valleys for any matrix $A$.).
\end{remark}


\begin{proof}
	Let $l \neq k$ be another rank-one contribution support $\mc{S}_l$ that contains $(i_1, j_1)$. Without loss of generality, we can assume $i_1 = j_1 = 1, i_2 = j_2 = 2$ and $k = 1, l = 2$. In particular, let $I' = J':= \{(1,1), (2,1), (2,2)\}$, then $I' \subseteq I, J' \subseteq J$. 	{When $m=n=2$, these are the support constraints for the $\mathbf{LU}$ decomposition.}
	\begin{itemize}
		\item[1)] We define the matrix $A$ by block matrices as:
		{\begin{equation}
			\label{eq:choiceA'slv}
			A = \begin{pmatrix}
				A' & \mathbf{0}\\
				\mathbf{0} & \mathbf{0}
			\end{pmatrix}, \text{ where }
			A' = \begin{pmatrix}
				1 & 1\\
				1 & 0
			\end{pmatrix} \in \mb{R}^{2 \times 2}.
		\end{equation}}
		The minimum of $L(X,Y) := \|A-XY^{\top}\|^{2}$ over feasible pairs is zero and it is attained at {$X = \bigl[\begin{smallmatrix} X' & \mbf{0} \\ \mbf{0} & \mbf{0} \end{smallmatrix}\bigr], Y = \bigl[\begin{smallmatrix} Y' & \mbf{0} \\ \mbf{0} & \mbf{0} \end{smallmatrix}\bigr]$ where $X' = \bigl[\begin{smallmatrix} 1 & 0 \\ 1 & 1 \end{smallmatrix}\bigr], Y' = \bigl[\begin{smallmatrix} 1 & 0 \\ 1 & -1 \end{smallmatrix}\bigr]$}. $(X,Y)$ is feasible since $\supp(X) = \supp(X') = I' \subseteq I, \supp(Y) = \supp(Y') = J' \subseteq J$. Moreover,
		\begin{equation}
			\label{eq:infattained}
			XY^\top = \begin{pmatrix}
				X'Y'^\top & \mbf{0}\\
				\mbf{0} & \mbf{0}
			\end{pmatrix} = \begin{pmatrix}
			A' & \mbf{0}\\
			\mbf{0} & \mbf{0}
		\end{pmatrix} = A.
		\end{equation}
		Using \cref{lemma: spurious_local_valley} we now prove that this matrix $A$ produces {a} spurious local 
		{valley} for $L(X,Y)$ with 
		{the considered support constraints $(I,J)$.}
		In fact, since {$(1, 1), (1, 2), (2,1)$} are only in $\mS_1$ and in no other support $\mS_{\ell}$, $\ell \neq 1$, one can easily check that for every feasible pair $(X,Y)$ we have: 
		{\begin{equation}\label{eq:SpecialAssumption}
		(XY^{\top})_{i,j} = X_{i,1}Y_{j,1},\quad \forall (i,j) \in \{(1,1),(1,2),(2,1)\}.
		\end{equation}}
		Thus, every feasible pair $(X^{\star},Y^{\star})$ reaching the global optimum $ \|A - X^{\star}(Y^{\star})^\top\| = 0$ must satisfy {$ X^{\star}_{1,1}Y^{\star}_{1,1} = X^{\star}_{2,1}Y^{\star}_{1,1} = X^{\star}_{1,1}Y^{\star}_{2,1} = 1$}. This implies {$X^{\star}_{2,1}Y^{\star}_{2,1} = (X^{\star}_{2,1}Y^{\star}_{1,1})(X^{\star}_{1,1}Y^{\star}_{2,1}) / (X^{\star}_{1,1}Y^{\star}_{1,1}) = 1$}. Moreover, such an optimum feasible pair {also satisfies $0 = A_{2,2}=(X^{\star}(Y^{\star})^{\top})_{2,2} = \sum_{p} X^{\star}_{2,p}Y^{\star}_{2,p}$}, hence
{$\sum_{p \neq 1} X^{\star}_{2,p}Y^{\star}_{2,p} = - X^{\star}_{2,1}Y^{\star}_{2,1} = -1$}. 
		
		To show the existence of a spurious local valley we use \cref{lemma: spurious_local_valley} and consider the set {$\te{S}_\sigma = \{(X,Y) \mid \supp(X) \subseteq I, \supp(Y) \subseteq J, \sum_{p \neq 1} X_{2,p}Y_{2,p} = \sigma\}$}. We will show that $S_1 := \te{S}_{-1}, S_2 := \te{S}_1, S_3 := \te{S}_5$ satisfy the assumptions of \cref{lemma: spurious_local_valley}.

		To compute $\inf_{(X,Y) \in S_i} L(X,Y)$, we study $g(\sigma) := \inf_{(X,Y) \in \te{S}_\sigma} L(X,Y)$. Denoting $Z = \bigl[\begin{smallmatrix} \mbf{1}_{2 \times 2} & \mbf{0} \\ \mbf{0} & \mbf{0} \end{smallmatrix}\bigr] \in \{0,1\}^{m \times n}$ we have:
		{\begin{equation*}
			\begin{aligned}
				g(\sigma) &= \inf_{(X,Y) \in \te{S}_\sigma} \|A - XY^\top\|^2\\
				&\geq \inf_{(X,Y) \in \te{S}_\sigma} \|(A - XY^\top) \odot Z\|^2\\
				&\stackrel{\eqref{eq:SpecialAssumption}}{=} \inf_{(X,Y) \in \te{S}_\sigma} \left\|\begin{pmatrix} A_{1,1} - X_{1,1}Y_{1,1} & A_{1,2} - X_{11}Y_{21}\\ A_{2,1} -X_{2,1}Y_{1,1} & A_{2,2} - \sigma - X_{2,1}Y_{2,1}\end{pmatrix}\right\|^2\\
				&= \inf_{X_{1,1},X_{2,1},Y_{1,1}, Y_{2,1}} \left\|\begin{pmatrix} 1 - X_{1,1}Y_{1,1} & 1 - X_{11}Y_{21}\\ 1 -X_{2,1}Y_{1,1} & -\sigma -X_{2,1}Y_{2,1}\end{pmatrix}\right\|^2\\
			\end{aligned}
		\end{equation*}}
		Besides~\Cref{eq:SpecialAssumption}, the third equality exploits the fact that {$(XY^{\top})_{2,2} = \sum_{p} X_{2,p}Y_{2,p} = X_{2,1}Y_{2,1}+\sigma$}. The last quantity is the loss of the best rank-one approximation of {$\te{A} = \bigl[\begin{smallmatrix} 1 & 1 \\ 1 & -\sigma \end{smallmatrix}\bigr] \in \mb{R}^{2 \times 2}$}. Since this is a $2\times 2$ symmetric matrix, its eigenvalues can be computed as the solutions of a second degree polynomial, leading to an analytic expression of this last quantity as: $\frac{2(\sigma+1)^2}{(\sigma^2+3) + \sqrt{(\sigma^2 + 3)^2 - 4(\sigma+1)^2}}$.
	
		Moreover, this infimum can be attained if $\bigl[X_{1,1},X_{2,1}\bigr]=\bigl[Y_{1,1}, Y_{2,1}\bigr]$ is the first eigenvector of $\te{A}$ and the other coefficients of $X,Y$ are set to zero. Therefore,
		\begin{equation}
			\label{eq: g(sigma)}
			g(\sigma) =  \frac{2(\sigma+1)^2}{(\sigma^2+3) + \sqrt{(\sigma^2 + 3)^2 - 4(\sigma+1)^2}}.
		\end{equation}	
		We can now verify that $S_{1},S_{2},S_{3}$ satisfy all the conditions of \cref{lemma: spurious_local_valley}.
		\begin{itemize}
			\item[1)] The minimum value of $L$ is zero. As shown above, it is only attained with {$\sum_{p \neq 1} X^{\star}_{2,p}Y^{\star}_{2,p} = -1$} as shown. Thus, the global minima belong to $S_{1} = \te{S}_{-1}$.
			
			\item[2)] For any feasible path {$r: [0,1] \to \mb{R}^{m \times r} \times \mb{R}^{n \times r}: t \to (X(t), Y(t))$ we have $\sigma_r(t) = \sum_{p \neq 1} X(t)_{2,p}Y(t)_{2,p}$} is also continuous. If $(X(0),Y(0)) \in S_3 = \te{S}_{5}$ and $(X(1), Y(1)) \in S_1 = \te{S}_{-1}$ then $\sigma_r(0) = 5$ and $\sigma_r(1) = -1$), hence by the Mean Value Theorem, there must exist $t \in (0,1)$ such that $\sigma_r(t) = 1$, which means $(X(t), Y(t)) \in S_2 = \te{S}_{1}$.
			
			\item[3)] Since one can check numerically that $g(1) > g(5) > g(-1)$, we have 
			\[
			\displaystyle\inf_{(X,Y) \in S_2} L{(X,Y)} > \inf_{(X,Y) \in S_3} L {(X,Y)} > \inf_{(X,Y) \in S_1} L{(X,Y)} .
			\]
		\end{itemize}
		The proof is concluded with the application of \cref{lemma: spurious_local_valley}. {In addition, any point $(X,Y)$ satisfying $\sigma = 5$ and $L(X,Y) < g(1) = 2$ is inside a spurious local valley. For example, one of such a point is $X = \bigl[\begin{smallmatrix} X' & \mbf{0} \\ \mbf{0} & \mbf{0} \end{smallmatrix}\bigr], Y = \bigl[\begin{smallmatrix} Y' & \mbf{0} \\ \mbf{0} & \mbf{0} \end{smallmatrix}\bigr]$ where $X' = \bigl[\begin{smallmatrix} 1 & 0 \\ -5 & 1 \end{smallmatrix}\bigr], Y' = \bigl[\begin{smallmatrix} -1/5 & 0 \\1 & 5 \end{smallmatrix}\bigr]$.}
		
		\item[2)] We define the matrix $A$ by block matrices as:
		{\begin{equation}
			\label{eq:choiceA'slm}
			A = \begin{pmatrix}
				A' & \mathbf{0}\\
				\mathbf{0} & \mathbf{0}
			\end{pmatrix}, \text{ where }
			A' = \begin{pmatrix}
				b & 0\\
				0 & a
			\end{pmatrix} \in \mb{R}^{2 \times 2}.
		\end{equation}}
		where $a > b > 0$. It is again evident that 
		{The infimum of $\|A - XY^\top\|^2$ under the considered support constraints is zero, and is achieved}
		 (taking $X = \bigl[\begin{smallmatrix} X' & \mbf{0} \\ \mbf{0} & \mbf{0} \end{smallmatrix}\bigr], Y = \bigl[\begin{smallmatrix} Y' & \mbf{0} \\ \mbf{0} & \mbf{0} \end{smallmatrix}\bigr]$ where {$X' = \bigl[\begin{smallmatrix} b & 0 \\ 0 & a \end{smallmatrix}\bigr], Y' = \bigl[\begin{smallmatrix} 1 & 0 \\0 & 1 \end{smallmatrix}\bigr]$} and with the same proof as in \Cref{eq:infattained}, we have $XY^\top = A$.)
		
		Now, we will consider {$\te{X} = \bigl[\begin{smallmatrix} X' & \mbf{0} \\ \mbf{0} & \mbf{0} \end{smallmatrix}\bigr],\te{Y} = \bigl[\begin{smallmatrix} Y' & \mbf{0} \\ \mbf{0} & \mbf{0} \end{smallmatrix}\bigr]$ where $X' = \bigl[\begin{smallmatrix} 0 & 0 \\ a & 0 \end{smallmatrix}\bigr], Y' = \bigl[\begin{smallmatrix} 0 & 0 \\1 & 0 \end{smallmatrix}\bigr]$}. Since $L(\te{X},\te{Y}) = b^2 > 0$ it cannot be a global minimum.  We will show that $(\te{X},\te{Y})$ is indeed a local minimum, which will thus imply that $(\te{X},\te{Y})$ is a spurious local minimum. For each feasible pair $(X,Y)$ we have:
		{\begin{equation*}
			\begin{split}
				&\|A - XY^\top\|^2
				= \sum_{i,j} (A_{i,j} - (XY^\top)_{i,j})^2\\
				&\geq(A_{1,1} -(XY^\top)_{1,1})^2 + (A_{2,1} -(XY^\top)_{2,1})^2 + (A_{1,2} -(XY^\top)_{1,2})^2\\
				&\stackrel{\eqref{eq:SpecialAssumption}}{=} (b - X_{1,1}Y_{1,1})^2 + (X_{2,1}Y_{1,1})^2 + (X_{1,1}Y_{2,1})^2\\
				&\geq (X_{1,1}Y_{1,1})^2 - 2bX_{1,1}Y_{1,1} + b^2  + 2(X_{2,1}Y_{2,1})|X_{1,1}Y_{1,1}|\\
				&\geq 2(X_{2,1}Y_{2,1} - b)|X_{1,1}Y_{1,1}| + b^2.
			\end{split}
		\end{equation*}}
	where in the third line we used that for $u = |X_{2,1}|Y_{11}$, $v = X_{11} |Y_{2,1}|$, since $(u-v)^{2} \geq 0$ we have $u^{2}+v^{2} \geq 2uv$. Since {$\te{X}_{2,1}\te{Y}_{2,1} = a > b$}, there exists a neighborhood of $(\te{X},\te{Y})$ such that {${X}_{2,1}{Y}_{2,1} - b > 0$ }for all $(X,Y)$ in that neighbourhood. Since {$|X_{1,1}Y_{1,1}| \geq 0$} in this neighborhood it follows that $\|A - XY^\top\|^2 \geq b^2 = L(\te{X},\te{Y}) > 0$ in that neighborhood. This concludes the proof.
	\end{itemize}		
\end{proof}

\begin{remark}
	\label{rem:divergencePALM}
	{\Cref{theorem: sufficient_existence_slv} is constructed based on the $\mbf{LU}$ structure. We elaborate our intuition on the technical proof of \Cref{theorem: sufficient_existence_slv} as follows: Consider the $\mbf{LU}$ decomposition problem of size $2 \times 2$ (i.e., $I = J = \{(1,1), (2,1), (2,2)\}$). It is obvious that such $(I,J)$ satisfies the assumptions of \Cref{theorem: sufficient_existence_slv} (for $i_1 = j_1 = 1, i_2 = j_2 = 2$). We consider three matrices of size $2 \times 2$:
	\begin{equation*}
		A_1 = \begin{pmatrix}
			1 & 1\\
			1 & 0
		\end{pmatrix}, \;A_2 = \begin{pmatrix}
			1 & 0\\
			0 & 2
		\end{pmatrix}, \;A_3 = \begin{pmatrix}
			0 & 1\\
			1 & 0
		\end{pmatrix}.
	\end{equation*}
	$A_1$ (resp. $A_2$) is simply the matrix $A'$ in \eqref{eq:choiceA'slv} (resp. in \eqref{eq:choiceA'slm}, with $a = 2, b = 1$) in the proof of \Cref{theorem: sufficient_existence_slv}. $A_3$ is a matrix which does not admit an $\mbf{LU}$ decomposition. We plot the graphs of $\displaystyle g_i(\sigma) = \inf_{X_{2,2}Y_{2,2} = \sigma} \|A_i - XY^\top\|$ (this is exactly $g(\sigma)$ introduced in the proof of \Cref{theorem: sufficient_existence_slv}) in \Cref{fig:gsigma}.
	\begin{figure}[bhtp]
		\centering
		\includegraphics[scale=0.35]{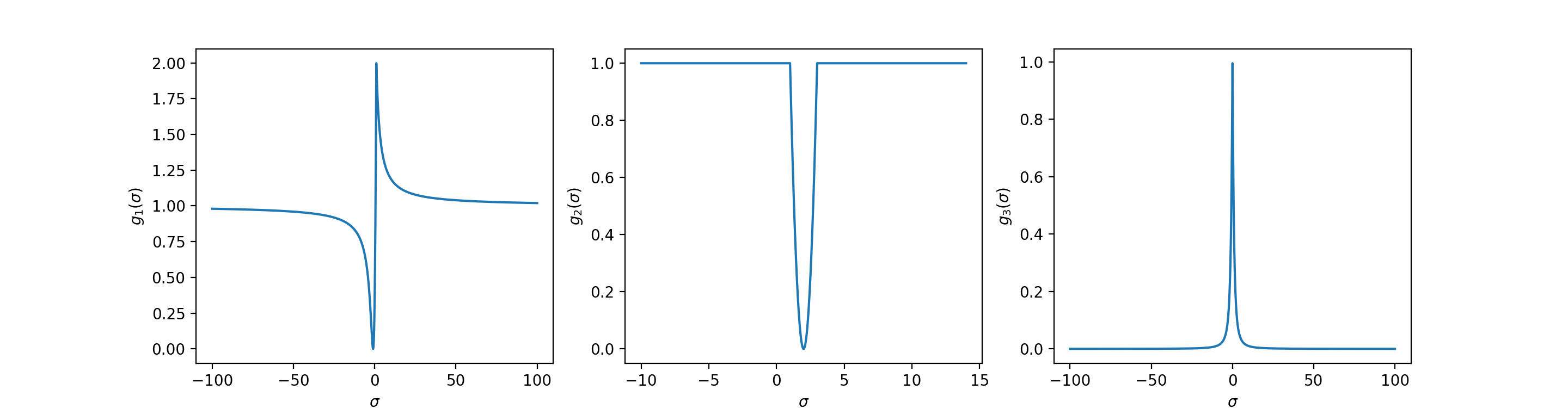}
		\caption{Illustration of the functions $g_i(\sigma), i = 1, 2, 3$ from left to right.}
		\label{fig:gsigma}
	\end{figure}}
	
	{
	In particular, the spurious local valley constructed in the proof of \Cref{theorem: sufficient_existence_slv} with $A_1$ is a spurious local valley extending to infinity. With $A_2$, one can see that $g_2(\sigma)$ has a plateau with value $1 = b^2$. The local minimum that we consider in the proof of \Cref{theorem: sufficient_existence_slv} is simply a point in this plateau (where $\sigma = 0$). Lastly, since the matrix $A_3$ does not admit an $\mbf{LU}$ decomposition, there is no optimal solution. Nevertheless, the infimum zero can be approximated with arbitrary precision when $\sigma$ tends to infinity (two valleys extending to $\pm \infty$).}
	
	{
	For the cases with the matrices $A_1$ and $A_3$, once initialized inside the valleys of their landscapes, any sequence $(X_k,Y_k)$ with sufficiently small steps associated to a decreasing loss $L(X_k, Y_k)$ will have the corresponding parameter $\sigma$ converging to infinity. As a consequence, at least one parameter of either $X_k$ or $Y_k$ has to diverge. This is thus a setting in which PALM (and other optimization algorithms which seek to locally decrease their objective function in a monotone way) can diverge. }
\end{remark}

We can now exhibit the announced counter-example to the mentioned conjecture:
{
\begin{remark}
	\label{ex:spuriousinstances}
	Consider the $\mbf{LU}$ decomposition as an instance of \eqref{eq: fixed_supp_prob} with $m = n =r$, $I = J = \{(i,j) \mid 1 \leq j \leq i \leq n\}$, taking $i_1 = j_1 = 1, i_2 = j_2 = 2$ shows that the $\mbf{LU}$ decomposition satisfies the condition of \Cref{theorem: sufficient_existence_slv}. Consequently, there exists a matrix $A$ such that the global optimum of $L(X,Y)$ is achieved (and is zero), yet the landscape of $L(X,Y)$ will have spurious objects. Nevertheless, a polynomial algorithm to compute the $\mbf{LU}$ decomposition exists \cite{LUexistence}. This example is in the same spirit of a recent result presented in  \cite{exponentialspurious}, where a polynomially solvable instance of Matrix Completion is constructed, whose landscape can have an exponential number of spurious local minima.
\end{remark}
}
The existence of spurious local valleys shown in \cref{theorem: sufficient_existence_slv} highlights the importance of initialization: if an initial point is already inside a spurious valley, first-order methods cannot escape this suboptimal area. An optimist may wonder if there nevertheless exist a smart initialization that avoids all spurious local valleys initially. The answer is positive, as shown in the following theorem.
\begin{theorem}
	\label{theorem:smartinit}
	Given any $I, J, A$ such that the infimum of \eqref{eq: fixed_supp_prob} is attained, 
	every initialization $(X, \mbf{0}), \supp(X) \subseteq I$ (or symmetrically $(\mathbf{0}, Y), \supp(Y) \subseteq J$) is not in any spurious local valley. In particular, $(\mbf{0},\mbf{0})$ is never in any spurious local valley.

\end{theorem}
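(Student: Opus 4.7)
The plan is to exhibit an explicit continuous feasible path from $(X,\mathbf{0})$ to a global minimizer along which $L$ is non-increasing. Once such a path exists, if $V$ is any local valley (a path-connected component of a sublevel set $E_{\alpha}$) containing $(X,\mathbf{0})$, then $\alpha \geq L(X,\mathbf{0}) = \|A\|^{2}$, the whole path lies in $E_{\alpha}$, and by maximality $V$ must contain the endpoint of the path, which is a global minimum; hence $V$ is not spurious.

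Let $(X^{\star},Y^{\star})$ be a global minimizer, which exists by assumption, and set $M := X^{\star}(Y^{\star})^{\top}$. The key preliminary fact I would establish is $\langle A,M\rangle = \|M\|^{2}$. This follows from the first-order optimality of $(X^{\star},Y^{\star})$ along the feasible rescaling direction $s\mapsto (sX^{\star},Y^{\star})$: support constraints are preserved under scalar multiplication, so $\left.\tfrac{d}{ds}\right|_{s=1} \|A - sM\|^{2} = -2\langle A,M\rangle + 2\|M\|^{2}$ must vanish.

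I would then construct the path in two stages. On $t\in[0,\tfrac12]$ set $f(t) = ((1-2t)X,\mathbf{0})$: this is feasible (supports are preserved under rescaling) and keeps $XY^{\top}=\mathbf{0}$, so $L\circ f \equiv \|A\|^{2}$ is constant. On $t\in[\tfrac12,1]$ set $f(t) = ((2t-1)X^{\star},(2t-1)Y^{\star})$, which is again feasible, and a direct computation using the optimality identity yields
\begin{equation*}
L(f(t)) \;=\; \|A\|^{2} + \|M\|^{2}\bigl((2t-1)^{4} - 2(2t-1)^{2}\bigr).
\end{equation*}
Writing $v := (2t-1)^{2}\in[0,1]$, the bracket equals $v^{2}-2v$, whose derivative $2(v-1)$ is non-positive on $[0,1]$; combined with the fact that $v$ increases with $t$ on $[\tfrac12,1]$, this shows $L\circ f$ is non-increasing on the second piece. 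The degenerate case $M=\mathbf{0}$ means the global infimum equals $\|A\|^{2}$, so $(X,\mathbf{0})$ is already globally optimal and the claim is trivial.

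The concatenation of the two pieces is the desired continuous feasible path from $(X,\mathbf{0})$ to $(X^{\star},Y^{\star})$ along which $L\leq \|A\|^{2}$, and the opening paragraph then yields the conclusion. The symmetric statement for $(\mathbf{0},Y)$ is obtained by swapping the roles of the two factors, and $(\mathbf{0},\mathbf{0})$ is a special case of either. The only subtle step is the optimality identity $\langle A,M\rangle = \|M\|^{2}$; otherwise this is an entirely explicit two-stage homotopy.
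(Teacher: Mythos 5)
Your proof is correct, and it shares the same overall architecture as the paper's: a two-stage feasible homotopy whose first stage is flat ($Y\equiv\mathbf{0}$, so $L\equiv\|A\|^2$) and whose second stage decreases the loss monotonically down to the global minimum, after which the standard sublevel-set argument closes the proof. Where you differ is in the second stage. The paper first moves $(X,\mathbf{0})$ linearly to $(X^{\star},\mathbf{0})$ (still flat), then holds $X^{\star}$ fixed and scales only $Y$: $f_2(t)=(X^{\star},(2t-1)Y^{\star})$. Then $L\circ f_2$ is a one-variable \emph{quadratic} in $s=2t-1$, convex, and globally minimized at $s=1$ simply because $(X^{\star},sY^{\star})$ is feasible for every $s$ and $(X^{\star},Y^{\star})$ is a global minimizer — so non-increasingness on $[\tfrac12,1]$ is automatic, with no need to extract the identity $\langle A,M\rangle=\|M\|^2$. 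You instead pass through $(\mathbf{0},\mathbf{0})$ and scale both factors by the same scalar, which makes $L\circ f$ a \emph{quartic} in $2t-1$; you then have to isolate and prove the first-order identity $\langle A,M\rangle=\|M\|^2$ and do the $v=(2t-1)^2$ substitution to see monotonicity. Both are correct; the paper's choice of path is a bit leaner because the quadratic-plus-optimal-endpoint argument absorbs the optimality condition implicitly, whereas your symmetric scaling forces you to make that condition explicit. Your version does have the small aesthetic advantage of treating $X$ and $Y$ symmetrically and passing through the natural base point $(\mathbf{0},\mathbf{0})$, and you correctly handle the degenerate case $M=\mathbf{0}$ and the continuity of the concatenation at $t=\tfrac12$.
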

\begin{proof}
	Let $(X^*,Y^*)$ be a minimizer of \eqref{eq: fixed_supp_prob}, which exists due to our assumptions. We only prove the result for the initialization $(X, \mathbf{0}), \supp(X) \subseteq I$. The case of the initialization $(\mathbf{0},Y)$, $\supp(Y) \subseteq J$ can be dealt with similarly. 
	
	To prove the theorem, it is sufficient to construct $f(t) = (X_f(t), Y_f(t) ): [0,1] \to \mb{R}^{m \times r} \times \mb{R}^{n \times r}$ as a feasible path such that:
	\begin{itemize}
		\item[1)] $f(0) = (X, \mathbf{0})$.
		\item[2)] $f(1) = (X^*, Y^*)$.
		\item[3)] $L \circ f$ is non-increasing w.r.t $t$.
	\end{itemize}
	Indeed, if such $f$ exists, the sublevel set corresponding to $L(X, \mathbf{0})$ has both $(X, \mathbf{0})$ and $(X^*, Y^*)$ in the same path-connected components (since $L \circ f$ is non-increasing).
	
	We will construct such a function feasible path $f$ as a concatenation of two functions feasible paths $f_1: [0, 1/2] \to \mb{R}^{m \times r} \times \mb{R}^{n \times r}, f_2: [1/2, 1] \to \mb{R}^{m \times r} \times \mb{R}^{n \times r}$, defined as follows:
	\begin{itemize}
		\item[1)] $f_1(t) = ((1 - 2t)X + 2tX^*, \mathbf{0})$.
		\item[2)] $f_2(t) = (X^*, (2t-1)Y^*)$.
	\end{itemize}
	It is obvious that $f(0) = f_1(0) = (X, \mathbf{0})$ and $f(1) = f_2(1) = (X^*, Y^*)$. Moreover $f$ is continuous since $f_1(1/2) = f_2(1/2) = (X^*, \mathbf{0})$. Also, $L \circ f$ is non-increasing on $[0,1]$ since:
	\begin{itemize}
		\item[1)] $L(f_1(t)) = \|A - ((1 - 2t)X + 2tX^*)\mbf{0}^\top\|^2 = \|A\|^2$ is constant for $t \in [0,1/2]$.
		\item[2)] $L(f_2(t)) = \|A - (2t-1)X^*Y^*\|^2$ is convex w.r.t $t$. Moreover, it attains a global minimum at $t = 1$ (since we assume that $(X^*, Y^*)$ is a global minimizer of \eqref{eq: fixed_supp_prob}). As a result, $t \mapsto L(f_2(t))$ is non-increasing on $[1/2,1]$.
	\end{itemize}
	
\end{proof}
Yet, such an initialization does not guarantee that first-order methods converge to a global minimum. Indeed, while in the proof of this result we do show that there exists a feasible path joining this ``smart'' initialization to an optimal solution without increasing the loss function, the value of the objective function is ``flat'' in the first part of this feasible path. Thus, even if such initialization is completely outside any spurious local valley, it is not clear whether local information at the initialization allows to ``guide'' optimization algorithms towards the global optimum to blindly find such a path. {In fact}, first-order methods are not bound to follow our constructive continuous path.

\section{Numerical illustration: landscape and behaviour of gradient descent}
\label{sec:experiment}
{
As a numerical illustration of the practical impact of our results, we compare the performance of \Cref{algorithm3} to other popular first-order methods on problem \eqref{eq: fixed_supp_prob}. 
}

{We consider two types of instances of $\eqref{eq: fixed_supp_prob}$: $I_1 = \mbf{1}_{2^a \times 2^a} \otimes \mbf{I}_{2^b\times 2^b}, J_1 = \mbf{I}_{2^a \times 2^a} \otimes \mbf{1}_{2^b \times 2^b}$ where $\otimes$ denotes the Kronecker product, $a = \lceil N/2 \rceil, b = \lfloor N/2 \rfloor$ (hence $a+b = N$) and $I_2 = \mbf{1}_{2 \times 2} \otimes \mbf{I}_{2^{N - 1}}, J_2 = \mbf{I}_2 \otimes \mbf{1}_{2^{N-1} \times 2^{N-1}}$. These supports are interesting because they are those taken at the first two steps of the hierarchical algorithm in \cite{papierICASSP,papierLeon} for approximating a matrix by a product of $N$ butterfly factors \cite{papierICASSP}. The first pair of support constraints $(I_1, J_1)$ is also equivalent to the recently proposed Monarch parameterization \cite{monarch}. Both pairs $(I_{1},J_{1})$ and $(I_{2},J_{2})$ are proved to satisfy \Cref{theorem:disjoint_totally_overlapping} \cite[Lemma 3.15]{papierLeon}.
}

{
\begin{figure}[bhtp]
	\label{fig:experment1}
	\includegraphics[scale=0.35]{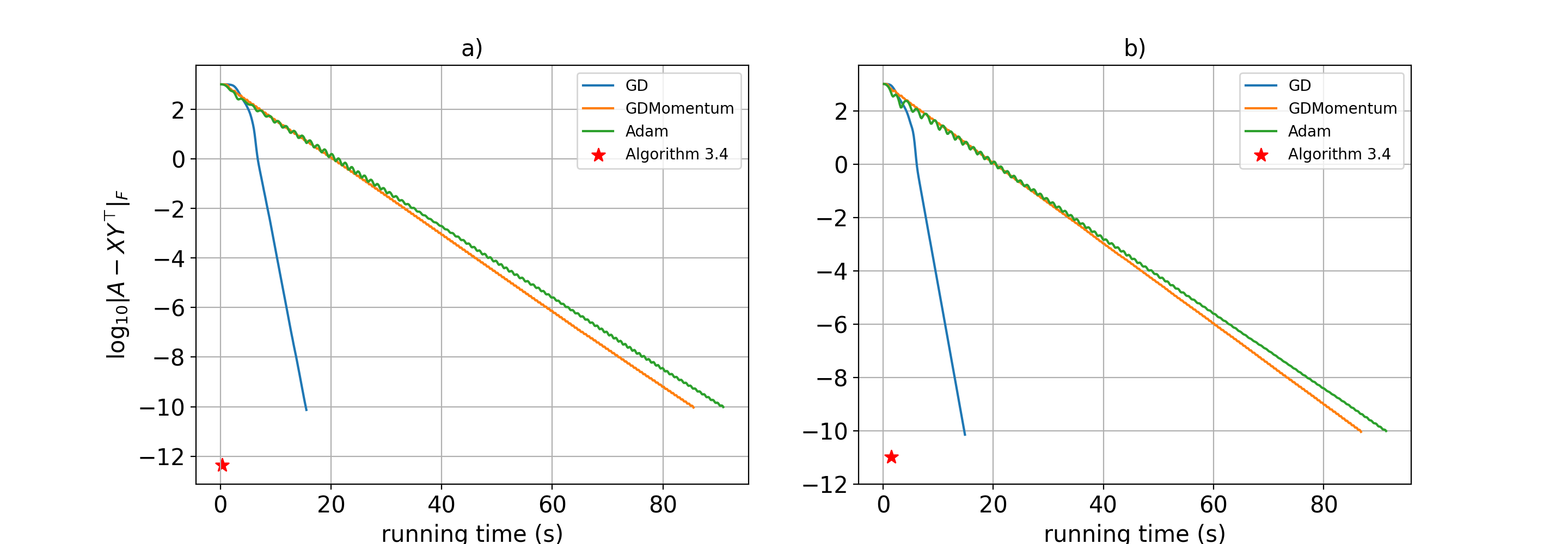}
	\caption{Evolution of $\log_{10} \|A - XY^\top\|_{F}$ for three variants of gradient descent and \Cref{algorithm3} with support constraints $(I_1,J_1)$ (left) and $(I_2, J_2)$ (right) for $N =10$.}
\end{figure}	
We consider $A$ as the Hadamard matrix of size $2^N \times 2^N$, which is known to admit an exact factorization with each of the considered support constraints, and we employ \Cref{algorithm3} to factorize $A$ in these two settings. We compare \Cref{algorithm3}  to three variants of gradient descent: vanilla gradient descent (GD), gradient descent with momentum (GDMomentum) and ADAM \cite[Chapter 8]{DeepLearning}. We use the efficient implementation of these iterative algorithms available in Pytorch 1.11. 
For each matrix size $2^{N}$, learning rates for iterative methods are tuned by grid search: we run all the factorizations with all learning rates in $\{5 \times 10^{-k}, 10^{-k} \mid k = 1, \ldots, 4\}$. Matrix $X$ (resp. $Y$) is initialized with i.i.d. random coefficients inside its support $I$ (resp. $J$) drawn according to the law $\mc{N}(0, 1/R_{I})$ (resp. $\mc{N}(0,1/R_{J})$) where $R_{I}, R_{J}$ are respectively the number of elements in each column of $I$ and of $J$. All these experiments are run on an Intel Core i7 CPU 2,3 GHz. In the interest of reproducible research, our implementation is available in open source \cite{codeSLV}. 
Since $A$ admits an exact factorization with both the supports $(I_1, J_1)$ and $(I_2, J_2)$, we set a threshold $\epsilon = 10^{-10}$ for these iterative algorithms (i.e if $\log_{10} (\|A - XY^\top\|_F) \leq -10$, the algorithm is terminated and considered to have found an optimal solution). This determines the running time for a given iterative algorithm for a given dimension $2^{N}$ and a given learning rate.  For each dimension $2^{N}$ we report the best running time over all learning rates. The reported running times do not include the time required for hyperparameters tuning.
\begin{figure}[bhtp]
	\label{fig:experment2}
	\includegraphics[scale=0.35]{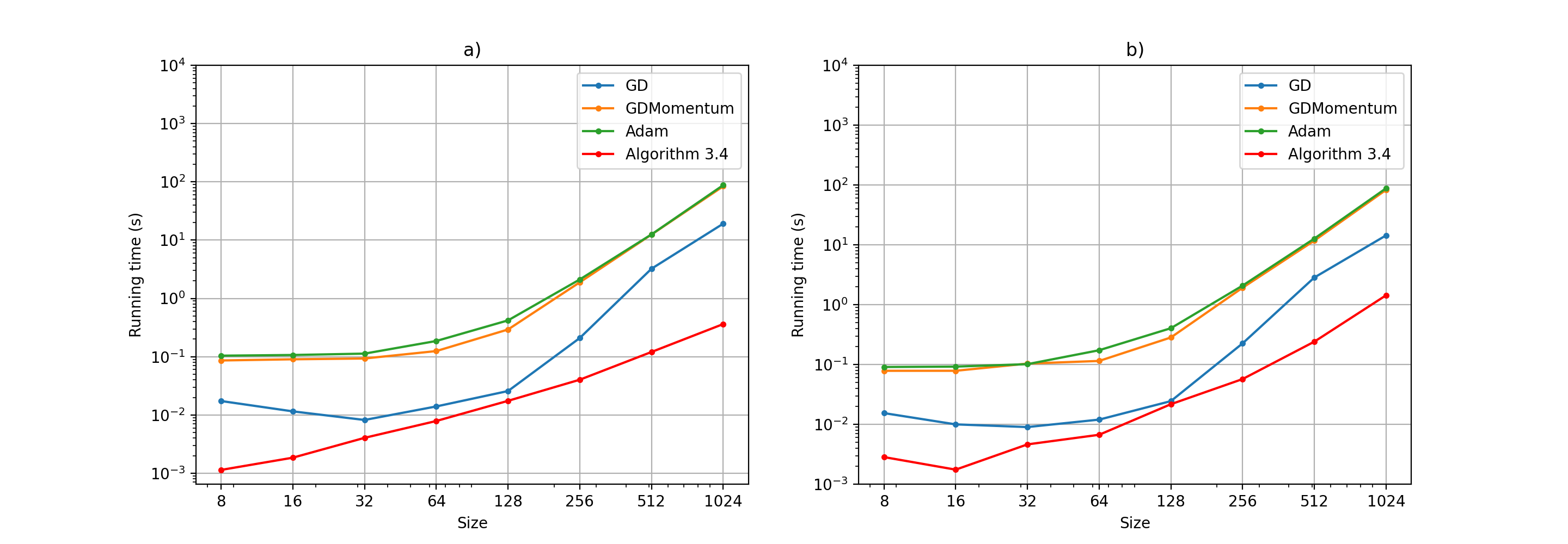}
	\caption{Running time (in logarithmic scale, contrary to Figure~\ref{fig:experment1}) of three variants of gradient descent and \Cref{algorithm3} to reach a precision $\log_{10}(\|A-XY^{\top}\|_{F}) \leq -10$; $N \in \{3, \ldots, 10\}$ with support constraints $(I_1, J_1)$ (left) and $(I_2, J_2)$ (right).}
\end{figure}
The experiments illustrated in Figure \ref{fig:experment1} for $N=10$ confirm our results on the landscape presented in \Cref{sec: landscape}: the assumptions of theorem \cref{theorem:disjoint_totally_overlapping} are satisfied so the landscape is benign and all variants of gradient descent are able to find a good factorization for $A$ from a random initialization}. 

{Figure \ref{fig:experment1} also shows that \Cref{algorithm3} is consistently better than the considered iterative methods 
in terms of running time, regardless of the size of $A$, cf. Figure \ref{fig:experment2}. A crucial advantage of \Cref{algorithm3} over gradient methods is also that it is free of hyperparameter tuning, which is critical for iterative methods to perform well, and may be quite time consuming (we recall that the time required for hyperparameters tuning of these iterative methods is \emph{not} considered in Figure  \ref{fig:experment2}). In addition, \Cref{algorithm3} can be further accelerated since its main steps (cf \Cref{algorithm0.5}) rely on block SVDs that can be computed in parallel (in these experiments, our implementation of \Cref{algorithm3} is not parallelized yet). Interested readers can find more applications of \Cref{algorithm3} on the problem of fixed-support multilayer sparse factorization in \cite{papierICASSP}.
}

\section{Conclusion}
In this paper, we studied the problem of two-layer matrix factorization with fixed support. We showed that 
this problem is NP-hard in general. Nevertheless, certain structured supports allow for an efficient solution algorithm. Furthermore, we also showed the non-existence of spurious objects in the landscape of function $L(X,Y)$ of \eqref{eq: fixed_supp_prob} with these support constraints. Although it would have seemed natural to assume an equivalence between tractability and benign landscape of \eqref{eq: fixed_supp_prob}, we also show a counter-example that contradicts this conjecture. That shows that there is still  room for improvement of the current tools (spurious objects) to characterize the tractability of an instance. We have also shown numerically {the advantages of the proposed algorithm over state-of-the-art first order optimization methods usually employed in this context.}
We refer the reader to \cite{papierICASSP} where we propose an extension of \Cref{algorithm1} to fixed-support multilayer sparse factorization and show the superiority of the resulting method in terms of both accuracy and speed compared to the state of the art \cite{DBLP:journals/corr/abs-1903-05895}.


\label{sec:conclusion}

\bibliographystyle{siamplain}
\bibliography{references}

\begin{thebibliography}{10}

\bibitem{HODLR}
{\sc J.~Ballani and D.~Kressner}, {\em Matrices with Hierarchical Low-Rank
  Structures}, vol.~2173, 01 2016,
  \url{https://doi.org/10.1007/978-3-319-49887-4_3}.

\bibitem{universalmatrixcompletion}
{\sc S.~Bhojanapalli and P.~Jain}, {\em Universal matrix completion}, in
  Proceedings of the 31st International Conference on Machine Learning, E.~P.
  Xing and T.~Jebara, eds., vol.~32 of Proceedings of Machine Learning
  Research, Bejing, China, 22--24 Jun 2014, PMLR, pp.~1881--1889,
  \url{https://proceedings.mlr.press/v32/bhojanapalli14.html}.

\bibitem{LRMR}
{\sc S.~Bhojanapalli, B.~Neyshabur, and N.~Srebro}, {\em Global optimality of
  local search for low rank matrix recovery}, in Advances in Neural Information
  Processing Systems, D.~Lee, M.~Sugiyama, U.~Luxburg, I.~Guyon, and
  R.~Garnett, eds., vol.~29, Curran Associates, Inc., 2016,
  \url{https://proceedings.neurips.cc/paper/2016/file/b139e104214a08ae3f2ebcce149cdf6e-Paper.pdf}.

\bibitem{bolte:hal-00916090}
{\sc J.~Bolte, S.~Sabach, and M.~Teboulle}, {\em {Proximal alternating
  linearized minimization for nonconvex and nonsmooth problems}}, {Mathematical
  Programming}, 146 (2014), pp.~459--494,
  \url{https://doi.org/10.1007/s10107-013-0701-9},
  \url{https://hal.inria.fr/hal-00916090}.

\bibitem{ConvexOptimization}
{\sc S.~Boyd and L.~Vandenberghe}, {\em Convex Optimization}, Cambridge
  University Press, USA, 2004.

\bibitem{candesFourier}
{\sc E.~Candès, L.~Demanet, and L.~Ying}, {\em A fast butterfly algorithm for
  the computation of fourier integral operators}, Multiscale Modeling \&
  Simulation, 7 (2009), pp.~1727--1750,
  \url{https://doi.org/10.1137/080734339},
  \url{https://doi.org/10.1137/080734339},
  \url{https://arxiv.org/abs/https://doi.org/10.1137/080734339}.

\bibitem{chen2022pixelated}
{\sc B.~Chen, T.~Dao, K.~Liang, J.~Yang, Z.~Song, A.~Rudra, and C.~Ré}, {\em
  Pixelated butterfly: Simple and efficient sparse training for neural network
  models}, in International Conference on Learning Representations, 2022,
  \url{https://openreview.net/forum?id=Nfl-iXa-y7R}.

\bibitem{localminimaanalysiskernelPCA}
{\sc J.~Chen and X.~Li}, {\em Model-free nonconvex matrix completion: Local
  minima analysis and applications in memory-efficient kernel {PCA}}, Journal
  of Machine Learning Research, 20 (2019), pp.~1--39,
  \url{http://jmlr.org/papers/v20/17-776.html}.

\bibitem{reviewLRMF}
{\sc Y.~Chi, Y.~M. Lu, and Y.~Chen}, {\em Nonconvex optimization meets low-rank
  matrix factorization: An overview}, Trans. Sig. Proc., 67 (2019),
  p.~5239–5269, \url{https://doi.org/10.1109/TSP.2019.2937282},
  \url{https://doi.org/10.1109/TSP.2019.2937282}.

\bibitem{monarch}
{\sc T.~Dao, B.~Chen, N.~Sohoni, A.~Desai, M.~Poli, J.~Grogan, A.~Liu, A.~Rao,
  A.~Rudra, and C.~Ré}, {\em Monarch: Expressive structured matrices for
  efficient and accurate training}, 2022,
  \url{https://doi.org/10.48550/ARXIV.2204.00595},
  \url{https://arxiv.org/abs/2204.00595}.

\bibitem{DBLP:journals/corr/abs-1903-05895}
{\sc T.~Dao, A.~Gu, M.~Eichhorn, A.~Rudra, and C.~Ré}, {\em Learning fast
  algorithms for linear transforms using butterfly factorizations}, in
  Proceedings of the 36th International Conference on Machine Learning,
  vol.~97, 2019, pp.~1517--1527,
  \url{http://proceedings.mlr.press/v97/dao19a.html}.

\bibitem{Dao2020Kaleidoscope:}
{\sc T.~Dao, N.~Sohoni, A.~Gu, M.~Eichhorn, A.~Blonder, M.~Leszczynski,
  A.~Rudra, and C.~Ré}, {\em Kaleidoscope: An efficient, learnable
  representation for all structured linear maps}, in International Conference
  on Learning Representations, 2020,
  \url{https://openreview.net/forum?id=BkgrBgSYDS}.

\bibitem{eckart1936approximation}
{\sc C.~Eckart and G.~Young}, {\em The approximation of one matrix by another
  of lower rank}, Psychometrika, 1 (1936), pp.~211--218,
  \url{https://doi.org/10.1007/BF02288367}.

\bibitem{NoLocalMinimaGeometryAnalysis}
{\sc R.~Ge, C.~Jin, and Y.~Zheng}, {\em No spurious local minima in nonconvex
  low rank problems: A unified geometric analysis}, in Proceedings of the 34th
  International Conference on Machine Learning - Volume 70, ICML'17, JMLR.org,
  2017, p.~1233–1242.

\bibitem{MatrixCompletionNoSLM}
{\sc R.~Ge, J.~D. Lee, and T.~Ma}, {\em Matrix completion has no spurious local
  minimum}, in Advances in Neural Information Processing Systems, D.~Lee,
  M.~Sugiyama, U.~Luxburg, I.~Guyon, and R.~Garnett, eds., vol.~29, Curran
  Associates, Inc., 2016,
  \url{https://proceedings.neurips.cc/paper/2016/file/7fb8ceb3bd59c7956b1df66729296a4c-Paper.pdf}.

\bibitem{WLRANPHard}
{\sc N.~Gillis and F.~Glineur}, {\em Low-rank matrix approximation with weights
  or missing data is {NP}-hard}, SIAM Journal on Matrix Analysis and
  Applications, 32 (2010), \url{https://doi.org/10.1137/110820361}.

\bibitem{SVDcompute}
{\sc G.~Golub and W.~Kahan}, {\em Calculating the singular values and
  pseudo-inverse of a matrix}, Journal of the Society for Industrial and
  Applied Mathematics: Series B, Numerical Analysis, 2 (1965), pp.~205--224,
  \url{http://www.jstor.org/stable/2949777}.

\bibitem{matrixcomputation}
{\sc G.~H. Golub and C.~F. Van~Loan}, {\em Matrix Computations}, The Johns
  Hopkins University Press, third~ed., 1996.

\bibitem{DeepLearning}
{\sc I.~J. Goodfellow, Y.~Bengio, and A.~Courville}, {\em Deep Learning}, MIT
  Press, Cambridge, MA, USA, 2016.
\newblock \url{http://www.deeplearningbook.org}.

\bibitem{Hackbusch1999ASM}
{\sc W.~Hackbusch}, {\em A sparse matrix arithmetic based on h-matrices. part
  i: Introduction to h-matrices}, Computing, 62 (1999), pp.~89--108.

\bibitem{Hackbusch2000ASH}
{\sc W.~Hackbusch and B.~N. Khoromskij}, {\em A sparse h-matrix arithmetic.
  part ii: Application to multi-dimensional problems}, Computing, 64 (2000),
  pp.~21--47.

\bibitem{NIPS2016_6112}
{\sc K.~Kawaguchi}, {\em Deep learning without poor local minima}, in Advances
  in Neural Information Processing Systems 29, 2016, pp.~586--594,
  \url{http://papers.nips.cc/paper/6112-deep-learning-without-poor-local-minima.pdf}.

\bibitem{algebraiccombLRMC}
{\sc F.~J. Kir\'{a}ly, L.~Theran, and R.~Tomioka}, {\em The algebraic
  combinatorial approach for low-rank matrix completion}, J. Mach. Learn. Res.,
  16 (2015), p.~1391–1436.

\bibitem{kumar2016literature}
{\sc N.~Kishore~Kumar and J.~Shneider}, {\em Literature survey on low rank
  approximation of matrices}, ArXiv preprint 1606.06511,  (2016).

\bibitem{papierICASSP21}
{\sc Q.~Le and R.~Gribonval}, {\em {Structured Support Exploration For
  Multilayer Sparse Matrix Factorization}}, in {ICASSP 2021 - IEEE
  International Conference on Acoustics, Speech and Signal Processing},
  Toronto, Ontario, Canada, June 2021, {IEEE}, pp.~1--5,
  \url{https://doi.org/10.1109/ICASSP39728.2021.9414238},
  \url{https://hal.inria.fr/hal-03132013}.
\newblock This paper is associated to code for reproducible research available
  at https://hal.inria.fr/hal-03572265.

\bibitem{codeSLV}
{\sc Q.~Le, R.~Gribonval, and E.~Riccietti}, {\em {Code for reproducible
  research - ''Spurious Valleys, NP-hardness, and Tractability of Sparse Matrix
  Factorization With Fixed Support''}}, May 2022,
  \url{https://hal.inria.fr/hal-03667186}.

\bibitem{papierICASSP}
{\sc Q.~Le, L.~Zheng, E.~Riccietti, and R.~Gribonval}, {\em {Fast learning of
  fast transforms, with guarantees}}, in {ICASSP 2022 - IEEE International
  Conference on Acoustics, Speech and Signal Processing}, Singapore, Singapore,
  May 2022, \url{https://hal.inria.fr/hal-03438881}.
\newblock This paper is associated to code for reproducible research available
  at https://hal.inria.fr/hal-03552956.

\bibitem{ChasingButterfly}
{\sc L.~{Le Magoarou} and R.~{Gribonval}}, {\em Chasing butterflies: In search
  of efficient dictionaries}, in 2015 IEEE International Conference on
  Acoustics, Speech and Signal Processing (ICASSP), 2015, pp.~3287--3291,
  \url{https://doi.org/10.1109/ICASSP.2015.7178579}.

\bibitem{lemagoarou:hal-01158057}
{\sc L.~Le~Magoarou and R.~Gribonval}, {\em {Flexible Multi-layer Sparse
  Approximations of Matrices and Applications}}, {IEEE Journal of Selected
  Topics in Signal Processing}, 10 (2016), pp.~688--700.

\bibitem{LRMO}
{\sc Q.~Li, Z.~Zhu, and G.~Tang}, {\em {The non-convex geometry of low-rank
  matrix optimization}}, Information and Inference: A Journal of the IMA, 8
  (2018), pp.~51--96, \url{https://doi.org/10.1093/imaiai/iay003},
  \url{https://doi.org/10.1093/imaiai/iay003},
  \url{https://arxiv.org/abs/https://academic.oup.com/imaiai/article-pdf/8/1/51/28053147/iay003.pdf}.

\bibitem{nguyen2019connected}
{\sc Q.~Nguyen}, {\em On connected sublevel sets in deep learning}, in
  Proceedings of the 36th International Conference on Machine Learning,
  vol.~97, 2019, pp.~4790--4799,
  \url{http://proceedings.mlr.press/v97/nguyen19a.html}.

\bibitem{nguyen2017loss}
{\sc Q.~Nguyen and M.~Hein}, {\em The loss surface of deep and wide neural
  networks}, 2017, \url{https://arxiv.org/abs/1704.08045}.

\bibitem{NumericalOptimization}
{\sc J.~Nocedal and S.~J. Wright}, {\em Numerical Optimization}, Springer,
  second~ed., 2006.

\bibitem{LUexistence}
{\sc P.~Okunev and C.~Johnson}, {\em Necessary and sufficient conditions for
  existence of the {LU} factorization of an arbitrary matrix}, arXiv preprint,
  (2005).

\bibitem{MaximumBicliqueNPComplete}
{\sc R.~Peeters}, {\em The maximum edge biclique problem is {NP}-complete},
  Discrete Appl Math, 131 (2000).

\bibitem{complexbestrankr}
{\sc Y.~Qi, M.~Michałek, and L.-H. Lim}, {\em Complex best r-term
  approximations almost always exist in finite dimensions}, Applied and
  Computational Harmonic Analysis, 49 (2020), pp.~180--207,
  \url{https://doi.org/https://doi.org/10.1016/j.acha.2018.12.003},
  \url{https://www.sciencedirect.com/science/article/pii/S1063520317301847}.

\bibitem{SparseFactorization}
{\sc R.~Rubinstein, A.~Bruckstein, and M.~Elad}, {\em Dictionaries for sparse
  representation modeling}, Proceedings of the IEEE, 98 (2010), pp.~1045 --
  1057, \url{https://doi.org/10.1109/JPROC.2010.2040551}.

\bibitem{DoubleSparseFactorization}
{\sc R.~Rubinstein, M.~Zibulevsky, and M.~Elad}, {\em Double sparsity: Learning
  sparse dictionaries for sparse signal approximation}, IEEE Transactions on
  Signal Processing, 58 (2010), pp.~1553 -- 1564,
  \url{https://doi.org/10.1109/TSP.2009.2036477}.

\bibitem{NMFRankNPhard}
{\sc Y.~Shitov}, {\em A short proof that {NMF} is {NP}-hard}, Arxiv preprint
  1605.04000,  (2016).

\bibitem{TensorRank}
{\sc V.~Silva and L.-H. Lim}, {\em Tensor rank and the ill-posedness of the
  best low-rank approximation problem}, SIAM Journal on Matrix Analysis and
  Applications, 30 (2006), p.~1084–1127,
  \url{https://doi.org/10.1137/06066518X}.

\bibitem{phaseretrieval}
{\sc J.~Sun, Q.~Qu, and J.~Wright}, {\em A geometric analysis of phase
  retrieval}, in 2016 IEEE International Symposium on Information Theory
  (ISIT), 2016, pp.~2379--2383,
  \url{https://doi.org/10.1109/ISIT.2016.7541725}.

\bibitem{DictionaryLearning}
{\sc I.~{Tošić} and P.~{Frossard}}, {\em Dictionary learning}, IEEE Signal
  Processing Magazine, 28 (2011), pp.~27--38,
  \url{https://doi.org/10.1109/MSP.2010.939537}.

\bibitem{trefethen1997numerical}
{\sc L.~N. Trefethen and D.~Bau~III}, {\em Numerical linear algebra}, vol.~50,
  SIAM, 1997.

\bibitem{NMFExactNPHard}
{\sc S.~A. Vavasis}, {\em On the complexity of nonnegative matrix
  factorization}, SIAM Journal on Optimization, 20 (2010), pp.~1364--1377,
  \url{https://doi.org/10.1137/070709967},
  \url{https://doi.org/10.1137/070709967},
  \url{https://arxiv.org/abs/https://doi.org/10.1137/070709967}.

\bibitem{venturi2020spurious}
{\sc L.~Venturi, A.~S. Bandeira, and J.~Bruna}, {\em Spurious valleys in
  one-hidden-layer neural network optimization landscapes}, Journal of Machine
  Learning Research, 20 (2019), pp.~1--34,
  \url{http://jmlr.org/papers/v20/18-674.html}.

\bibitem{exponentialspurious}
{\sc B.~Yalcin, H.~Zhang, J.~Lavaei, and S.~Sojoudi}, {\em Factorization
  approach for low-complexity matrix completion problems: Exponential number of
  spurious solutions and failure of gradient methods}, 2021,
  \url{https://doi.org/10.48550/ARXIV.2110.10279},
  \url{https://arxiv.org/abs/2110.10279}.

\bibitem{papierLeon}
{\sc L.~Zheng, R.~Gribonval, and E.~Riccietti}, {\em Identifiability in exact
  multilayer sparse matrix factorization}, CoRR, abs/2110.01230 (2021),
  \url{https://arxiv.org/abs/2110.01230},
  \url{https://arxiv.org/abs/2110.01230}.

\bibitem{DBLP:journals/corr/abs-1805-04938}
{\sc Z.~Zhu, D.~Soudry, Y.~C. Eldar, and M.~Wakin}, {\em The global
  optimization geometry of shallow linear neural networks}, Journal of
  Mathematical Imaging and Vision, 62 (2019), pp.~279--292.

\end{thebibliography}

\newpage
\appendix
\section{Proof of \cref{lem: NPhardness}}
\label{sm:NPhardness}
Up to a transposition, we can assume WLOG that $m \geq n$. We will show that with $r = n + 1 = \min(m,n) + 1$, we can find two supports $I$ and $J$ satisfying the conclusion of \cref{lem: NPhardness}.

To create an instance of \eqref{eq: fixed_supp_prob} (i.e., two supports $I,J$) that is \textit{equivalent} to \eqref{eq: WLRA1}, we define $I \in \{0,1\}^{m \times (n + 1)}$ and $J \in \{0,1\}^{n \times (n + 1)}$ as follows:
\begin{equation}
	\label{eq: support_hard}
	\begin{aligned}
		I_{i,j} &= \begin{cases}
			1 - W_{i,j} & \text{if } j \neq n\\
			1 & \text{if } j = n + 1
		\end{cases}, \;
		J_{i,j} &= \begin{cases}
			1 & \text{if } j = i \text{ or } j = n + 1\\
			0 & \text{otherwise}
		\end{cases}
	\end{aligned}
\end{equation}

\Cref{fig: support} illustrates an example of support constraints built from $W$.
\begin{figure}[h]
	\centering
	\includegraphics[scale=0.5]{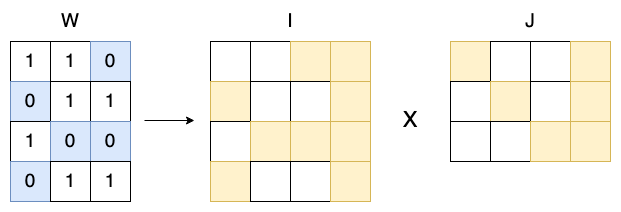}
	\caption{Factor supports $I$ and $J$ constructed from the weighted matrix $W \in \{0,1\}^{4 \times 3}$. Colored squares in $I$ and $J$ are positions in the supports.}
	\label{fig: support}
\end{figure}

We consider the \eqref{eq: fixed_supp_prob} with the same matrix $A$ and $I,J$ defined as in Equation \eqref{eq: support_hard}. This construction (of $I$ and $J$) can clearly be made in polynomial time. Consider the coefficients $(XY^\top)_{i,j}$:
\begin{itemize}
	\item[1)] If $W_{i,j} = 0$: $(XY^\top)_{i,j} = \sum_{k = 1}^{n+1} X_{i,k}Y_{j,k} = X_{i,j}Y_{j,j} + X_{i,n+1}Y_{j, n + 1}$ (except for $k = n + 1$, only $Y_{j,j}$ can be different from zero due to our choice of $J$).
	
	\item[2)] If $W_{i,j} = 1$: $(XY^\top)_{i,j} = \sum_{k = 1}^{n+1} X_{i,k}Y_{j,k} = X_{i,n + 1}Y_{j,n + 1}$ (same reason as in the previous case, in addition to the fact that $I_{i, j} = 1 - W_{i,j} = 0$). 	
\end{itemize}

Therefore, the following equation holds:
\begin{equation}
	\label{eq: xydotw}
	(XY^\top) \odot W = (X_\col{n+1}Y_\row{n + 1}^\top) \odot W
\end{equation}

We will prove that \eqref{eq: fixed_supp_prob} and \eqref{eq: WLRA1} share the same infimum\footnote{
	We focus on the infimum instead of minimum since there are cases where the infimum is not attained, as shown in \cref{rem: no_global_min}}. Let $\mu_1 = \inf_{x \in \mb{R}^m, y \in \mb{R}^n} \|A - xy^\top\|_W^2$ and $\mu_2 = \inf_{\supp(X) \subseteq I, \supp(Y) \subseteq J} \|A - XY^\top\|^2$. It is clear that $\mu_i \geq 0 > -\infty, i = 1,2$. Our objective is to prove $\mu_1 \leq \mu_2$ and $\mu_2 \leq \mu_1$.

\begin{itemize}
	\item[1)] Proof of $\mu_1 \leq \mu_2$: By definition of an infimum, for all $\mu > \mu_1$, there exist $x,y$ such that $\|A - xy^\top\|_W^2 \leq \mu$. We can choose $X$ and $Y$ (with $\supp(X)  \subseteq I, \supp(Y) \subseteq J$) as follows: we take the last columns of $X$ and $Y$ equal to $x$ and $y$ ($X_{\bullet, n+1} = x, Y_{\bullet, n + 1} = y$). For the \textit{remaining} columns of $X$ and $Y$, we choose:
	\begin{equation*}
		\begin{aligned}
			X_{i, j} &= A_{i, j} - x_iy_j &\text{ if } I_{i,j} = 1, j \leq n \\
			Y_{i, j} &= 1 & \text{ if } J_{i,j} = 1, j \leq n\\
		\end{aligned}
	\end{equation*}
	This choice of $X$ and $Y$ will make $\|A - XY^\top\|^2 = \|A - xy^\top\|_W^2 \leq \mu$. Indeed, for all $(i,j)$ such that $W_{i,j} = 0$, we have:
	\begin{equation*}
		\begin{split}
			(A - XY^\top)_{i,j} &= A_{i,j} - X_{i,j}Y_{j,j} - X_{i, n + 1}Y_{j,n + 1} = A_{i,j} - A_{i,j} + x_iy_j - x_iy_j = 0
		\end{split}
	\end{equation*}
	Therefore, it is clear that: $(A - XY^\top) \odot (\mathbf{1} - W) = \mathbf{0}$.
	\begin{equation*}
		\begin{split}
			\|A - XY^\top\|^2 &= \|(A - XY^\top) \odot W\|^2 + \|(A - XY^\top) \odot (\mbf{1} - W)\|^2\\
			&= \|(A - XY^\top) \odot W\|^2\\
			&\overset{\eqref{eq: xydotw}}{=} \|(A - X_\col{n+1}Y_\col{n + 1}^\top) \odot W\|^2 \\
			&= \|(A - xy^\top) \odot W\|^2\\ 
			&= \|A - xy^\top\|_W^2\\ 
		\end{split}
	\end{equation*}
	Therefore, $\mu_2 \leq \mu_1$.
	
	\item[2)] Proof of $\mu_1 \leq \mu_2:$ Inversely, for all $\mu > \mu_2$, there exists $X, Y$ satisfying $\supp(X)  \subseteq I, \supp(Y) \subseteq J$ such that $ \|A - XY^\top\|^2 \leq \mu$. We choose $x = X_\col{n+1}, y = Y_\col{n+1}$. It is immediate that: 
	\begin{equation*}
		\begin{split}
			\|A - xy^\top\|_W^2 &= \|(A - xy^\top) \odot W\|^2\\
			&= \|(A - X_\col{n+1}Y_\col{n+1}^\top) \odot W\|^2\\
			&\overset{\eqref{eq: xydotw}}{=} \|(A - XY^\top) \odot W\|^2 \\
			&\leq \|(A - XY^\top) \odot W\|^2 + \|(A - XY^\top) \odot (\mbf{1} - W)\|^2\\
			&= \|A - XY^\top\|^2
		\end{split}
	\end{equation*}
	Thus, $\|A - xy^\top\|_W^2 \leq \|A - XY^\top\|^2 \leq \mu$. We have $\mu_1 \leq \mu_2$.
\end{itemize}
This shows that $\mu_1 = \mu_2$. Moreover, the proofs of $\mu_1 \leq \mu_2$ and $\mu_2 \leq \mu_1$ also show the procedures to obtain an optimal solution of one problem with a given accuracy $\epsilon$ provided that we know an optimal solution of the other with the same accuracy.

\begin{remark}
	\label{rem: no_global_min}
	In the proof of \cref{lem: NPhardness}, we focus on the infimum instead of minimum since there are cases where the infimum is not attained. Indeed, consider the following instance of \eqref{eq: fixed_supp_prob} with: $A = \bigl[\begin{smallmatrix}	0 & 1 \\ 1 & 0 \end{smallmatrix}\bigr], \;I = \bigl[\begin{smallmatrix}	1 & 1 \\ 0 & 1 \end{smallmatrix}\bigr], \; J = \bigl[\begin{smallmatrix}	1 & 1 \\ 0 & 1 \end{smallmatrix}\bigr]$. The infimum of this problem is zero, which can be shown by choosing: $X_k = \bigl[\begin{smallmatrix} -k & k\\ 0 & \frac{1}{k} \end{smallmatrix}\bigr], \; Y_k = \bigl[\begin{smallmatrix}	k & k \\ 0 &  \frac{1}{k} \end{smallmatrix}\bigr]$.
	In the limit, when $k$ goes to infinity, we have:
	\begin{equation*}
		\lim_{k \to \infty} \|A - X_kY_k^\top\|^2 = \lim_{k \to \infty} \frac{1}{k^2} = 0.
	\end{equation*}
	Yet, there does not exist any couple $(X, Y)$ such that $\|A - XY^\top\|^2 = 0$. Indeed, any such couple would need to satisfy: $X_{1,2}Y_{2,2} = 1, X_{2,2}Y_{1,2} = 1, X_{2,2}Y_{2,2} = 0$. However, the third equation implies that either $X_{2,2} = 0$ or $Y_{2,2} = 0$, which makes either $X_{2,2}Y_{1,2} = 0$ or $X_{1,2}Y_{2,2} = 0$. This leads to a contradiction.
	
	In fact, $I$ and $J$ are constructed from the weight binary matrix $W = \bigl[\begin{smallmatrix}	0 & 1 \\ 1 & 1 \end{smallmatrix}\bigr]$ (the construction is similar to one in the proof of \cref{lem: NPhardness}). Problem \eqref{eq: WLRA1} with $(A, W)$ has unattainable infimum as well. {Note that this choice of $(I,J)$ also makes this instance of \eqref{eq: fixed_supp_prob} equivalent to the problem of $\mbf{LU}$ decomposition of matrix $A$.}
\end{remark}

\section{Proofs for \cref{sec:easyinstance}}
\label{app:resultsec3}
\subsection{Proof of \cref{lem: expressibility_two}}
\label{subapp:expressibilitytwo}
{Denote $\mc{P}$ the partition of $\intset{r}$ into equivalence classes defined by the rank-one supports associated to $(I,J)$, and $\mc{P}^{\star} \subseteq \mc{P}$ the corresponding CECs. Since $T \subseteq \intset{r}$ is precisely the set of indices of CECs, and since $I_{T}$ (resp. $J_{T}$) is the restriction of $I$ (resp. of $J$) to columns indexed by $T$, the partition of 
	$\intset{r}$ into equivalence classes \emph{w.r.t $(I_T,J_T)$} is precisely $\mc{P}^\star$, and for $P \in \mc{P} \backslash \mc{P}^*$, we have $\mS_P = \emptyset$. WLOG, we assume $\mc{P}^\star = \{P_i \mid 1 \leq i \leq \ell\}$}. Denote $\mc{P}_k = \{P_1, \ldots, P_k\}$, $\mS_{\mc{P}_k} = \cup_{1 \leq i \leq k} \mS_{P_i}$ for $1 \leq k \leq \ell$ and $\mS_{P_0} = \emptyset$. 
{We prove below that $(X,Y) = \text{SVD\_FSMF}(A,I_{T},J_{T})$ satisfies:}
\begin{equation}
	\label{eq:lemmaCEC}
	X_{P_k}Y_{P_k}^\top = A \odot (\mS_{\mc{P}_k} \setminus \mS_{\mc{P}_{k-1}}),  {\forall\ 1 \leq k \leq \ell,}
\end{equation} 
{which implies}: $XY^\top = \sum_{P \in \mc{P}^\star} X_PY_P^\top = \sum_{k = 1}^\ell A \odot (\mS_{\mc{P}_k} \setminus \mS_{\mc{P}_{k-1}}) = A \odot \mS_\ell = A \odot \mS_T = A$ (since we assume $\supp(A) = \mS_T$). {This yields the conclusion since $\supp(X) \subseteq I_{T}$ and $\supp(Y) \subseteq J_{T}$ by definition of $\text{SVD\_FSMF}(\cdot)$.}

We prove \Cref{eq:lemmaCEC} by induction {on $\ell$}. To ease the reading, in this proof, we denote $C_{P_k}, R_{P_k}$ (\cref{def:completeeqclass2}) by $C_k, R_k$ respectively.

For {$\ell=1$ it is sufficient to consider} $k = 1$: we have $\mS_{\mc{P}_1} \setminus \mS_{\mc{P}_0} = C_1 \times R_1$. 
{Since $\min(|R_1|, |C_1|) \leq |P_1|$ (\Cref{def:completeeqclass2}), taking the best rank-$|P_1|$ approximation of $A \odot (R_1 \times C_1)$ (whose rank is at most $\min(|R_1|, |C_1|)$) yields
	$X_{P_1}Y_{P_1}^\top = A \odot (R_1 \times C_1) = A \odot (\mS_{\mc{P}_1} \setminus \mS_{\mc{P}_0})$.}

Assume that \Cref{eq:lemmaCEC} holds for $\ell-1$. We prove its correctness {for} $\ell$. {Consider: $A' := A - \sum_{k < \ell}X_{P_k}Y_{P_k}^\top = A - A \odot \mS_{\mc{P}_{\ell-1}} = A \odot \bar{\mS}_{\mc{P}_{\ell-1}}$. Therefore, $A' \odot \mS_{P_{\ell}} = A \odot (\mS_{\mc{P}_\ell} \setminus \mS_{\mc{P}_{\ell-1}})$. Again, since $\min(|R_\ell|, |C_\ell|) \leq |P_\ell|$ (\Cref{def:completeeqclass2}), taking the best rank-$|P_\ell|$ approximation of $A' \odot \mS_{P_{\ell}} = A' \odot (R_\ell \times C_\ell)$ (whose rank is at most $\min(|R_\ell|, |C_\ell|)$)  yields $X_{P_\ell}Y_{P_\ell}^\top = A' \odot (R_\ell \times C_\ell) = A \odot (\mS_{P_\ell} \setminus \mS_{P_{\ell-1}})$. That implies \Cref{eq:lemmaCEC} is correct for all $\ell$.}

\subsection{Proof of \cref{theorem: reduction_disjoint_overlapping}}
\label{subapp: reduction_disjoint_overlapping}
First, we decompose the factors $X$ and $Y$ using the taxonomy of indices from \cref{def:taxonomy}.
\begin{definition}
	\label{def:taxonomy2}
	Given $I_T, J_T$ and $I_{\bar{T}}^i, J_{\bar{T}}^i, i = 1, 2$ as in \cref{def:taxonomy}, consider $(X,Y)$ a feasible point of \eqref{eq: fixed_supp_prob}, we denote:
	\begin{itemize}
		\item[1)] $X_T = X \odot I_T, X_{\bar{T}}^i = X \odot I_{\bar{T}}^i$, for $i=1,2$.
		
		\item[2)] $Y_T = Y \odot I_T, Y_{\bar{T}}^i = Y \odot I_{\bar{T}}^i$, for $i=1,2$.
	\end{itemize} 
	with $\odot$ the Hadamard product between a matrix and a support constraint (introduced in \cref{sec:notation}).
\end{definition}

The following is a technical result.
\begin{lemma}
	\label{lem:taxonomy}
	Given $I, J$ support constraints of \eqref{eq: fixed_supp_prob}, consider $T,\mathcal{S}_{T},\mathcal{S}_{\mathcal{P}}$ as in \cref{def:completeeqclass}, $X_T, X_{\bar{T}}^i, Y_T, Y_{\bar{T}}^i$ as in \cref{def: supp_outsideCEC} and assume that for all $k \in \bar{T}$, $\mS'_k$ is rectangular. It holds:
	\begin{enumerate}[label=\textbf{C\arabic*}]
		\item \label{eq:lemtax1}$\supp(X_TY_T^\top) \subseteq \mS_T$.
		\item \label{eq:lemtax2}$\supp(X_{\bar{T}}^1(Y_{\bar{T}}^1)^\top) \subseteq \mS_\mc{P} \setminus \mS_T$.
		\item \label{eq:lemtax3}$\supp(X_{\bar{T}}^i(Y_{\bar{T}}^j)^\top) \subseteq \mS_T, \forall 1 \leq i, j \leq 2, (i,j) \neq (1,1)$.
	\end{enumerate}
\end{lemma}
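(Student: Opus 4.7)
The plan is to expand each product as a sum of rank-one contributions $X_{\bullet,k}Y_{\bullet,k}^{\top}$ indexed by the columns $k\in\intset{r}$, and then to control the support of each contribution via the row/column partition of \cref{def:taxonomy}. For brevity, set $\ri{R}_{k}:=\supp(I_{\bullet,k})$ and $\ri{C}_{k}:=\supp(J_{\bullet,k})$, so that $\mS_{k}=\ri{R}_{k}\times\ri{C}_{k}$. By the rectangularity hypothesis, for each $k\in\bar T$ we can write $\mS'_{k}=R_{k}\times C_{k}$ with $R_{k}\subseteq\ri{R}_{k}$ and $C_{k}\subseteq\ri{C}_{k}$.

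\textbf{The easy parts (conclusions 1 and 2).} By construction, $(X_T)_{\bullet,k}=0$ for $k\notin T$, hence $X_T(Y_T)^{\top}=\sum_{k\in T} X_{\bullet,k}Y_{\bullet,k}^{\top}$, and each summand has support in $\mS_k\subseteq \mS_T$ by definition of $\mS_T=\bigcup_{k\in T}\mS_k$; this gives the first conclusion. For the second, $(X^1_{\bar T})_{\bullet,k}$ is zero for $k\in T$ and is supported in $R_k$ for $k\in\bar T$, and symmetrically $(Y^1_{\bar T})_{\bullet,k}$ is supported in $C_k$ for $k\in\bar T$, so the $k$-th rank-one contribution is supported in $R_k\times C_k=\mS'_k=\mS_k\setminus\mS_T$ for $k\in\bar T$; summing gives $\supp(X^1_{\bar T}(Y^1_{\bar T})^{\top})\subseteq \bigcup_{k\in\bar T}(\mS_k\setminus\mS_T)\subseteq\mS_{\mc{P}}\setminus\mS_T$.

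\textbf{The main step (conclusion 3).} For any $(i,j)\in\{(1,2),(2,1),(2,2)\}$ and any $k\in\bar T$, the $k$-th rank-one contribution of $X^i_{\bar T}(Y^j_{\bar T})^{\top}$ is supported in a rectangle $U\times V$ with $U\in\{R_k,\ri{R}_k\setminus R_k\}$ and $V\in\{C_k,\ri{C}_k\setminus C_k\}$, where at least one of $U,V$ is a complementary set (since $(i,j)\neq(1,1)$). The key identity, direct from the rectangular form of $\mS'_k$, is
\begin{equation*}
    \mS_k\setminus\mS'_k \;=\; \mS_k\setminus(\mS_k\setminus\mS_T) \;=\; \mS_k\cap\mS_T \;\subseteq\; \mS_T.
\end{equation*}
Any such rectangle $U\times V$ is contained in $\ri{R}_k\times\ri{C}_k=\mS_k$, and is disjoint from $R_k\times C_k=\mS'_k$ (because at least one coordinate lies in a complementary set), so $U\times V\subseteq\mS_k\setminus\mS'_k\subseteq\mS_T$. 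Summing over $k\in\bar T$ yields the claim.

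\textbf{Expected obstacle.} The proof is essentially combinatorial bookkeeping; the only substantive use of the hypotheses is that the rectangularity of $\mS'_k$ lets one identify, via the splits of \cref{def:taxonomy}, which rank-one block supports fall inside $\mS_T$ versus inside $\mS_{\mc{P}}\setminus\mS_T$. The main source of potential confusion is keeping consistent notation between the ``full'' column supports $\ri{R}_k,\ri{C}_k$ and the ``rectangular-cut'' supports $R_k,C_k$; no deeper machinery is needed.
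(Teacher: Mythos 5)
Your proof is correct and follows essentially the same route as the paper: expand each product into its rank-one contributions over $k$, then bound the support of each contribution using the column-support splits from \cref{def:taxonomy} and the rectangularity of $\mS'_k$. Your treatment of \ref{eq:lemtax3} is phrased a bit more directly (each rank-one block is contained in the rectangle $U\times V\subseteq\mS_k\cap\mS_T$, since one of $U,V$ is a complement of $R_k$ resp. $C_k$) whereas the paper shows the coefficients vanish on $\mS_{\mc{P}}\setminus\mS_T$ and then intersects with $\mS_{\mc{P}}$, but the underlying mechanism is identical.
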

\begin{proof}
	We justify \eqref{eq:lemtax1}-\eqref{eq:lemtax3} as follow:
	\begin{itemize}
		\item \ref{eq:lemtax1}: Since $X_TY_T^\top = \sum_{i \in T}X_\col{i}Y_\col{i}^\top$, $\supp(X_TY_T^\top) \subseteq \cup_{i \in T} \mS_k = \mS_T$.
		\item \ref{eq:lemtax2}: Consider the coefficient $(i,j)$ of $(X_{\bar{T}}^1)(Y_{\bar{T}}^1)^\top$
		\begin{equation*}
			((X_{\bar{T}}^1)(Y_{\bar{T}}^1)^\top)_{i,j} = \sum_k (X_{\bar{T}}^1)_{i,k} (Y_{\bar{T}}^1)_{j,k} = \sum_{(i,k) \in I_{\bar{T}}^1, (j,k) \in J_{\bar{T}}^1} X_{i,k} Y_{j,k}
		\end{equation*}
		By the definition of $I_{\bar{T}}^1, J_{\bar{T}}^1$, $(X_{\bar{T}}^1)(Y_{\bar{T}}^1)^\top_{i,j} \neq 0$ iff $(i,j) \in \cup_{\ell \in \bar{T}} R_\ell \times C_\ell = \mS_\mc{P} \setminus \mS_T$. 
		\item \ref{eq:lemtax3}: We prove for the case of $(X_{\bar{T}}^1)(Y_{\bar{T}}^2)^\top$. Others can be proved similarly.
		\begin{equation}
			\label{eq:cannotinPsi2}
			((X_{\bar{T}}^1)(Y_{\bar{T}}^2)^\top)_{i,j} = \sum_k  (X_{\bar{T}}^1)_{i,k} (Y_{\bar{T}}^2)_{j,k} = \sum_{(i,k) \in I_{\bar{T}}^1, (j,k) \in J_{\bar{T}}^2} X_{i,k} Y_{j,k}
		\end{equation}
		Since $\forall \ell \in \bar{T}, \mS'_\ell$ is rectangular, $\mS_\mc{P} \setminus \mS_T = \cup_{\ell \in \bar{T}} \mS'_\ell = \cup_{\ell \in \bar{T}} R_{\ell} \times C_{\ell}$. If $(i,j) \in \mS_\mc{P} \setminus \mS_T$, \Cref{eq:cannotinPsi2} shows that $((X_{\bar{T}}^1)(Y_{\bar{T}}^2)^\top)_{i,j} = 0$ since there is no $k$ such that $(i, k) \in I_{\bar{T}}^2, (j, k) \in J_{\bar{T}}^2$ due to the definition of $I_{\bar{T}}^1, J_{\bar{T}}^2$). Moreover,  $\supp((X_{\bar{T}}^1)(Y_{\bar{T}}^2)^\top) \subseteq \mS_\mc{P}$ (since $\supp(X_{\bar{T}}^1) \subseteq I, \supp(Y_{\bar{T}}^2) \subseteq J$). Thus, it shows that $\supp((X_{\bar{T}}^1)(Y_{\bar{T}}^2)^\top) \subseteq  \mS_\mc{P} \setminus (\mS_\mc{P} \setminus \mS_T) = \mS_T$. 
	\end{itemize}
\end{proof}

Here, we present the proof of \cref{theorem: reduction_disjoint_overlapping}.
\begin{proof}[Proof of \cref{theorem: reduction_disjoint_overlapping}]
	Given $X,Y$ feasible point of the input $(A,I,J)$, consider $X_T, Y_T, X_{\bar{T}}^i, Y_{\bar{T}}^i, i = 1,2$ defined as in \cref{def:taxonomy2}. Let $\mu_1$ and $\mu_2$ be the infimum value of \eqref{eq: fixed_supp_prob} with $(A, I, J)$ and with $(A', I_{\bar{T}}^1, J_{\bar{T}}^1)$ ($A' = A \odot {\bar{\mS}_T}$) respectively. 
	
	First, we remark that $I_{\bar{T}}^1$ and $J_{\bar{T}}^1$ satisfy the assumptions of \cref{theorem:disjoint_totally_overlapping}. Indeed, it holds $\mS_{k}(I_{\bar{T}}^1,J_{\bar{T}}^1) =\mS_k(I,J) \setminus \mS_T= \mS'_{k}$ by construction. For any two indices $k,l \in \bar{T}$, the representative rank-one supports are either equal ($\mS'_k = \mS'_l$) or disjoint ($\mS'_k \cap \mS'_l = \emptyset$) by assumption. That shows why $I_{\bar{T}}^1$ and $J_{\bar{T}}^1$ satisfy the assumptions of \cref{theorem:disjoint_totally_overlapping}.
	
	Next, we prove that $\mu_1 = \mu_2$. Since $ (\mS_{T},\mS_{\mathcal{P}}\setminus\mS_{T},\bar{\mS}_{\mathcal{P}})$ form a partition of $\llbracket m\rrbracket \times \llbracket n\rrbracket$, we have ${C} \odot {D} = \mathbf{0}$, $C \neq D, C,D \in \{\mS_{T},\mS_{\mathcal{P}}\setminus\mS_{T},\bar{\mS}_{\mathcal{P}}\}$. From the definition of $A'$ it holds $A' \odot {\bar{\mS}_\mc{P}} = A \odot {\bar{\mS}_\mc{P}}$ and $A' \odot {\mS_T} =\mathbf{0}$. Moreover, it holds $
	(X_{\bar{T}}^1)(Y_{\bar{T}}^1)^\top \odot {\mS_T \cup \bar{\mS}_\mc{P}} = \mathbf{0}$ due to \ref{eq:lemtax2}. 
	
	Since $\supp(X_T) \subseteq I_T, \supp(X_{\bar{T}}^i) \subseteq I_{\bar{T}}, \supp(Y_T) \subseteq J_T, \supp(Y_{\bar{T}}^i) \subseteq J_{\bar{T}}, i = 1, 2$, the product $XY^\top$ can be decomposed as:
	\begin{equation}
		\label{eq:decomposition_lemma}
		XY^\top = X_TY_T^\top + \sum_{1 \leq i,j \leq 2} (X_{\bar{T}}^i)(Y_{\bar{T}}^j)^\top.
	\end{equation}
	Consider the loss function of \eqref{eq: fixed_supp_prob} with input $(A', I_{\bar{T}}^1, J_{\bar{T}}^1)$ and solution $(X_{\bar{T}}^1, Y_{\bar{T}}^1)$:
	\begin{equation}
		\label{eq:equivalent1}
		\begin{aligned}
			&\|A'-X_{\bar{T}}^1(Y_{\bar{T}}^1)^\top\|^2\\
			&= \|(A'-X_{\bar{T}}^1(Y_{\bar{T}}^1)^\top) \odot {\mS_T}\|^{2}
			+ \|(A'-X_{\bar{T}}^1(Y_{\bar{T}}^1)^\top) \odot {(\mS_\mc{P} \setminus \mS_T)}\|^{2}\\
			&\qquad \qquad + \|(A'-X_{\bar{T}}^1(Y_{\bar{T}}^1)^\top) \odot {\bar{\mS}_\mc{P}}\|^{2}\\
			& \overset{\ref{eq:lemtax2}}{=} \|(A'-(X_{\bar{T}}^1)(Y_{\bar{T}}^1)^\top) \odot { \mS_\mc{P} \setminus \mS_T}\|^{2}+ \|A' \odot {\bar{\mS}_\mc{P}}\|^{2}\\
			&\overset{\ref{eq:lemtax1} + \ref{eq:lemtax3}}{=}  \|(A-X_TY_T^\top - \sum_{1 \leq i,j \leq 2} (X_{\bar{T}}^i)(Y_{\bar{T}}^j)^\top) \odot {(\mS_\mc{P} \setminus \mS_T)}\|^{2}+ \|A \odot {\bar{\mS}_\mc{P}}\|^{2}\\
			&\overset{\eqref{eq:decomposition_lemma}}{=}  \|(A-XY^\top) \odot {(\mS_\mc{P} \setminus \mS_T)}\|^{2}+ \|A \odot {\bar{\mS}_\mc{P}}\|^{2}\\
		\end{aligned}
	\end{equation}
	Perform the same calculation with $(A,I,J)$ and solution $(X,Y)$:
	\begin{equation}
		\label{eq:equivalent2}
		\begin{aligned}
			& \|(A-XY^\top)\|^{2}\\
			& = \|(A-XY^\top) \odot {\mS_T}\|^{2}
			+ \|(A-XY^\top) \odot {(\mS_\mc{P} \setminus \mS_T)}\|^{2} + \|(A-XY^\top) \odot {\bar{\mS}_\mc{P}}\|^{2}\\
			&= \|(A-XY^\top) \odot {\mS_T}\|^{2}
			+ \|(A-XY^\top) \odot {(\mS_\mc{P} \setminus \mS_T)}\|^{2} + \|A \odot {\bar{\mS}_\mc{P}}\|^{2}\\
		\end{aligned}
	\end{equation}
	where the last equality holds since $\supp(XY^\top) \subseteq \mS_\mc{P}$. Therefore, for any feasible point $(X,Y)$ of instance $(A, I, J)$, we can choose $\te{X} = X_{\bar{T}}^1, \te{Y} = Y_{\bar{T}}^1$ feasible point of $(A', I_{\bar{T}}^1, J_{\bar{T}}^1)$ such that $\|A - XY^\top\| \geq \|A' - \te{X}\te{Y}^\top\|$ (\Cref{eq:equivalent1} and \Cref{eq:equivalent2}). This shows $\mu_1 \geq \mu_2$.
	
	On the other hand, given any feasible point $(\te{X}, \te{Y})$ of instance $(A', I_{\bar{T}}^1, J_{\bar{T}}^1)$, we can construct a feasible point $(X, Y)$ for instance $(A, I, J)$ such that  $\|A - XY^\top\|^2 = \|A' - X'Y'^\top\|^2$. We construct $(X,Y) = (X_T + X_{\bar{T}}^1 + X_{\bar{T}}^2, Y_T + Y_{\bar{T}}^1 + Y_{\bar{T}}^2)$ where:
	\begin{itemize}
		\item[1)] $X_{\bar{T}}^1 = \te{X}, Y_{\bar{T}}^1 = \te{Y}$,
		
		\item[2)] $X_{\bar{T}}^2, Y_{\bar{T}}^2$ can be chosen arbitrarily such that $\supp(X_{\bar{T}}^2) \subseteq I_{\bar{T}}^2, \supp(Y_{\bar{T}}^2) \subseteq J_{\bar{T}}^2$
		
		\item[3)]  $X_T$ and $Y_T$ such that $\supp(X_T) \subseteq I_T, \supp(Y_T) \subseteq J_T$ and:
		\begin{equation*}
			X_TY_T^\top = (A - (X_{\bar{T}}^1 + X_{\bar{T}}^2)(Y_{\bar{T}}^1 + Y_{\bar{T}}^2)^\top) \odot {\mS_T}
		\end{equation*}
	\end{itemize}
	$(X_T,Y_T)$ exists due to \cref{lem: expressibility_two}. By \cref{lem:taxonomy}, with this choice we have:
	\begin{equation}
		\label{eq:conditionoptimal}
		\begin{split}
			(A - XY^\top) \odot {\mS_T} &\overset{\eqref{eq:decomposition_lemma}}{=} (A - (X_{\bar{T}}^1 + X_{\bar{T}}^2)(Y_{\bar{T}}^1 + Y_{\bar{T}}^2)^\top - X_TY_T^\top) \odot {\mS_T}\\
			&\overset{\ref{eq:lemtax1}}{=} (A - (X_{\bar{T}}^1 + X_{\bar{T}}^2)(Y_{\bar{T}}^1 + Y_{\bar{T}}^2)^\top) \odot {\mS_T}) - X_TY_T^\top =\mathbf{0}			
		\end{split}
	\end{equation}
	Therefore $\|A - XY^\top\|^2 = \|A' - \te{X}\te{Y}^\top\|^2$ (\Cref{eq:equivalent1} and \Cref{eq:equivalent2}). Thus, $\mu_2 \geq \mu_1$. We obtain $\mu_1 = \mu_2$. In addition, given $(X, Y)$ an optimal solution of \eqref{eq: fixed_supp_prob} with instance $(A, I, J)$, we have shown how to construct an optimal solution $(\te{X}, \te{Y})$ with instance $(A \odot {\bar{\mS}_T}, I_{\bar{T}}^1, J_{\bar{T}}^1)$ and vice versa. That completes our proof.
\end{proof}

The following Corollary is a direct consequence of the proof of \cref{theorem: reduction_disjoint_overlapping}.
\begin{corollary}
	\label{cor:optimalsol}
	With the same assumptions and notations as in \cref{theorem: reduction_disjoint_overlapping}, a feasible point $(X, Y)$ (i.e., such that $\supp(X) \subseteq I, \supp(Y) \subseteq J$) is an optimal solution of \eqref{eq: fixed_supp_prob} if and only if:
	\begin{itemize}
		\item[1)] $(X \odot I_{\bar{T}}^1, Y \odot J_{\bar{T}}^1)$ is an optimal solution of \eqref{eq: fixed_supp_prob} with $(A \odot {\bar{\mS}_T}, I_{\bar{T}}^1, J_{\bar{T}}^1)$. 
		
		\item[2)] The following equation holds:	$(A - XY^\top) \odot {\mS_T} = \mathbf{0}$
	\end{itemize}
\end{corollary}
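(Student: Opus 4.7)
The plan is to leverage directly the decompositions already established in the proof of \cref{theorem: reduction_disjoint_overlapping}, combine them into a single additive identity for $\|A-XY^\top\|^2$, and then read off the characterization of optimality from the non-negativity of each summand. No new technical machinery is needed; the corollary is essentially a restatement of what the proof of \cref{theorem: reduction_disjoint_overlapping} already establishes, made explicit by isolating the two contributions.

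First, I would combine \Cref{eq:equivalent1} and \Cref{eq:equivalent2} derived in the proof of \cref{theorem: reduction_disjoint_overlapping}. Subtracting the shared term $\|(A-XY^\top) \odot (\mS_\mc{P}\setminus \mS_T)\|^2 + \|A \odot \bar{\mS}_\mc{P}\|^2$ yields the clean additive identity
\begin{equation*}
\|A - XY^\top\|^2 = \|(A-XY^\top) \odot \mS_T\|^2 + \|A' - (X\odot I_{\bar T}^1)(Y \odot J_{\bar T}^1)^\top\|^2,
\end{equation*}
where $A' = A \odot \bar{\mS}_T$. The pair $(X \odot I_{\bar T}^1, Y \odot J_{\bar T}^1)$ is clearly feasible for the reduced instance $(A', I_{\bar T}^1, J_{\bar T}^1)$ by construction. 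By \cref{theorem: reduction_disjoint_overlapping}, the infimum $\mu_2$ of the reduced instance coincides with the infimum $\mu_1$ of the original. Hence on the right-hand side of the identity, the first summand is bounded below by $0$ and the second by $\mu_1$, so the whole right-hand side is bounded below by $\mu_1$.

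For the forward direction, suppose $(X,Y)$ is optimal, i.e.\ $\|A-XY^\top\|^2 = \mu_1$. Since the right-hand side is a sum of two non-negative quantities bounded below by $0$ and $\mu_1$ respectively, and the total equals $\mu_1$, each summand must attain its lower bound: the first summand vanishes, which is exactly condition (2), and the second equals $\mu_1=\mu_2$, which is exactly condition (1). Conversely, if both conditions hold, the identity gives $\|A - XY^\top\|^2 = 0 + \mu_2 = \mu_1$, so $(X,Y)$ is optimal. The main (minor) obstacle is simply bookkeeping: one must verify carefully that the manipulation used to pass from \Cref{eq:equivalent1} and \Cref{eq:equivalent2} to the additive identity is valid, which ultimately relies on the support containments \ref{eq:lemtax1}--\ref{eq:lemtax3} already proved in \cref{lem:taxonomy}.
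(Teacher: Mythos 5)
Your proof is correct and takes exactly the approach the paper intends: the paper itself gives no explicit proof of this corollary, stating only that it is ``a direct consequence of the proof of \cref{theorem: reduction_disjoint_overlapping},'' and your derivation makes that precise. Combining \Cref{eq:equivalent1} and \Cref{eq:equivalent2} into the clean additive identity $\|A-XY^\top\|^2 = \|(A-XY^\top)\odot\mS_T\|^2 + \|A'-(X\odot I_{\bar T}^1)(Y\odot J_{\bar T}^1)^\top\|^2$, and then reading off the characterization from the fact that $\mu_1=\mu_2$ and that the reduced problem's infimum is attained (via \cref{theorem:disjoint_totally_overlapping} and the blockwise SVD), is precisely the intended argument and is carried out correctly.
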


\begin{remark}
	\label{rem:optimalsol}
	In the proof of \cref{theorem: reduction_disjoint_overlapping}, for an optimal solution, one can choose $X_{\bar{T}}^2, Y_{\bar{T}}^2$ arbitrarily. If we choose $X_{\bar{T}}^2 = \mathbf{0}, Y_{\bar{T}}^2 = \mathbf{0}$, thanks to \cref{eq:conditionoptimal}, $X_T$ and $Y_T$ has to satisfy:
	\begin{equation*}
		X_TY_T^\top = (A - (X_{\bar{T}}^1 + X_{\bar{T}}^2)(Y_{\bar{T}}^1 + Y_{\bar{T}}^2)^\top) \odot {\mS_T} = (A - X_{\bar{T}}^1(Y_{\bar{T}}^1)^\top) \odot {\mS_T} \overset{\ref{eq:lemtax2}}{=} A \odot {\mS_T}
	\end{equation*}
\end{remark}

\section{Proofs for a key lemma}
In this section, we will introduce an important technical lemma. It is used extensively for the proof of the tractability and the landscape of \eqref{eq: fixed_supp_prob} under the assumptions of \cref{theorem: reduction_disjoint_overlapping}, {cf.  \cref{sec:proof_4}}.

\begin{lemma}
	\label{lem: continous_function_of_S2}
	Consider $I ,J$ support constraints of \eqref{eq: fixed_supp_prob} such that $\mc{P}^\star = \mc{P}$. For any CEC-full-rank feasible point $(X,Y)$ and continuous function $g: [0,1] \to \mb{R}^{m \times n}$ satisfying $\supp(g(t)) \subseteq \mS_T$ (\cref{def:completeeqclass2}) and $g(0) = XY^\top$, there exists a feasible continuous function $f: [0,1] \to  \mb{R}^{m \times r} \times \mb{R}^{n \times r}: f(t) = (X_f(t), Y_f(t))$ such that:
	\begin{enumerate}[label=\textbf{A\arabic*}]	
		\item \label{eq:lemS21}$f(0) = (X_T, Y_T)$.
		
		\item \label{eq:lemS22}$g(t) = X_f(t)Y_f(t)^\top, \forall t \in [0,1]$.
		
		\item \label{eq:lemS23}$\|f(z) - f(t)\|^2 \leq \mc{C}\|g(z) - g(t)\|^2,  \forall t,z \in [0,1]$.
	\end{enumerate}
	where $\mc{C} =  \underset{P \in \mc{P}^\star}{\max} \left(\max\left(\vertiii{X_{R_{P}, P}^\dagger}^2, \vertiii{Y_{C_{P}, P}^\dagger}^2\right)\right)$ ($D^\dagger$ and $\vertiii{D}$ denote the pseudo-inverse and operator norm of a matrix $D$ respectively ).
\end{lemma}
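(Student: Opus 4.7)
The plan is to construct $f$ block-by-block over the equivalence classes $\mc{P}^\star = \mc{P}$ of rank-one supports, leveraging CEC-full-rankness to invert one factor per block via a Moore--Penrose pseudo-inverse. First, I would fix an arbitrary ordering $P_1, \ldots, P_\ell$ of $\mc{P}$ and set $\Delta_k := \mS_{P_k} \setminus \bigcup_{j < k} \mS_{P_j}$. Since $\mc{P}^\star = \mc{P}$ forces $\mS_T = \bigcup_k \mS_{P_k}$, the family $\{\Delta_k\}_k$ partitions $\mS_T$. The key idea is to route the time-dependent update $g(t) - g(0)$ at each position $(i,j) \in \Delta_k$ entirely through the block indexed by $P_k$, so that pairwise-disjoint pieces of $g$ influence pairwise-disjoint variables of $f$.

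Concretely, I would define $h(t) := g(t) - g(0)$ (supported in $\mS_T$) and split it as $h(t) = \sum_{k = 1}^\ell h_k(t)$ with $h_k(t) := h(t) \odot \Delta_k$. For each $P_k$, by CEC-full-rankness, at least one of $X_{R_{P_k}, P_k}$ or $Y_{C_{P_k}, P_k}$ has full row rank; WLOG the former (the other case is symmetric, moving the perturbation to $X_f$ instead of $Y_f$). I would then set
\begin{equation*}
X_f(t)_{R_{P_k}, P_k} := X_{R_{P_k}, P_k}, \qquad Y_f(t)_{C_{P_k}, P_k}^\top := Y_{C_{P_k}, P_k}^\top + X_{R_{P_k}, P_k}^\dagger \, h_k(t)_{R_{P_k}, C_{P_k}},
\end{equation*}
with the remaining entries of $X_f(t), Y_f(t)$ (outside $I$ and $J$, respectively) set to zero. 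Continuity, feasibility, and the initial condition $f(0) = (X, Y) = (X_T, Y_T)$ (recalling that $T = \intset{r}$ under the lemma's hypothesis) would then follow immediately because $h_k(0) = \mbf{0}$.

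To verify the product identity, I would use $X_{R_{P_k}, P_k} X_{R_{P_k}, P_k}^\dagger = \mbf{I}_{|R_{P_k}|}$ (valid precisely because $X_{R_{P_k}, P_k}$ has full row rank) to conclude that
\begin{equation*}
X_f(t)_{\bullet, P_k} Y_f(t)_{\bullet, P_k}^\top = X_{\bullet, P_k} Y_{\bullet, P_k}^\top + h_k(t),
\end{equation*}
and then sum over $k$ to obtain $X_f(t) Y_f(t)^\top = XY^\top + h(t) = g(t)$. For the Lipschitz bound, the per-block inequality
\begin{equation*}
\|Y_f(t)_{C_{P_k}, P_k} - Y_f(z)_{C_{P_k}, P_k}\|^2 \le \vertiii{X_{R_{P_k}, P_k}^\dagger}^2 \, \|h_k(t) - h_k(z)\|^2,
\end{equation*}
combined with the partition property $\sum_k \|h_k(t) - h_k(z)\|^2 = \|g(t) - g(z)\|^2$, would give $\|f(t) - f(z)\|^2 \le \mc{C}\|g(t) - g(z)\|^2$ with the stated $\mc{C}$.

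The main subtlety, and what forces both the ordering and the $\Delta_k$-partition device, is that distinct rank-one supports $\mS_{P_k}$ may overlap, so the contribution $X_f(t)_{\bullet, P_k} Y_f(t)_{\bullet, P_k}^\top$ is naturally supported on the entire rectangular block $R_{P_k} \times C_{P_k}$, not only on $\Delta_k$. A naive scheme where each class absorbs its whole rank-one support would double-count overlaps and produce a Lipschitz constant summing per-block contributions, whereas the statement requires a maximum. Routing the time-dependent update $h(t)$ exclusively through $h_k(t) = h(t) \odot \Delta_k$, while keeping the baseline $X_{\bullet, P_k} Y_{\bullet, P_k}^\top$ untouched on overlapping positions, is exactly what isolates each block and produces the max-type constant $\mc{C}$.
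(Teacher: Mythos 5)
Your construction is correct and is, at its core, the same mechanism as the paper's: partition $\mS_T$ into pairwise-disjoint pieces, route the time-varying part of $g$ through exactly one block per piece, and invert the full-row-rank factor of that block via its pseudo-inverse to update the other factor. The paper realizes this by induction on $|\mc{P}|$, peeling off one class $P$ at a time and splitting $g$ into $h_1(t) = (g(t) - X_P Y_P^\top)\odot \mS_{\mc{P}'}$ (handled recursively) and $h_2(t)$ whose time-varying part lives on $\mS_P \setminus \mS_{\mc{P}'}$ (handled by the single-class base lemma); unrolling that recursion along the chosen order of classes produces exactly your first-hit sets $\Delta_k$. Your version writes the fixed point of this recursion in closed form, which makes the disjoint-partition device and the max-type (rather than sum-type) constant $\mc{C}$ visible at a glance, at the cost of losing the modular reuse of the single-class lemma that the paper relies on elsewhere. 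Both are valid; the bookkeeping (feasibility, the product identity via $X_{R_{P_k},P_k}X_{R_{P_k},P_k}^\dagger = \mbf{I}$, and the per-block operator-norm bound summed over the disjoint $\Delta_k$) checks out in your write-up.
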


\Cref{lem: continous_function_of_S2} consider the case where $\mc{P}$ only contains CECs. Later in other proofs, we will control the factors $(X,Y)$ by decomposing $X = X_T + X_{\bar{T}}$ (and $Y = Y_T + X_{\bar{T}}$) ($T, \bar{T}$ defined in \cref{def:completeeqclass2}) and manipulate $(X_T,Y_T)$ and $(X_{\bar{T}}, Y_{\bar{T}})$ separately. Since the supports of $(X_T,Y_T)$ satisfy \cref{lem: continous_function_of_S2}, it provides us a tool to work with $(X_T,Y_T)$.

The proof of \cref{lem: continous_function_of_S2} is carried out by induction. We firstly introduce and prove two other lemmas: \cref{lemma: continuous function adapt} and \cref{lem: continuous_function_of_S}. While \cref{lemma: continuous function adapt} is \cref{lem: continous_function_of_S2} without support constraints, \cref{lem: continuous_function_of_S} is \cref{lem: continous_function_of_S2} where $|\mc{P}^\star| = 1$.
\begin{lemma}
	\label{lemma: continuous function adapt}
	Let $X \in \mb{R}^{m \times r}, Y \in \mb{R}^{n \times r}, \min(m,n) \leq r$ and assume that $X$ or $Y$ has full row rank. Given any continuous function $g: [0,1] \to \mb{R}^{m \times n}$ in which $g(0) = XY^\top$, there exists a continuous function $f: [0,1] \to  \mb{R}^{m \times r} \times \mb{R}^{n \times r}: f(t) = (X_f(t), Y_f(t))$ such that:
	\begin{itemize}
		\item[1)] $f(0) = (X, Y)$.
		
		\item[2)] $g(t) = X_f(t)Y_f(t)^\top, \forall t \in [0,1]$.
		
		\item[3)] $\|f(z) - f(t)\|^2 \leq \mc{C}\|g(z) - g(t)\|^2, \forall t,z \in [0,1]$. 
	\end{itemize}
	where $\mc{C} =  \max\left(\vertiii{X^\dagger}^2, \vertiii{Y^\dagger}^2\right)$.
\end{lemma}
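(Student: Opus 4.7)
The statement is essentially a quantitative lifting lemma: given a continuous path of products $g(t)$ that starts at $XY^{\top}$, we must lift it to a continuous path of factor pairs that starts at $(X,Y)$ and is Lipschitz in $g$ with constant $\mathcal{C} = \max(\vertiii{X^{\dagger}}^{2}, \vertiii{Y^{\dagger}}^{2})$. The natural idea is to freeze the full-row-rank factor and let the other factor absorb the variations of $g(t)$ via the pseudo-inverse.

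By symmetry, I will treat the case where $X$ has full row rank (so $m \le r$ and $XX^{\dagger} = \mathbf{I}_{m}$); the case where $Y$ has full row rank is handled identically by transposing. I set
\begin{equation*}
X_{f}(t) := X, \qquad Y_{f}(t) := Y + \bigl(g(t) - g(0)\bigr)^{\top}(X^{\dagger})^{\top}.
\end{equation*}
Property 1 is immediate since $Y_{f}(0) = Y$. For Property 2, I compute $X_{f}(t)Y_{f}(t)^{\top} = XY^{\top} + XX^{\dagger}(g(t) - g(0)) = g(0) + (g(t) - g(0)) = g(t)$, where I used $XX^{\dagger} = \mathbf{I}_{m}$. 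Continuity of $f$ follows from continuity of $g$.

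For the Lipschitz bound, note $X_{f}(z) - X_{f}(t) = \mathbf{0}$ and $Y_{f}(z) - Y_{f}(t) = (g(z) - g(t))^{\top}(X^{\dagger})^{\top}$. Using the submultiplicativity of the Frobenius norm with respect to the operator norm, $\|BA\|_{F} \le \|B\|_{F} \vertiii{A}$, I get
\begin{equation*}
\|f(z) - f(t)\|^{2} = \|Y_{f}(z) - Y_{f}(t)\|^{2} \le \vertiii{(X^{\dagger})^{\top}}^{2} \|g(z) - g(t)\|^{2} = \vertiii{X^{\dagger}}^{2}\|g(z) - g(t)\|^{2},
\end{equation*}
which is bounded by $\mathcal{C}\|g(z) - g(t)\|^{2}$. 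In the symmetric case where $Y$ has full row rank, I set $Y_{f}(t) := Y$ and $X_{f}(t) := X + (g(t)-g(0))(Y^{\dagger})^{\top}$, and the same verification yields the constant $\vertiii{Y^{\dagger}}^{2} \le \mathcal{C}$.

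There is no real obstacle here: the hypothesis $\min(m,n)\le r$ is exactly what guarantees that at least one of the factors \emph{can} have full row rank, and once that assumption is in place the pseudo-inverse construction above gives all three properties immediately. The only point to double-check is that $X^{\dagger}$ (resp.\ $Y^{\dagger}$) is a genuine right-inverse under the full-row-rank assumption, which is standard.
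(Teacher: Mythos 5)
Your proof is correct and takes essentially the same approach as the paper: freeze the full-row-rank factor (WLOG $X$), absorb the variation of $g$ into the other factor via the pseudo-inverse $X^{\dagger} = X^{\top}(XX^{\top})^{-1}$, and use $XX^{\dagger}=\mathbf{I}_m$ for Property 2 and the operator-norm bound $\|X^{\dagger}(g(z)-g(t))\|_F \le \vertiii{X^{\dagger}}\|g(z)-g(t)\|_F$ for Property 3. Your $Y_f(t)=Y+(g(t)-g(0))^{\top}(X^{\dagger})^{\top}$ is exactly the paper's $Y+(X^{\dagger}(g(t)-g(0)))^{\top}$, so the two constructions coincide.
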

\begin{proof}
	WLOG, we can assume that $X$ has full row rank. We define $f$ as:
	\begin{equation}
		\label{eq:constructf(t)}
		\begin{aligned}
			X_f(t) &= X\\
			Y_f(t) & = Y + (g(t) - g(0))^\top(XX^\top)^{-1}X = Y + (X^\dagger(g(t) - g(0)))^\top
		\end{aligned}
	\end{equation}
	where $X^\dagger = X^\top(XX^\top)^{-1}$ the pseudo-inverse of $X$. The function $Y_f$ is well-defined due to the assumption of $X$ being full row rank. It is immediate for the first two constraints. Since $\|f(z) - f(t)\|^2 = \|Y_f(z) - Y_f(t)\|^2 = \|X^\dagger(g(z) - g(t))\|^2$, the third one is also satisfied as:
	\begin{equation*}
		\|f(z) - f(t)\|^2 =  \|X^\dagger(g(z) - g(t))\|^2 \leq \vertiii{X^\dagger}^2 \|g(z) - g(t)\|^2 \leq \mc{C}\|g(z) - g(t)\|^2
	\end{equation*}
\end{proof}

\begin{lemma}
	\label{lem: continuous_function_of_S}
	Consider $I, J$ support of \eqref{eq: fixed_supp_prob} where $\mc{P}^\star = \mc{P} = \{P\}$, for any feasible CEC-full-rank point $(X,Y)$ and continuous function $g: [0,1] \to \mb{R}^{m \times n}$ satisfying $\supp(g(t)) \subseteq \mS_P$ (\cref{def:completeeqclass}) and $g(0) = XY^\top$, there exists a feasible continuous function $f: [0,1] \to  \mb{R}^{m \times r} \times \mb{R}^{n \times r}: f(t) = (X_f(t), Y_f(t))$ such that:
	\begin{enumerate}[label=\textbf{B\arabic*}]
		\item \label{eq:lemS1} $f(0) = (X,Y)$.
		
		\item \label{eq:lemS2} $g(t) = X_f(t)Y_f(t)^\top, \forall t \in [0,1]$.
		
		\item \label{eq:lemS3} $\|f(z) - f(t)\|^2 \leq \mc{C}\|g(z) - g(t)\|^2$.
	\end{enumerate}
	where $\mc{C} =  \max\left(\vertiii{X_{R_P,P}^\dagger}^2, \vertiii{Y_{C_P,P}^\dagger}^2\right)$.
\end{lemma}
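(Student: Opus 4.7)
The plan is to reduce the statement directly to the unconstrained version, \cref{lemma: continuous function adapt}, by restricting everything to the ``active'' rows and columns determined by the single equivalence class $P$. Since $\mc{P} = \{P\}$ we have $P = \intset{r}$, and for every feasible pair $(X,Y)$ the support constraints force $\supp(X_\col{k}) \subseteq R_P$ and $\supp(Y_\col{k}) \subseteq C_P$ for every $k \in \intset{r}$. Consequently $X$ is fully encoded by its submatrix $X' := X_{R_P, P} \in \mb{R}^{|R_P| \times r}$ and $Y$ by $Y' := Y_{C_P, P} \in \mb{R}^{|C_P| \times r}$ (the remaining entries are zero). Moreover, the hypothesis $\supp(g(t)) \subseteq \mS_P = R_P \times C_P$ means $g(t)$ is fully encoded by $g'(t) := g(t)_{R_P, C_P} \in \mb{R}^{|R_P| \times |C_P|}$, and $g'$ inherits continuity from $g$ with $\|g'(z) - g'(t)\| = \|g(z) - g(t)\|$. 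The initial compatibility reads $g'(0) = (XY^\top)_{R_P, C_P} = X'(Y')^\top$.

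Next I would check that the hypotheses of \cref{lemma: continuous function adapt} are met by $(X', Y', g')$. The dimension requirement $\min(|R_P|, |C_P|) \leq r$ follows from the CEC condition $|P| \geq \min(|R_P|, |C_P|)$ together with $|P| = r$. The full-row-rank hypothesis is exactly the CEC-full-rank assumption on $(X,Y)$: either $X' = X_{R_P, P}$ or $Y' = Y_{C_P, P}$ is full row rank. Applying \cref{lemma: continuous function adapt} then produces a continuous path $f'(t) = (X'_f(t), Y'_f(t))$ with $f'(0) = (X', Y')$, $X'_f(t)(Y'_f(t))^\top = g'(t)$ for all $t$, and the Lipschitz-type bound $\|f'(z) - f'(t)\|^2 \leq \mc{C}\|g'(z) - g'(t)\|^2$ with the same constant $\mc{C}$.

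Finally, I would lift $f'$ back to the ambient space. Define $X_f(t) \in \mb{R}^{m \times r}$ to be the zero-padding of $X'_f(t)$, i.e.\ $(X_f(t))_{R_P, P} = X'_f(t)$ and zero elsewhere, and analogously for $Y_f(t)$. Then $\supp(X_f(t)) \subseteq I$ and $\supp(Y_f(t)) \subseteq J$ so the path is feasible. Condition \ref{eq:lemS1} holds since the zero-padding of $(X', Y')$ is $(X,Y)$. For \ref{eq:lemS2}, the product $X_f(t)(Y_f(t))^\top$ has support in $R_P \times C_P$ and its $(R_P, C_P)$-block equals $X'_f(t)(Y'_f(t))^\top = g'(t)$, so it agrees with $g(t)$ everywhere (both vanish outside $\mS_P$). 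Condition \ref{eq:lemS3} follows from the preservation of Frobenius norms under zero-padding: $\|f(z) - f(t)\|^2 = \|f'(z) - f'(t)\|^2 \leq \mc{C}\|g'(z) - g'(t)\|^2 = \mc{C}\|g(z) - g(t)\|^2$.

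I do not expect a serious obstacle here: the whole argument is essentially bookkeeping around the reduction to \cref{lemma: continuous function adapt}. The only points that require care are verifying the dimension inequality $\min(|R_P|, |C_P|) \leq r$ from the CEC definition and checking that the Lipschitz constant $\mc{C}$ for $(X', Y')$ coincides with the one stated in the lemma (which follows immediately because $X^\dagger$ and $Y^\dagger$ are determined by their blocks $X'^\dagger$ and $Y'^\dagger$ after zero-padding, so the operator norms are preserved).
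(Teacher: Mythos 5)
Your proof is correct and follows essentially the same route as the paper's: both reduce to \cref{lemma: continuous function adapt} by restricting $X$, $Y$, and $g$ to the active block $R_P \times C_P$ (the paper phrases this via block-matrix form, you via restriction and zero-padding, which are equivalent), check the dimension and full-row-rank hypotheses from the CEC and CEC-full-rank assumptions, and lift the resulting path back to the ambient space. The bookkeeping on norms and the constant $\mc{C}$ is handled the same way in both.
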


\begin{proof}
	WLOG, we assume that $P = \intset{|P|}, R_P = \intset{|R_P|}, C_P = \intset{|C_P|}$. Furthermore, we can assume $|P| \geq |R_P|$ and $X_{R_P,P}$ is full row rank (due to the hypothesis and the fact that $P$ is complete).
	
	Since $\mc{P}^\star = \mc{P} = \{P\}$, a continuous feasible function $f(t)$ must have the form: $X_f(t) = \bigl[\begin{smallmatrix} \te{X}_f(t) & \mathbf{0} \\ \mathbf{0} & \mathbf{0} \end{smallmatrix}\bigr]$ and $Y_f(t) = \bigl[\begin{smallmatrix} \te{Y}_f & \mathbf{0} \\ \mathbf{0} & \mathbf{0} \end{smallmatrix}\bigr]$ where $\te{X}_f: [0,1] \to \mb{R}^{|R_P| \times |P|}, \te{Y}_f: [0,1] \to \mb{R}^{|C_P| \times |P|}$ are continuous functions. $f$ is fully determined by $(\te{X}_f(t),\te{Y}_f(t))$.
	
	Moreover, if $g: [0,1] \to \mb{R}^{m \times n}$ satisfying $\supp(g(t)) \subseteq \mS_T$, then $g$ has to have the form: $g(t) = \bigl[\begin{smallmatrix} \te{g} & \mathbf{0} \\ \mathbf{0} & \mathbf{0} \end{smallmatrix}\bigr]$ where $\te{g}: [0,1] \to \mb{R}^{|R_P| \times |C_P|}$ is a continuous function. 
	
	Since $g(0) = XY^\top$, $\te{g}(0) = (X_{R_P,P})(Y_{C_P,P})^\top$. Thus, to satisfy each constraint \ref{eq:lemS1}-\ref{eq:lemS3}, it is sufficient to find $\te{X}_f$ and $\te{Y}_f$ such that:
	\begin{itemize}[label={}]
		\item \ref{eq:lemS1}: $\te{X}_f(0) = X_{R_P, P}, \te{Y}_f(0) = Y_{C_P, P}$.
		
		\item \ref{eq:lemS2}: $\te{g}(t) = \te{X}_f(t)\te{Y}_f(t)^\top, \forall t \in [0,1]$ because:
		\begin{equation*}
			X_f(t)Y_f(t)^\top = \begin{pmatrix}
				\te{X}_f(t)\te{Y}_f(t)^\top & \mathbf{0}\\
				\mathbf{0} & \mathbf{0}
			\end{pmatrix}
			= \begin{pmatrix}
				\te{g}(t) & \mathbf{0}\\
				\mathbf{0} & \mathbf{0}
			\end{pmatrix} = g(t)
		\end{equation*}
		\item \ref{eq:lemS3}: $\|X'(z) - X'(t)\|^2 + \|Y'(z) - Y'(t)\|^2 \leq  \mc{C}\|A'(z) - A'(t)\|^2$ since $\|X'_f(z) - X_f(t)\|^2 + \|Y'_f(z) - Y_f(t)\|^2 = \|f(z) - f(t)\|^2$ and $\|A'(z) - A'(t)\|^ = \|g(z) - g(t)\|^2 $.
	\end{itemize}
	Such function exists thanks \cref{lemma: continuous function adapt} (since we assume $X_{R_P,P}$ has full rank).
\end{proof}

\begin{proof}[Proof of \cref{lem: continous_function_of_S2}]
	We prove by induction on the size $\mc{P}$. By \cref{lem: continuous_function_of_S} the result is true if $|\mathcal{P}|=1$. Assume the result is true if $|\mathcal{P}| \leq p$. We consider the case where $|\mathcal{P}|=p+1$. Let $P \in \mc{P}$ and partition $\mc{P}$ into $\mc{P}' = \mc{P} \setminus \{P\}$ and $\{P\}$. Let $T' = \cup_{P' \in \mc{P}'} P' = T \setminus P$. Since $|\mc{P}'| = p$, we can use induction hypothesis. Define:
	\begin{equation*}
		h_1(t) = (g(t) - X_PY_P^\top) \odot {\mc{S}_{\mc{P}'}}, \qquad h_2(t) = X_PY_P^\top \odot {\mc{S}_{\mc{P}'}} + g(t) \odot {\mc{S}_P \setminus \mc{S}_{\mc{P}'}}
	\end{equation*}
	We verify that the function $h_1(t)$ satisfying the hypotheses to use induction step: $h_1$ continuous, $\supp(h_1(t)) \subseteq \mc{S}_{\mc{P}'}$ and finally $h_1(0) = (g(0) - X_PY_P^\top) \odot {\mc{S}_{\mc{P}'}} = X_{T'}Y_{T'}^\top \odot {\mc{S}_{\mc{P}'}} = X_{T'}Y_{T'}^\top$. Using the induction hypothesis with $\mc{P}'$, there exists a function $f_1:[0,1] \to \mb{R}^{m \times r} \times \mb{R}^{n \times r}: f_1(t) = (X_f^1(t), Y_f^1(t))$ such that:
	\begin{itemize}
		\item[1)] $\supp(X_f^1(t)) \subseteq I_{T'}, \supp(Y_f^1(t)) \subseteq J_{T'}$.
		
		\item[2)] $f_1(0) = (X_{T'}, Y_{T'})$.
		
		\item[3)] $h_1(t) = X_f^1(t)Y_f^1(t)^\top, \forall t \in [0,1]$.
		
		\item[4)] $\|f_1(z) - f_1(t)\|^2 \leq \mc{C}'\|h_1(z) - h_1(t)\|^2$.
	\end{itemize}
	where $\mc{C}' =  \underset{P' \in \mc{P}'}{\max} \left(\max\left(\vertiii{X_{R_{P'}, P'}^\dagger}^2, \vertiii{Y_{C_{P}, P}^\dagger}^2\right)\right)$. 
	
	On the other hand, $h_2(t)$ satisfies the assumptions of \cref{lem: continuous_function_of_S}: $h_2(t)$ is continuous and $\supp(h_2(t)) = \supp(X_PY_P^\top \odot {\mc{S}_\mc{P'}} + g(t) \odot {\mc{S}_P \setminus \mc{S}_\mc{P'}}) \subseteq \supp(X_PY_P^\top) \cup (\mc{S}_P \setminus \mc{S}_\mc{P'}) = \mS_P$. 
	
	In addition, since $g(0) \odot {\mc{S}_P \setminus \mc{S}_\mc{P'}} = (XY^\top) \odot {\mc{S}_P \setminus \mc{S}_\mc{P'}} = (X_{T'}Y_{T'}^\top + X_PY_P^\top) \odot {\mc{S}_P \setminus \mc{S}_\mc{P'}} = X_PY_P^\top \odot {\mc{S}_P \setminus \mc{S}_\mc{P'}}$, we have $h_2(0) = X_PY_P^\top \odot {\mc{S}_\mc{P'}} + g(0) \odot {\mc{S}_P \setminus\mc{S}_\mc{P'}} = X_PY_P^\top \odot ({\mc{S}_\mc{P'}} + {\mc{S}_P \setminus \mc{S}_\mc{P'}}) = X_PY_P^\top$. Invoking \cref{lem: continuous_function_of_S} with the singleton $\{P\}$, there exists a function $(X_f^2(t), Y_f^2(t))$ such that:
	\begin{itemize}
		\item[1)] $\supp(X_f^2(t)) \subseteq I_P, \supp(Y_f^2(t)) \subseteq J_P$.
		
		\item[2)] $f_2(0) = (X_{P}, Y_{P})$.
		
		\item[3)] $h_2(t) = X_f^2(t)Y_f^2(t)^\top, \forall t \in [0,1]$.
		
		\item[4)] $\|f_2(z) - f_2(t)\|^2 \leq \max\left(\vertiii{X_{R_{P}, P}^\dagger}^2, \vertiii{Y_{C_{P}, P}^\dagger}^2\right) \|h_2(z) - h_2(t)\|^2$.
	\end{itemize}
	We construct the functions $f(t) = (X_f(t), Y_f(t))$ as:
	\begin{equation*}
		X_f(t) = X_f^1(t) + X_f^2(t), \qquad Y_f(t) = Y_f^1(t) + Y_f^2(t)
	\end{equation*}
	We verify the validity of this construction. $f$ is clearly feasible due to the supports of $X_f^i(t), Y_f^i(t), i = 1, 2$. The remaining conditions are:
	\begin{itemize}[label={}]
		\item \ref{eq:lemS21}:	
		\begin{equation*}
			\begin{aligned}
				X_f(0) &= X_f^1(0) + X_f^2(0) = X_{T'} + X_P = X\\ 
				Y_f(0) &= Y_f^1(0) + Y_f^2(0) = Y_{T'} + Y_P = Y\\ 
			\end{aligned}
		\end{equation*}
		\item \ref{eq:lemS22}:
		\begin{equation*}
			\begin{split}
				X_f(t)Y_f(t)^\top &= X_f^1(t)Y_f^1(t)^\top + X_f^2(t)Y_f^2(t)^\top\\
				&= h_1(t) + h_2(t)\\
				&= (g(t) - X_PY_P^\top) \odot {\mc{S}_{\mc{P}'}} + X_PY_P^\top \odot {\mc{S}_{\mc{P}'}} + g(t) \odot {\mc{S}_{P} \setminus \mc{S}_{\mc{P}'}}\\
				&= g(t) \odot ({\mc{S}_{\mc{P}'}} + {\mc{S}_{P} \setminus \mc{S}_{\mc{P}'}}) = g(t)
			\end{split}
		\end{equation*}
		\item \ref{eq:lemS23}:
		\begin{equation*}
			\begin{split}
				&\|f(z) - f(t)\|^2\\
				&= \|f_1(z) - f_1(t)\|^2 + \|f_2(z) - f_2(t)\|^2\\
				&\leq \mc{C}'\|h_1(z) - h_1(t)\|^2 + \max\left(\vertiii{X_{R_{P}, P}^\dagger}^2, \vertiii{Y_{C_{P}, P}^\dagger}^2\right) \|h_2(z) - h_2(t)\|^2\\
				&\leq \mc{C} (\|h_1(z) - h_1(t)\|^2 + \|h_2(z) - h_2(t)\|^2)\\
				&= \mc{C} (\|(g(z) - g(t)) \odot {\mc{S}_{\mc{P}'}}\|^2 + \|(g(z) - g(t)) \odot {\mc{S}_{P} \setminus \mc{S}_{\mc{P}'}}\|^2)\\
				&= \mc{C} \|g(z) - g(t)\|^2
			\end{split}
		\end{equation*}
	\end{itemize}
\end{proof}

\section{Proofs for \cref{sec: spuriouslocal}}

\subsection{Proof of \cref{lem:fullrankhypothesis}}
\label{subapp:fullrankhypothesis}
The proof relies on two intermediate results that we state first:  \cref{lemma: to_inversible} and \cref{cor:to_fullrank}. The idea of \cref{lemma: to_inversible} can be found in \cite{venturi2020spurious}. Since it is not formally proved as a lemma or theorem, we reprove it here for self-containedness. In fact, \cref{lemma: to_inversible} and \cref{cor:to_fullrank} are special cases of \cref{lem:fullrankhypothesis} with no support contraints and $\mc{P}^\star = \mc{P} = \{P\}$ respectively.

\begin{lemma}
	\label{lemma: to_inversible}
	Let $X \in \mb{R}^{R \times p}, Y \in \mb{R}^{C \times p}, \min(R, C) \leq p$. There exists a continuous function $f(t) = (X_f(t), Y_f(t))$ on $[0,1]$ such that:
	\begin{itemize}
		\item $f(0 = (X,Y)$.
		\item $XY^\top = X_f(t)(Y_f(t))^\top, \forall t \in [0,1]$.
		\item $X_f(1)$ or $Y_f(1)$ has full row rank.
	\end{itemize}
\end{lemma}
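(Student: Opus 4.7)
The plan is to construct a two-stage feasible path that preserves the product $XY^{\top}$ while increasing the rank of the first factor to $R$. Up to a transposition, which leaves the conclusion symmetric in $X$ and $Y$, I may assume $R \leq C$ so that $\min(R,C) = R \leq p$. Let $s = \text{rank}(X)$ and write the compact SVD $X = \sum_{i=1}^{s}\sigma_i u_i v_i^{\top}$, completing $\{u_i\}_{i=1}^{s}$ and $\{v_i\}_{i=1}^{s}$ to orthonormal bases $\{u_i\}_{i=1}^{R}$ of $\mb{R}^{R}$ and $\{v_i\}_{i=1}^{p}$ of $\mb{R}^{p}$ respectively. Setting $y_i := Yv_i \in \mb{R}^{C}$ yields $Y = \sum_{i=1}^{p} y_i v_i^{\top}$ and $XY^{\top} = \sum_{i=1}^{s}\sigma_i u_i y_i^{\top}$, so only the coefficients $y_1,\dots,y_s$ enter the product; this algebraic fact drives the whole construction.

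For $t \in [0,1/2]$, I keep $X_f(t) = X$ constant and linearly damp the components of $Y$ that $X$ does not see:
\[
Y_f(t) = \sum_{i=1}^{s} y_i v_i^{\top} + (1 - 2t)\sum_{i=s+1}^{p} y_i v_i^{\top}.
\]
Using $Xv_i = \sigma_i u_i$ for $i \leq s$ and $Xv_i = 0$ for $i > s$, a direct computation gives $X_f(t) Y_f(t)^{\top} = \sum_{i=1}^{s}\sigma_i u_i y_i^{\top} = XY^{\top}$ throughout this stage, with $f(0) = (X,Y)$. At $t = 1/2$ the path arrives at $(X, Y')$ where $Y' := \sum_{i=1}^{s} y_i v_i^{\top}$.

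For $t \in [1/2,1]$, I keep $Y_f(t) = Y'$ and graft the missing rank-one directions into $X$:
\[
X_f(t) = X + (2t - 1)\sum_{i=s+1}^{R} u_i v_i^{\top}.
\]
Since $Y' v_i = 0$ for every $i > s$ by construction, each added term satisfies $(u_i v_i^{\top})(Y')^{\top} = u_i (Y' v_i)^{\top} = 0$, and the product remains equal to $XY^{\top}$. At $t=1$, $X_f(1) v_i = \sigma_i u_i$ for $i \leq s$ and $X_f(1) v_i = u_i$ for $s < i \leq R$, so the column span of $X_f(1)$ contains the $R$ linearly independent vectors $u_1,\dots,u_R$, meaning $X_f(1)$ has full row rank. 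The concatenation of the two stages is continuous at $t=1/2$ (both pieces evaluate to $(X,Y')$), producing the desired $f$.

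The only real subtlety is the algebraic insight that, after aligning $Y$ with the SVD basis of $X$, the null directions $v_{s+1},\dots,v_R$ of $X$ yield exactly $R-s$ orthogonal vectors in $\mb{R}^{p}$ that can be grafted into $X$ without perturbing $XY^{\top}$; the hypothesis $\min(R,C) \leq p$ is precisely what guarantees enough such directions exist. Once this observation is identified, each stage is a linear interpolation and product preservation reduces to a one-line check.
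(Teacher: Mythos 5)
Your proof is correct, and it achieves the same two-stage structure as the paper's argument (first neutralize the part of $Y$ that $X$ cannot ``see'' while keeping the product fixed, then graft new directions onto $X$ where $Y$ is now zero), but via a genuinely different choice of basis. The paper picks a maximal linearly independent subset of \emph{columns} of $X$, expresses the remaining columns as linear combinations $X_k = \sum_{i\le q}\alpha_i^k X_i$, folds those coefficients into the first $q$ columns of $Y$ to get $\te{Y}$ with the last $r-q$ columns zero, and then freely replaces the last $r-q$ columns of $X$. You instead diagonalize via the compact SVD $X = \sum_{i\le s}\sigma_i u_i v_i^\top$ and expand $Y$ in the orthonormal frame $\{v_i\}$. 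The orthogonality of the $v_i$ makes the ``visible/invisible'' split of $Y$ clean: $Y' = \sum_{i\le s} y_i v_i^\top$ is just an orthogonal projection, and the grafted terms $u_i v_i^\top$ for $s<i\le R$ annihilate $Y'$ automatically via $Y'v_i = 0$. This eliminates the explicit coefficient bookkeeping $\{\alpha_i^k\}$ of the paper's version, at the cost of invoking the SVD where the paper only needs basis extraction. Both paths are continuous linear interpolations on each half-interval and agree at $t=1/2$; both verify the product is preserved by a one-line orthogonality (or zero-column) check; and both reach $R$ independent vectors in the range of $X_f(1)$ using exactly the hypothesis $\min(R,C)\le p$. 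No gap.
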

\begin{proof}
	WLOG, we assume that $m \leq r$. If $X$ has full row rank, then one can choose constant function $f(t) = (X,Y)$ to satisfy the conditions of the lemma. Therefore, we can focus on the case where $\texttt{rank}(X) = q < m$. WLOG, we can assume that the first $q$ columns of $X$ ($X_{1}, \ldots, X_{q}$) are linearly independent. The remaining columns of $X$ can be expressed as:
	\begin{equation*}
		X_{k} = \sum_{i = 1}^q \alpha^k_i X_{i}, \forall q < k \leq r
	\end{equation*}
	We define a matrix $\te{Y}$ by their columns as follow:
	\begin{equation*}
		\te{Y}_{i} = \begin{cases}
			Y_{i} + \sum_{k = q + 1}^r \alpha^k_i Y_{k} & \text{ if } i \leq q\\
			0 & \text{ otherwise}
		\end{cases}
	\end{equation*}
	By construction, we have $XY^\top = X\te{Y}^\top$. We define the function $f_1: [0, 1] \to \mb{R}^{m \times r} \times \mb{R}^{n \times r}$ as:
	\begin{equation*}
		f_1(t) = (X, (1 - t)Y + t\te{Y})
	\end{equation*}
	This function will not change the value of $f$ since we have: 
	\begin{equation*}
		X((1 - t)Y^\top + t\te{Y}^\top) = (1-t)XY^\top + t X\te{Y}^\top = XY^\top.
	\end{equation*}
	Let $\te{X}$ be a matrix whose first $q$ columns are identical to that of $X$ and $\texttt{rank}(\te{X}) = m$. The second function $f_2$ defined as:
	\begin{equation*}
		f_2(t)  = ((1 - t)X + t\te{X}, \te{Y})
	\end{equation*}
	also has their product unchanged (since first $q$ columns of $(1 - t)X + t\te{X}$ are constant and last $r - q$ rows of $\te{Y}$ are zero). Moreover, $f_2(0) = (\te{X}, \te{Y})$ where $\te{X}$ has full row rank. Therefore, the concatenation of two functions $f_1$ and $f_2$ (and shrink $t$ by a factor of $2$) are the desired function $f$.
\end{proof}

\begin{corollary}\label{cor:to_fullrank}
	Consider $I,J$ support constraints of \cref{eq: fixed_supp_prob} with $\mc{P}^\star = \mc{P} = \{P\}$. There is a feasible continuous function $f: [0,1] \mapsto \mb{R}^{m \times r} \times \mb{R}^{n \times r}: f(t) = (X_f(t), Y_f(t))$ such that:
	\begin{enumerate}[]
		\item $f(0) = (X,Y)$;
		\item $X_f(t)(Y_f(t))^{\top} = XY^\top, \forall t \in [0,1]$;
		\item $(X_f(1))_{R_P, P}$ or $(Y_f(1))_{C_P,P}$ has full row rank.
	\end{enumerate}
\end{corollary}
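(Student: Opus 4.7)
The plan is to reduce the statement directly to \cref{lemma: to_inversible} by zooming in on the only ``active block'' of the factors. Since $\mathcal{P}^\star = \mathcal{P} = \{P\}$, any feasible pair $(X,Y)$ has support contained in $I_P$, $J_P$, which means all nonzero entries of $X$ lie in the submatrix $X_{R_P,P} \in \mathbb{R}^{|R_P|\times|P|}$ and similarly all nonzero entries of $Y$ lie in $Y_{C_P,P} \in \mathbb{R}^{|C_P|\times|P|}$. In particular, the product $XY^\top$ is determined by
\[
(XY^\top)_{R_P,C_P} = X_{R_P,P}(Y_{C_P,P})^\top,
\]
with all other entries of $XY^\top$ equal to zero. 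So preserving $XY^\top$ along a feasible path is equivalent to preserving the product of the two ``small'' submatrices.

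Next, I would invoke \cref{lemma: to_inversible} on the small matrices $(\tilde X,\tilde Y) := (X_{R_P,P}, Y_{C_P,P})$ with dimensions $R = |R_P|$, $C = |C_P|$, $p = |P|$. The completeness of $P$ (\cref{def:completeeqclass2}) gives exactly the required hypothesis $\min(|R_P|,|C_P|) \leq |P|$. This yields a continuous path $\tilde f(t) = (\tilde X_f(t), \tilde Y_f(t)) \in \mathbb{R}^{|R_P|\times|P|}\times\mathbb{R}^{|C_P|\times|P|}$ on $[0,1]$ such that $\tilde f(0) = (\tilde X,\tilde Y)$, $\tilde X_f(t)(\tilde Y_f(t))^\top = \tilde X\tilde Y^\top$ for all $t$, and one of $\tilde X_f(1)$, $\tilde Y_f(1)$ has full row rank.

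Finally, I would lift $\tilde f$ back to $\mathbb{R}^{m\times r}\times\mathbb{R}^{n\times r}$ by defining $f(t) = (X_f(t),Y_f(t))$ where $X_f(t)$ places $\tilde X_f(t)$ at the block indexed by $(R_P,P)$ and zeros elsewhere, and similarly for $Y_f(t)$. Continuity and feasibility ($\supp X_f(t)\subseteq I_P \subseteq I$, $\supp Y_f(t)\subseteq J_P \subseteq J$) are immediate. The product identity $X_f(t)Y_f(t)^\top = XY^\top$ follows because all rank-one contributions outside indices in $P$ vanish on both sides, and inside the active block the identity is exactly $\tilde X_f(t)(\tilde Y_f(t))^\top = \tilde X\tilde Y^\top$. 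The last conclusion $(X_f(1))_{R_P,P} = \tilde X_f(1)$ or $(Y_f(1))_{C_P,P} = \tilde Y_f(1)$ being full row rank is inherited directly from the unconstrained lemma. There is no real obstacle here; the only thing to be careful about is the bookkeeping of the lift, in particular that ``zeros outside the active block'' is preserved along the whole path so that the supports remain inside $I,J$ throughout.
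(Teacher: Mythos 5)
Your proof is correct and matches the paper's own argument: both reduce to \cref{lemma: to_inversible} by restricting attention to the active block $(X_{R_P,P}, Y_{C_P,P})$ (where $P$ being a CEC supplies the hypothesis $\min(|R_P|,|C_P|)\leq|P|$), and then lift the resulting path back by zero-padding. The paper just normalizes $R_P, C_P, P$ to leading index sets WLOG before writing out the block structure, which is purely cosmetic.
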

\begin{proof}[Proof of \cref{cor:to_fullrank}]
	WLOG, up to permuting columns, we can assume $P = \llbracket |P| \rrbracket, R_P = \llbracket |R_P|\rrbracket$ and $C_P =\llbracket |C_P| \rrbracket$ ($R_P$ and $C_P$ are defined in Definition \cref{def:completeeqclass}). A feasible function $f = (X_f(t), Y_f(t))$ has the form: 
	$$X_f(t) = \begin{pmatrix}	\te{X}_f(t) & \mathbf{0}\\ \mathbf{0} & \mathbf{0} \end{pmatrix}, Y_f(t) = \begin{pmatrix}	\te{Y}_f(t) & \mathbf{0}\\ \mathbf{0} & \mathbf{0} \end{pmatrix}$$ 
	where $\te{X}_f: [0,1] \mapsto \mb{R}^{R_P \times P}$, $\te{Y}_f: [0,1] \mapsto \mb{R}^{C_P \times P}$. 
	
	Since $P$ is a CEC, we have $p  \geq \min(R_P,C_P)$. Hence we can use \cref{lemma: to_inversible} to build $(\te{X}_f(t),\te{Y}_f(t))$ satisfying all conditions of \cref{lemma: to_inversible}. Such $(\te{X}_f(t),\te{Y}_f(t))$ fully determines $f$ and make $f$ our desirable function.
\end{proof}

\begin{proof}[Proof of \cref{lem:fullrankhypothesis}]
	First, we decompose $X$ and $Y$ as:
	\begin{equation*}
		X = X_{\bar{T}} + \sum_{P \in \mc{P}^\star} X_P, \qquad \qquad Y = Y_{\bar{T}} + \sum_{P \in \mc{P}^\star} Y_P 
	\end{equation*}
	Since $\bar{T}$ and $P \in \mc{P}^\star$ form a partition of $\intset{r}$, the product $XY^\top$ can be written as:
	\begin{equation*}
		XY^\top = X_{\bar{T}} Y_{\bar{T}}^\top + \sum_{P \in \mc{P}^\star} X_PY_P^\top.
	\end{equation*}
	For each $P \in \mc{P}^{\star}$, $(I_P, J_P)$ contains one CEC. By applying \cref{cor:to_fullrank}, we can build continuous functions $(X_f^P(t), Y_f^P(t))$, $\supp(X_f^{P}(t)) \subseteq I_P, \supp(Y_f^{P}(t)) \subseteq J_P, \forall t \in [0,1]$ such that:
	\begin{enumerate}
		\item $(X_f^{P}(0),Y_f^{P}(0)) = (X_P, Y_P)$.
		\item $X_f^{P}(t)(Y_f^{P}(t))^\top = X_PY_P^\top, \forall t \in [0,1]$.
		\item $(X_f^{P}(1))_{R_P, P}$ or $(Y_f^{P}(1))_{C_P,P}$ has full row rank.
	\end{enumerate}
	Our desirable $f(t) = (X_f(t), Y_f(t))$ is defined as:
	\begin{equation*}
		X_f(t) = X_{\bar{T}} + \sum_{P \in \mc{P}^\star} X_f^P(t), \qquad \qquad Y(t) = Y_{\bar{T}} + \sum_{P \in \mc{P}^\star} Y_f^P(t)
	\end{equation*}
	To conclude, it is immediate to check that $f = (X_f(t), Y_f(t))$ is feasible, $f(0) = (X,Y)$, $f(1)$ is CEC-full-rank and $X_f(t)Y_f(t)^\top = XY^\top, \forall t \in [0,1]$.
\end{proof}

\subsection{Proof of \cref{lem:connecttozeroCEC}}
\label{subapp:connecttozeroCEC}
Denote $Z = XY^\top$, we  construct $f$ such that $X_f(t)Y_f(t)^\top = B(t)$, where $B(t) = Z \odot {\bar{\mS}_T} + (At + Z(1 - t)) \odot {\mS_T}$. Such function $f$ makes $L(X_f(t),Y_f(t))$ non-increasing since:
\begin{equation}
	\label{eq:non-increasing}
	\begin{split}
		\|A - X_f(t)Y_f(t)^\top\|^2 &= \|A - B(t)\|^2\\
		&= \|(A - Z) \odot {\bar{\mS}_T}\|^2 + (1 - t)^2 \|(A - Z) \odot {\mS_T}\|^2
	\end{split}
\end{equation}
Thus, the rest of the proof is devoted to show that such a function $f$ exists by using \cref{lem: continous_function_of_S2}. Consider the function $g(t) = B(t) - X_{\bar{T}}(Y_{\bar{T}})^\top$. We have that $g(t)$ is continuous, $g(0) = B(0) - X_{\bar{T}}(Y_{\bar{T}})^\top = Z - X_{\bar{T}}(Y_{\bar{T}})^\top = X_{T}(Y_{T})^\top$ and:
\begin{equation*}
	\begin{split}
		g(t) \odot {\bar{\mS}_T} &= (B(t) - X_{\bar{T}}(Y_{\bar{T}})^\top) \odot {\bar{\mS}_T} \\
		&= (Z - X_{\bar{T}}(Y_{\bar{T}})^\top) \odot {\bar{\mS}_T} \\
		&= (X_TY_T^\top) \odot {\bar{\mS}_T}= \mathbf{0}
	\end{split}
\end{equation*}
which shows $\supp(g(t)) \subseteq  \mS_T$. Since $(X_T,Y_T)$ is CEC-full-rank (by our assumption, $(X,Y)$ is CEC-full-rank), invoking \cref{lem: continous_function_of_S2} with $(I_T, J_T)$, there exists $f^T(t) = (X^T_f(t),Y^T_f(t))$ such that:
\begin{enumerate}[label=\textbf{D\arabic*}]
	\item \label{eq:lemCEC1} $\supp(X_f^T(t)) \subseteq I_T, \supp(Y_f^C(t)) \subseteq J_T$.
	
	\item \label{eq:lemCEC2}  $f^T(0) = (X_T, Y_T)$.
	
	\item \label{eq:lemCEC3}  $g(t) = X_f^T(t)(Y_f^T(t))^\top, \forall t \in [0,1]$.
\end{enumerate}
We can define our desired function $f(t) = (X_f(t), Y_f(t))$ as:
\begin{equation*}
	X_f(t) = X_{\bar{T}} + X_f^T(t), \qquad \qquad Y = Y_{\bar{T}} + Y_f^T(t)
\end{equation*}
$f$ is clearly feasible due to \eqref{eq:lemCEC1}. The remaining condition to be checked is:
\begin{itemize}
	\item First condition:
	\begin{equation*}
		\begin{aligned}
			X_f(0) = X_f^T(0) + X_{\bar{T}} = X_T + X_{\bar{T}} = X, \quad
			Y_f(0) = Y_f^T(0) + Y_{\bar{T}} = Y_T + Y_{\bar{T}} = Y\\
		\end{aligned}
	\end{equation*}
	\item Second condition: holds thanks to \Cref{eq:non-increasing} and:
	\begin{equation*}
		\begin{split}
			X_f(t)(Y_f(t))^\top = X_{\bar{T}}Y_{\bar{T}}^\top + X_f^C(t)(Y_f^C(t))^\top = X_{\bar{T}}Y_{\bar{T}}^\top + g(t) = B(t)
		\end{split}
	\end{equation*}
	\item Third condition:
	\begin{align*}
		(A - X_f(1)(Y_f(1))^\top) \odot {\mS_T} &= (A - B(1)) \odot {\mS_T}\\
		&= (A - Z \odot {\bar{\mS}_T} - A \odot {\mS_T}) \odot {\mS_T}= \mathbf{0}
	\end{align*}
\end{itemize}

\subsection{Proof of \Cref{lem:connecttooptimal}}
\label{subapp:connecttooptimal}
Consider $X_T, X_{\bar{T}}^i, Y_T, Y_{\bar{T}}^i, i = 1, 2$ as in \cref{def:taxonomy2}. We redefine $A' = A \odot {\bar{\mS}_T}, I' = I_{\bar{T}}^1, J' = J_{\bar{T}}^1$ as in \cref{theorem: reduction_disjoint_overlapping}.

In light of \cref{cor:optimalsol}, an optimal solution $(\te{X}, \te{Y})$ has the following form:
\begin{itemize}
	\item[1)] $\te{X}_{\bar{T}}^1 = \te{X} \odot I_{\bar{T}}^1, \te{Y}_{\bar{T}}^1 = \te{Y} \odot J_{\bar{T}}^1$ is an optimal solution of \eqref{eq: fixed_supp_prob} with $(A', I', J')$. 
	\item[2)] $\te{X}_{\bar{T}}^2 = \te{X} \odot I_{\bar{T}}^2, \te{Y}_{\bar{T}}^2 = \te{Y} \odot J_{\bar{T}}^2$ can be arbitrary.
	\item[3)] $\te{X}_T = \te{X} \odot I_T, \te{Y}_T = \te{Y} \odot J_T$ satisfy:
	\begin{equation*}
		\te{X}_T\te{Y}_T^\top = (A - \sum_{(i,j) \neq (1,1)}\te{X}_{\bar{T}}^i\te{Y}_{\bar{T}}^j)^\top \odot {\mS_T}
	\end{equation*}
\end{itemize}
Since $(I',J')$ has its support constraints satisfying \cref{theorem:disjoint_totally_overlapping} assumptions as shown in \cref{theorem: reduction_disjoint_overlapping}, by \cref{theorem:noSpuriousSimple}, there exists a function $(X^{\bar{T}}_f(t), Y^{\bar{T}}_f(t))$ such that:
\begin{itemize}
	\item[1)] $\supp(X^{\bar{T}}_f(t)) \subseteq I_{\bar{T}}^1, \supp(Y_f^{\bar{T}}(t)) \subseteq J_{\bar{T}}^1$.
	\item[2)] $X_f^{\bar{T}}(0) = X_{\bar{T}}^1, Y_f^{\bar{T}}(0) = Y_{\bar{T}}^1$.
	\item[3)] $L'(X_f^{\bar{T}}(t), Y_f^{\bar{T}}(t)) = \|A' - X_f^{\bar{T}}(t)Y_f^{\bar{T}}(t)^\top\|^2$ is non-increasing.
	\item[4)] $(X_f^{\bar{T}}(1), Y_f^{\bar{T}}(1))$ is an optimal solution of the instance of \eqref{eq: fixed_supp_prob} with $(A', I', J')$.
\end{itemize}
Consider the function $g(t) = \left(A - (X_f^{\bar{T}}(t) + X_{\bar{T}}^2)(Y_f^{\bar{T}}(t) + Y_{\bar{T}}^2)^\top\right) \odot {\mS_T}$. This construction makes $g(0) = X_TY_T^\top$. Indeed, 
\begin{equation*}
	\begin{split}
		g(0) &= \left(A - (X_f^{\bar{T}}(0) + X_{\bar{T}}^2)(Y_f^{\bar{T}}(0) + Y_{\bar{T}}^2)^\top\right) \odot {\mS_T}\\
		&= \left(A - (X_{\bar{T}}^1 + X_{\bar{T}}^2)(Y_{\bar{T}}^1 + Y_{\bar{T}}^2)^\top\right) \odot {\mS_T}\\
		&\overset{(1)}{=} \left(XY^\top - (X_{\bar{T}}^1 + X_{\bar{T}}^2)(Y_{\bar{T}}^1 + Y_{\bar{T}}^2)^\top\right) \odot {\mS_T}\\
		&\overset{(2)}{=} X_TY_T^\top 
	\end{split}
\end{equation*} 
where (1) holds by the hypothesis $(A - XY^\top) \odot {\mS_T} = \mathbf{0}$, and (2) holds by Equation \eqref{eq:decomposition_lemma} and $\supp(X_TY_T^\top) \subseteq \mS_T$. Due to our hypothesis $(X,Y)$ is CEC-full-rank, $(X_T,Y_T)$ is CEC-full-rank. In addition, $g(t)$ continuous, $\supp(g(t)) \subseteq \mS_T$ and $g(0) = X_TY_T^\top$. Invoking \cref{lem: continous_function_of_S2} with $(I_T, J_T)$, there exist functions $(X_f^C(t), Y_f^C(t))$ satisfying:
\begin{itemize}[leftmargin = 25pt]
	\item[1)] $\supp(X_f^T(t)) \subseteq I_T, \supp(Y_f^T(t)) \subseteq J_T$.
	
	\item[2)] $f^T(0) = (X_T, Y_T)$.
	
	\item[3)] $g(t) = X_f^T(t)Y_f^T(t)^\top, \forall t \in [0,1]$.
\end{itemize}
Finally, one can define the function $X_f(t), Y_f(t)$ satisfying \cref{lem:connecttooptimal} as:
\begin{equation*}
	X_f(t) = X_f^{\bar{T}}(t) + X_f^C(t) + X_{\bar{T}}^2, \qquad \qquad Y_f(t) = Y_f^{\bar{T}}(t) + Y_f^C(t) + Y_{\bar{T}}^2
\end{equation*}
$f$ is feasible due to the supports of $X_f^P(t), Y_f^P(t), P \in \{{\bar{T}}, C\}$ and $X_{\bar{T}}^2, Y_{\bar{T}}^2$. The remaining conditions are satisfied as:
\begin{itemize}
	\item First condition:
	\begin{equation*}
		\begin{aligned}
			X_f(0) &=  X_f^{\bar{T}}(0) + X_f^C(0) + X_{\bar{T}}^2 = X_{\bar{T}}^1 + X_T + X_{\bar{T}}^2 = X\\
			Y_f(0) &=  Y_f^{\bar{T}}(0) + Y_f^C(0) + Y_{\bar{T}}^2 = Y_{\bar{T}}^1 + Y_T + Y_{\bar{T}}^2 = Y\\
		\end{aligned}
	\end{equation*}
	\item Second condition:
	\begin{equation*}
		\begin{split}
			&\|A - X_f(t)Y_f(t)^\top\|^2 = \|A - X_f^T(t)(Y_f^T(t))^\top - (X_f^{\bar{T}}(t) + X_{\bar{T}}^2)(Y_f^{\bar{T}}(t) + Y_{\bar{T}}^2)^\top\|^2\\
			&= \|g(t) - X_f^T(t)Y_f^T(t)^\top\|^2 + \|(A - X_f^{\bar{T}}(t)(Y_f^{\bar{T}}(t))^\top) \odot { \mS_\mc{P} \setminus \mS_T}\|^2 + \|A \odot {\bar{\mS}_\mc{P}}\|^2\\
			&= \|(A' - X_f^{\bar{T}}(t)(Y_f^{\bar{T}}(t))^\top) \odot { \mS_\mc{P} \setminus \mS_T}\|^2 + \|A \odot {\bar{\mS}_\mc{P}}\|^2\\
			&\overset{\eqref{eq:equivalent1}}{=} \|A' - X_f^{\bar{T}}(t)(Y_f^{\bar{T}}(t))^\top\|^2
		\end{split}
	\end{equation*}
	Since $\|A' - X_f^{\bar{T}}(t)(Y_f^{\bar{T}}(t))^\top\|^2$ is non-increasing, so is $\|A - X_f(t)Y_f(t)^\top\|^2$.
	\item Third condition: By \cref{theorem: reduction_disjoint_overlapping}, $(X_f(1), Y_f(1))$ is a global minimizer since $\|A - X_f(1)Y_f(1)^\top\|^2 = \|A' - X_f^{\bar{T}}(1)(Y_f^{\bar{T}}(1))^\top\|^2$ where $(X_f^{\bar{T}}(1), Y_f^{\bar{T}}(1))$ is an optimal solution of the instance of \eqref{eq: fixed_supp_prob} with $(A', I', J')$.
\end{itemize}

\subsection{Proof of \cref{theorem:nospuriousminimaComplex}}
\label{sec:proof_4}
The following corollary is necessary for the proof of \cref{theorem:nospuriousminimaComplex}.
\begin{corollary}
	\label{cor: bounded_operator}
	Consider $I, J$ support constraints of \eqref{eq: fixed_supp_prob}, such that $\mc{P}^\star = \mc{P}$. Given any feasible CEC-full-rank point $(X,Y)$ and any $B$ satisfying $\supp(B) \subseteq \mS_{\mc{P}}$, there exists $(\te{X},\te{Y})$ such that:
	\begin{enumerate}[label=\textbf{E\arabic*}]
		\item \label{eq:corbo1}$\supp(\te{X}) \subseteq I, \supp(\te{Y}) \subseteq J$
		
		\item \label{eq:corbo2}$\te{X}\te{Y}^\top = B$.
		
		\item \label{eq:corbo3}$\|X - \te{X}\|^2 + \|Y - \te{Y}\|^2 \leq \mc{C} \|XY^\top - B\|^2$. 
	\end{enumerate}
	where $\mc{C} =  \underset{P \in \mc{P}^\star}{\max} \left(\max\left(\vertiii{X_{R_{P}, P}^\dagger}^2, \vertiii{Y_{C_{P}, P}^\dagger}^2\right)\right)$.
\end{corollary}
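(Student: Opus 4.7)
The plan is to derive the corollary as the endpoint-version of \cref{lem: continous_function_of_S2}. Observe that the hypothesis $\mc{P}^\star = \mc{P}$ means that every equivalence class is complete, so $T = \llbracket r \rrbracket$ (hence $X_T = X$, $Y_T = Y$) and $\mS_T = \mS_\mc{P}$. This matches the setting of \cref{lem: continous_function_of_S2} exactly, and it also ensures $\supp(XY^\top) \subseteq \mS_\mc{P} = \mS_T$.

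I would then define the linear interpolation $g:[0,1] \to \mb{R}^{m \times n}$ by
\begin{equation*}
  g(t) = (1-t)\,XY^\top + t\,B.
\end{equation*}
This $g$ is continuous with $g(0) = XY^\top$, and since $\supp(XY^\top) \subseteq \mS_\mc{P}$ and by assumption $\supp(B) \subseteq \mS_\mc{P}$, we have $\supp(g(t)) \subseteq \mS_\mc{P} = \mS_T$ for every $t \in [0,1]$. Thus the hypotheses of \cref{lem: continous_function_of_S2} are satisfied (using that $(X,Y)$ is CEC-full-rank), yielding a feasible continuous path $f(t) = (X_f(t), Y_f(t))$ with $f(0) = (X_T, Y_T) = (X,Y)$, $X_f(t) Y_f(t)^\top = g(t)$ for all $t$, and the Lipschitz-type bound
\begin{equation*}
  \|f(z) - f(t)\|^2 \leq \mc{C}\,\|g(z) - g(t)\|^2, \qquad \forall t, z \in [0,1],
\end{equation*}
with the same constant $\mc{C}$ as in the statement of the corollary.

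I would conclude by setting $(\te{X}, \te{Y}) := f(1) = (X_f(1), Y_f(1))$. Feasibility of $f$ gives \ref{eq:corbo1}; the identity $X_f(1) Y_f(1)^\top = g(1) = B$ gives \ref{eq:corbo2}; and evaluating the Lipschitz bound at $t=0$, $z=1$ yields
\begin{equation*}
  \|X - \te{X}\|^2 + \|Y - \te{Y}\|^2 = \|f(1) - f(0)\|^2 \leq \mc{C}\,\|B - XY^\top\|^2,
\end{equation*}
which is \ref{eq:corbo3}. There is essentially no obstacle here once \cref{lem: continous_function_of_S2} is available; the only point to verify carefully is that the assumption $\mc{P}^\star = \mc{P}$ indeed reduces the lemma's conclusion $f(0) = (X_T, Y_T)$ to $f(0) = (X,Y)$, which is what licenses using the bound with $X$ and $Y$ themselves on the left-hand side of \ref{eq:corbo3}.
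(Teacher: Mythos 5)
Your proof is correct and is essentially the same as the paper's: both define $g(t) = (1-t)XY^\top + tB$, verify the hypotheses of \cref{lem: continous_function_of_S2} (continuity, $g(0) = XY^\top$, and $\supp(g(t)) \subseteq \mS_{\mc{P}} = \mS_T$), and take $(\te{X},\te{Y}) = f(1)$. Your explicit remark that $\mc{P}^\star = \mc{P}$ forces $T = \llbracket r \rrbracket$ and hence $(X_T, Y_T) = (X,Y)$ is a small but welcome clarification of a step the paper leaves implicit.
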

\begin{proof}
	\Cref{cor: bounded_operator} is an application of \cref{lem: continous_function_of_S2}. Consider the function $g(t) = (1 - t)XY^\top + tB$. By construction, $g(t)$ is continuous, $g(0) = XY^\top$ and $\supp(g(t)) \subseteq \supp(XY^\top) \cup \supp(B) = \mS_{\mc{P}}$. Since $(X,Y)$ is CEC-full-rank, there exists a feasible function $f(t) = (X_f(t), Y_f(t))$ satisfying \ref{eq:lemS21} - \ref{eq:lemS23} by using \cref{lem: continous_function_of_S2}. 
	
	We choose $(\te{X},\te{Y}) = (X_f(1), Y_f(1))$. The verification of constraints is as follow:
	\begin{itemize}[label={}]
		\item \ref{eq:corbo1}: $f$ is feasible.
		\item \ref{eq:corbo2}: $\te{X}\te{Y}^\top = X_f(1)Y_f(1)^\top \overset{\ref{eq:lemS22}}{=} g(1) = B$.
		\item \ref{eq:corbo3}: $\|X - \te{X}\|^2 + \|Y - \te{Y}\|^2 \overset{\ref{eq:lemS21}}{=} \|f(1) - f(0)\|^2 \overset{\ref{eq:lemS23}}{\leq} \mc{C} \|g(0) - g(1)\|^2 \leq \mc{C} \|XY^\top - B\|^2$. 
	\end{itemize}
\end{proof}
\begin{proof}[Proof of \cref{theorem:nospuriousminimaComplex}]
	As mentioned in the sketch of the proof, given any $(X,Y)$ not CEC-full-rank, \cref{lem:fullrankhypothesis} shows the existence of a path $f$ along which $L$ is constant and $f$ connects $(X,Y)$ to some CEC-full-rank $(\te{X},\te{Y})$. Therefore, this proof will be entirely devoted to show that a feasible CEC-full-rank solution $(X,Y)$ cannot be a spurious local minimum. This fact will be shown by the two following steps:
	\begin{itemize}[label={}, leftmargin = 0pt]
		\item \textbf{FIRST STEP}: Consider the function $L(X,Y)$, we have:
		\begin{equation*}
			L(X,Y) = \|A - XY^\top\|^2 = \|A - \sum_{P' \in \mc{P}^\star} X_{P'}Y_{P'}^\top - X_{\bar{T}}Y_{\bar{T}}^\top\|^2
		\end{equation*}
		
		If $(X,Y)$  is truly a local minimum, then $\forall P \in \mc{P}^\star$, $(X_P, Y_P)$ is also the local minimum of the following function:
		\begin{equation*}
			L'(X_P, Y_P) =  \|(A - \sum_{P' \neq P} X_{P'}Y_{P'}^\top - X_{\bar{T}}Y_{\bar{T}}^\top) - X_PY_P^\top\|^2
		\end{equation*}
		where $L'$ is equal to $L$ but we optimize only w.r.t $(X_P, Y_P)$ while fixing the other coefficients. In other words, $(X_P, Y_P)$ is a local minimum of the problem:
		\begin{equation*}
			\begin{aligned}
				& \underset{X' \in \mb{R}^{m \times r}, Y' \in \mb{R}^{n \times r}}{\text{Minimize}} 
				& &L'(X', Y') = \|B - X'Y'^\top\|^2&\\
				& \text{Subject to: }
				& &\texttt{supp}(X') \subseteq I_P \text{ and } \texttt{supp}(Y') \subseteq J_P&\\
			\end{aligned}
		\end{equation*}
		where $B = A - \sum_{P' \neq P} X_{P'}Y_{P'}^\top - X_{\bar{T}}Y_{\bar{T}}$. Since all columns of $I_{P}$ (resp. of $J_{P}$) are identical, all rank-one contribution supports are totally overlapping. Thus, all local minima are global minima (\cref{theorem:noSpuriousSimple}). Global minima are attained when
		$X_PY_P^\top = B \odot {\mS_{P}}$ due to the expressivity of a CEC (\cref{lem: expressibility_two}). Thus, for any $P \in \mc{P}^\star$, $\forall (i,j) \in \mS_P$, we have:
		\begin{equation*}
			0 = (B - X_PY_P^\top)_{i,j} = (A - \sum_{P' \in \mc{P}^\star} X_{P'}Y_{P'}^\top - X_{\bar{T}}Y_{\bar{T}}^\top)_{i,j} = (A - XY^\top)_{i,j}
		\end{equation*}
		which implies \Cref{eq: special_critical_point}.
		
		\item \textbf{SECOND STEP}: In this step, we assume that \Cref{eq: special_critical_point} holds. Consider $X_T, X_{\bar{T}}^i, Y_T, Y_{\bar{T}}^i, i = 1, 2$ as in \cref{def:taxonomy}. Let $A' = A \odot {\bar{\mc{S}}_T}, I' = I_{\bar{T}}^1, J' = J_{\bar{T}}^1$.
		
		We consider two possibilities. First, if $(X_{\bar{T}}^1, Y_{\bar{T}}^1)$ is an optimal solution of the instance of \eqref{eq: fixed_supp_prob} with $(A', I', J')$, by \cref{cor:optimalsol}, $(X,Y)$ is an optimal solution of \eqref{eq: fixed_supp_prob} with $(A, I, J)$ (since \Cref{eq: special_critical_point} holds). Hence it cannot be a spurious local minimum. We now focus on the second case, where $(X_{\bar{T}}^1, Y_{\bar{T}}^1)$ is \emph{not} the optimal solution of the instance of \eqref{eq: fixed_supp_prob} with $(A', I', J')$. We show that in this case, in any neighborhood of $(X,Y)$, there exists a point $(X', Y')$ such that {$\supp(X') \subseteq I$, $\supp(Y') \subseteq J'$ and} $L(X,Y) > L(X',Y')$. Thus $(X,Y)$ cannot be a local minimum.
		
		Since $(I_{\bar{T}}^1, J_{\bar{T}}^1)$ satisfies \cref{theorem:disjoint_totally_overlapping} assumptions, \eqref{eq: fixed_supp_prob} has no spurious local minima (\cref{theorem:noSpuriousSimple}). As $(X_{\bar{T}}^1, Y_{\bar{T}}^1)$ is not an optimal solution, it cannot be a local minimum either, i.e., in any neighborhood of $(X_{\bar{T}}^1, Y_{\bar{T}}^1)$, there exists
		$(\dot{X},\dot{Y})$ with $\supp(\te{X}_{\bar{T}}^1) \subseteq I', \supp(\te{Y}_{\bar{T}}^1) \subseteq J'$ and
		\begin{equation}
			\label{eq:better_neighbors}
			\|A' - X_{\bar{T}}^1(Y_{\bar{T}}^1)^\top\|^2 > \|A'  -
			\te{X}_{\bar{T}}^1(\te{Y}_{\bar{T}}^1)^\top\|^2
		\end{equation}
		By \Cref{eq:equivalent1}, we have:
		\begin{equation}
			\label{eq:expansion}
			\begin{aligned}
				\|A'-(X_{\bar{T}}^1)(Y_{\bar{T}}^1)^\top\|^2 &= \|(A-(X_{\bar{T}}^1)(Y_{\bar{T}}^1)^\top) \odot { \mS_\mc{P} \setminus \mS_T}\|^{2}+ \|A \odot {\bar{\mS}_\mc{P}}\|^{2}\\
				\|A'-(\te{X}_{\bar{T}}^1)(\te{Y}_{\bar{T}}^1)^\top\|^2 &= \|(A-(\te{X}_{\bar{T}}^1)(\te{Y}_{\bar{T}}^1)^\top) \odot { \mS_\mc{P} \setminus \mS_T}\|^{2}+ \|A \odot {\bar{\mS}_\mc{P}}\|^{2}\\
			\end{aligned}
		\end{equation}
		By \Cref{eq:better_neighbors} and \Cref{eq:expansion} we have:
		\begin{equation}
			\label{eq:eqforproof1}
			\|(A -(X_{\bar{T}}^1)(Y_{\bar{T}}^1)^\top) \odot { \mS_\mc{P} \setminus \mS_T}\|^{2} > \|(A -\te{X}_{\bar{T}}^1(\te{Y}_{\bar{T}}^1)^\top)
			\odot { \mS_\mc{P} \setminus \mS_T}\|^{2}
		\end{equation}
		Consider the matrix: $B := \left(A -  (\te{X}_{\bar{T}}^1+X_{\bar{T}}^2) (\te{Y}_{\bar{T}}^1 + Y_{\bar{T}}^2)^\top \right) \odot { \mS_T}$. Since $\supp(B) \subseteq  \mS_T$ and $(X_T,Y_T)$ is CEC-full-rank (we assume $(X,Y)$ is CEC-full-rank), by \cref{cor: bounded_operator}, there exists ($\te{X}_T,\te{Y}_T$) such that:
		\begin{itemize}
			\item[1)] $\supp(\te{X}_T) \subseteq I_T,  \supp(\te{Y}_T) \subseteq J_T$.
			\item[2)]  $\te{X}_T\te{Y}_T^\top = B$.
			
			\item[3)]  $\|X_T - \te{X}_T\|^2 + \|Y_T - \te{Y}_T\|^2 \leq \mc{C} \|X_TY_T^\top - B\|^2$. 
		\end{itemize}
		where $\mc{C} =  \underset{P \in \mc{P}^\star}{\max} \left(\max\left(\vertiii{X_{R_{P}, P}^\dagger}^2, \vertiii{Y_{C_{P}, P}^\dagger}^2\right)\right)$. We define the point$(\te{X},\te{Y})$ as:
		\begin{equation*}
			\te{X} = \te{X}_T + \te{X}_{\bar{T}}^1 + X_{\bar{T}}^2, \qquad \qquad \te{Y} = \te{Y}_T + \te{Y}_{\bar{T}}^1 + Y_{\bar{T}}^2
		\end{equation*}
		The point $(\te{X},\te{Y})$ still satisfies \Cref{eq: special_critical_point}. Indeed,
		\begin{equation}
			\label{eq:eqforproof2}
			\begin{split}
				(A - \te{X}\te{Y}^\top) \odot { \mS_T} &= \left(A - \te{X}_T\te{Y}_T^\top - (\te{X}_{\bar{T}}^1 + X_{\bar{T}}^2)(\te{Y}_{\bar{T}}^1 + Y_{\bar{T}}^2)^\top
				\right) \odot { \mS_T}\\
				&= (B - \te{X}_T\te{Y}_T^\top) \odot { \mS_T} = \mathbf{0}.
			\end{split}
		\end{equation}
		It is clear that $(\te{X}, \te{Y})$ satisfies $\supp(\te{X}) \subseteq I$, $\supp(\te{Y}) \subseteq J$ due to the support of its components $(\te{X}_T, \te{Y}_T), (\te{X}_{\bar{T}}^1, \te{Y}_{\bar{T}}^1), (X_{\bar{T}}^2, Y_{\bar{T}}^2)$. Moreover, we have:
		\begin{equation*}
			\begin{aligned}
				\|A-\te{X}\te{Y}^\top\|^{2} &= \|(A-\te{X}\te{Y}^\top) \odot {\mS_T}\|^{2} + \|(A-\te{X}\te{Y}^\top) \odot { \mS_\mc{P} \setminus \mS_T}\|^{2}+ \|A \odot {\bar{\mS}_\mc{P}}\|^{2}\\
				&\overset{\eqref{eq:eqforproof2}}{=} \|(A  - \te{X}_{\bar{T}}^1(\te{Y}_{\bar{T}}^1)^\top) \odot { \mS_\mc{P} \setminus \mS_T}\|^2+ \|A \odot {\bar{\mS}_\mc{P}}\|^{2}\\
				&\overset{ \eqref{eq:eqforproof1}}{<} \|(A - X_{\bar{T}}^1(Y_{\bar{T}}^1)^\top) \odot { \mS_\mc{P} \setminus \mS_T}\|^2 + \|A \odot {\bar{\mS}_\mc{P}}\|^{2}\\
				&= \|A-XY^\top\|^{2}.
			\end{aligned}
		\end{equation*}
		Lastly, we show that $(\te{X},\te{Y})$ can be chosen arbitrarily close to $(X,Y)$ by choosing $(\te{X}_{\bar{T}}^1,\te{Y}_{\bar{T}}^1)$ close enough to $(X_{\bar{T}}^1, Y_{\bar{T}}^1)$. For this, denoting $\epsilon := \|X_{\bar{T}}^1 - \tilde{X}\|^2 + \|Y_{\bar{T}}^1 - \tilde{Y}\|^2$, we first compute:
		\begin{equation*}
			\begin{split}
				\|X - \te{X}\|^2 + \|Y  - \te{Y}\|^2 &= \|X_T - \te{X}_T\|^2 + \|Y_T - \te{Y}_T\|^2 + \|X_{\bar{T}}^1 - \te{X}_{\bar{T}}^1\|^2 + \|Y_{\bar{T}}^1 - \te{Y}_{\bar{T}}^1\|^2\\
				&\leq \mc{C}\|X_TY_T^\top - B\|^2 + \epsilon\\
			\end{split}
		\end{equation*}
		We will bound the value $\|X_TY_T^\top - B\|^2$. By using \Cref{eq: special_critical_point}, we have:
		\begin{equation*}
			\begin{split}
				(A - \sum_{1 \leq i,j \leq 2} (X_{\bar{T}}^i)(Y_{\bar{T}}^j)^\top) \odot  {\mc{S}_T} - X_TY_T^\top &= (A - X_TY_T^\top - \sum_{1 \leq i,j \leq 2} (X_{\bar{T}}^i)(Y_{\bar{T}}^j)^\top) \odot  {\mc{S}_T}\\
				&= (A - XY^\top) \odot {\mc{S}_T} \overset{\eqref{eq: special_critical_point}}{=} \mbf{0}
			\end{split}
		\end{equation*}
		Therefore, $X_TY_T^\top = [A - (X_{\bar{T}}^1 + X_{\bar{T}}^2)(Y_{\bar{T}}^1 + Y_{\bar{T}}^2)^\top] \odot { \mS_T}$. We have:
		\begin{equation*}
			\begin{split}
				\|X_TY_T^\top - B\|^2 &= 
				\|[A - (X_{\bar{T}}^1 + X_{\bar{T}}^2)(Y_{\bar{T}}^1 + Y_{\bar{T}}^2)^\top] \odot { \mS_T} - B\|
				^{2}\\ 
				&= 
				\|[(\tilde{X}_{\bar{T}}^1 + X_{\bar{T}}^2)(\tilde{Y}_{\bar{T}}^1 + Y_{\bar{T}}^2)^\top - (X_{\bar{T}}^1 + X_{\bar{T}}^2)(Y_{\bar{T}}^1 + Y_{\bar{T}}^2)^\top] \odot { \mS_T}\|^{2}\\ 
				&\leq \|(\tilde{X}_{\bar{T}}^1 + X_{\bar{T}}^2)(\tilde{Y}_{\bar{T}}^1 + Y_{\bar{T}}^2)^\top - (X_{\bar{T}}^1 + X_{\bar{T}}^2)(Y_{\bar{T}}^1 + Y_{\bar{T}}^2)^\top\|^{2} 
			\end{split}
		\end{equation*}
		When $\epsilon \to 0$, we have $\|(\tilde{X}_{\bar{T}}^1 + X_{\bar{T}}^2)(\tilde{Y}_{\bar{T}}^1 + Y_{\bar{T}}^2)^\top - (X_{\bar{T}}^1 + X_{\bar{T}}^2)(Y_{\bar{T}}^1 + Y_{\bar{T}}^2)^\top\| \to 0$. Therefore, with $\epsilon$ small enough, one have $\|X - X'\|^2 + \|Y  - Y'\|^2$ can be arbitrarily small. This concludes the proof.
	\end{itemize}
\end{proof}
\subsection{Proof for \cref{ex:spuriousinstances}}
\label{sec:proof_ex}
Direct calculation of the Hessian of $L$ at point $(X_0,Y_0)$ is given by:
\begin{equation*}
	H(L)_{\mid (X_0, Y_0)} = \begin{pmatrix}
		0 & 0 & 0 & 0 & 0 & 0 & 0 \\
		0 & 100 & 0 & 0 & 0 & 10 & 0 \\
		0 & 0 & 100 & 0 & 0 & 0 & -1 \\
		0 & 0 & 0 & 0 & 0 & 0 & 0 \\
		0 & 0 & 0 & 0 & 0 & 0 & 0 \\
		0 & 10 & 0 & 0 & 0 & 1 & 0 \\
		0 & 0 & -1 & 0 & 0 & 0 & 1 \\
	\end{pmatrix}
\end{equation*}
which is indeed positive semi-definite.

\section{Expressing any hierarchically off-diagonal low-rank matrix (HODLR)  as a product of $2$ factors with fixed supports}
\label{app:hierarchical}
In the following, we report the definition of HODLR matrices. For convenience, we report the definition only for a \emph{square} matrix whose size is a power of two, i.e $n = 2^J, J \in \mb{N}$. 
\begin{definition}[HODLR matrices]
	A matrix $A \in \mb{R}^{2^N \times 2^N}$ is called an HODLR matrix if either of the following two holds:
	\begin{itemize}[leftmargin = *]
		\item $N = 0$, i.e., $A \in \mb{R}^{1 \times 1}$.
		\item $A$ has the form $A = \bigl[\begin{smallmatrix} A_{11} & A_{12} \\ A_{21} & A_{22} \end{smallmatrix}\bigr]$ for $A_{i,j} \in \mb{R}^{2^{N - 1} \times 2^{N-1}}, 1 \leq i,j \leq 2$ such that $A_{21}, A_{12}$  are of rank at most one and $A_{11}, A_{22} \in \mb{R}$ are HODLR matrices. 
	\end{itemize}
\end{definition}
We prove that any HODLR matrix is a product of two factors with fixed support. The result is proved when $A_{12},A_{21}$  are of rank at most one, but more generally, if we allow $A_{12}$ and $A_{21}$ to have rank $k \geq 1$, the general scheme of the proof of \Cref{lem: hierarchical} below still works (with the slight modification $|I| = |J| = O(kn\log n)$, $I, J \in \{0,1\}^{2^N \times k(3 \times 2^N -2)}$).
We prove that any HODLR matrix is a product of two factors with fixed support.
\begin{lemma}
	\label{lem: hierarchical}
	For each $N \geq 1$ there exists $I,J \in \{0,1\}^{2^N \times  (3 \times 2^N - 2)}$ support constraints such that for any HODLR matrix $A \in \mb{R}^{2^N \times 2^N}$, we have:
	\begin{itemize}[leftmargin=*]
		\item[1)] $A$ admits a factorization $XY^\top$ and $\supp(X) \subseteq I, \supp(Y) \subseteq J$.
		\item[2)] $|I| = |J| = O(n\log n)$ ($n = 2^N$). 
		\item[3)] $(I, J)$ satisfies the assumption of \Cref{theorem:disjoint_totally_overlapping}.
	\end{itemize}
\end{lemma}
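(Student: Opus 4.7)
The plan is to index the columns of $X,Y$ by the rank-at-most-one blocks arising in the hierarchical decomposition of an HODLR matrix, and to define $(I,J)$ so that the $k$-th column of $X$ (resp.\ of $Y$) carries the left (resp.\ right) factor of the $k$-th block. First I would enumerate these blocks: unrolling the recursion, at each depth $d\in\{0,\ldots,N-1\}$ one obtains $2^{d+1}$ off-diagonal rank-at-most-one blocks of size $2^{N-d-1}\times 2^{N-d-1}$, and at the base case $2^N$ diagonal $1\times 1$ entries (trivially rank at most one). The total count is
\[
\sum_{d=0}^{N-1} 2^{d+1} \;+\; 2^N \;=\; (2^{N+1}-2) + 2^N \;=\; 3\cdot 2^N - 2,
\]
matching the prescribed number of columns $r$.

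Next I would define the supports. Index the blocks by $k\in\llbracket r\rrbracket$ and let $R_k,C_k\subseteq\llbracket 2^N\rrbracket$ denote their row and column index sets. Set the $k$-th column of $I$ (resp.\ of $J$) to be the indicator of $R_k$ (resp.\ of $C_k$), so that $\mc{S}_k = R_k\times C_k$ is precisely the region occupied by the $k$-th block. The key combinatorial fact is that $\{R_k\times C_k\}_{k=1}^r$ partitions $\llbracket 2^N\rrbracket\times\llbracket 2^N\rrbracket$, which follows from a straightforward induction on $N$: at each recursion level the four quadrants of the current sub-matrix are pairwise disjoint, the two off-diagonal ones are declared leaves (rank-one blocks), and the two diagonal ones are partitioned by the inductive hypothesis. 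This immediately yields that distinct representative rank-one supports are pairwise disjoint; in particular each equivalence class (in the sense of \cref{def:completeeqclass}) is a singleton, so the hypothesis of \cref{theorem:disjoint_totally_overlapping} is met.

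The size count is then routine bookkeeping:
\[
|I| \;=\; \sum_{k=1}^r |R_k| \;=\; \sum_{d=0}^{N-1} 2^{d+1}\cdot 2^{N-d-1} + 2^N\cdot 1 \;=\; N\cdot 2^N + 2^N \;=\; (N+1)\,2^N,
\]
which is $O(n\log n)$ with $n=2^N$, and by the same computation $|J|=(N+1)\,2^N$. For the factorization itself, the HODLR assumption provides for every $k$ a decomposition $A_{R_k,C_k} = u_k v_k^\top$ with $u_k\in\mathbb{R}^{|R_k|}$ and $v_k\in\mathbb{R}^{|C_k|}$; I place $u_k$ in the $k$-th column of $X$ supported on $R_k$ (zero elsewhere) and $v_k$ in the $k$-th column of $Y$ supported on $C_k$ (zero elsewhere). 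Because the supports $\{\mc{S}_k\}_k$ partition the full index set, the rank-one contributions $X_{\col{k}}Y_{\col{k}}^\top$ do not overlap, so $XY^\top = \sum_{k=1}^r X_{\col{k}}Y_{\col{k}}^\top$ reconstructs $A$ entry by entry. The only step that deserves explicit care is the partition claim; everything else is a direct consequence of it.
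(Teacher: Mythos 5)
Your proposal is correct, and it establishes exactly the same support constraints as the paper does: the $\mc{S}_k$ are precisely the off-diagonal blocks of the hierarchical partition plus the $2^N$ diagonal singletons. The difference is presentational rather than substantive: the paper builds $(I_N,J_N)$ by an explicit block-matrix recursion from $(I_{N-1},J_{N-1})$ and checks the cardinality via the recurrence $|I_N| = n + 2|I_{N-1}|$, whereas you unroll that recursion into a direct enumeration of the $3\cdot 2^N - 2$ blocks and sum sizes level by level. Your version makes the key combinatorial fact — that the rank-one supports partition $\llbracket n\rrbracket\times\llbracket n\rrbracket$ — the explicit centerpiece, which arguably makes the verification of the hypothesis of \cref{theorem:disjoint_totally_overlapping} more transparent; the paper's recursive formulation has the advantage of directly exhibiting the factors $X,Y$ in block form at each step. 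One small point you glossed over (and the paper also handles only implicitly) is that the $\mc{S}_k$ must all be \emph{distinct} for the equivalence classes to be singletons, but this is immediate from the fact that they are non-empty members of a partition. Both arguments yield $|I|=|J|=(N+1)2^N$, in agreement.
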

\begin{proof}
	The proof is carried out by induction. 
	\begin{itemize}[leftmargin=*]
		\item[1)] For $N = 1$, one can consider $(I,J) \in \{0,1\}^{2 \times 4} \times  \{0,1\}^{2 \times 4}$ defined (in the binary matrix form) as follows:
		\begin{equation*}
			I = \begin{pmatrix}
				1 & 0 & 1 & 0\\
				0 & 1 & 0 & 1
			\end{pmatrix}, \;J = \begin{pmatrix}
				0 & 1 & 1 & 0\\
				1 & 0 & 0 & 1
			\end{pmatrix}.
		\end{equation*} 
		Any $(X,Y)$ constrained to $(I,J)$ will have the following form:
		\begin{equation*}
			X = \begin{pmatrix}
				x_1 & 0 & x_3 & 0\\
				0 & x_2 & 0 & x_4
			\end{pmatrix}, \;Y = \begin{pmatrix}
				0 & y_2 & y_3 & 0\\
				y_1 & 0 & 0 & y_4\\
			\end{pmatrix}, \;XY^\top = \begin{pmatrix}
				x_3y_3 & x_1y_1\\
				x_2y_2 & x_4y_4
			\end{pmatrix}.
		\end{equation*}
		Given any matrix $A \in \mb{R}^{2 \times 2}$ (and in particular, given any HODLR matrix in this dimension) it is easy to see that $A$ can be represented as $XY^\top$ such that $\supp(X) \subseteq I, \supp(Y) \subseteq J$ (take e.g. $x_{3}=a_{11}$, $x_{1}=a_{12}$, $x_{2}=a_{21}$, $x_{4}=a_{22}$ and all $y_{i}=1$). It is also easy to verify that this choice of $(I,J)$ makes all the supports of the rank-one contributions pairwise disjoint, so that the assumptions of \Cref{theorem:disjoint_totally_overlapping} are fulfilled. Finally, we observe that $|I_{N}|=|J_{N}|=4$.
		\item[2)] Suppose that our hypothesis is correct for $N - 1$, we need to prove its correctness for $N$. Let $(I_{N-1}, J_{N-1})$ be the pair of supports for $N-1$, we construct $(I_N, J_N)$ (still in binary matrix form) as follows:
		\begin{align*}
			I_N &= \begin{pmatrix}
				\mbf{1}_{n/2 \times 1} & \mbf{0}_{n/2 \times 1} & I_{N-1} & \mbf{0}_{n/2 \times {\color{red} (3}n/2{\color{red}-2)}} \\
				\mbf{0}_{n/2 \times 1} & \mbf{1}_{n/2 \times 1} & \mbf{0}_{n/2 \times {\color{red} (3}n/2{\color{red}-2)}} & I_{N-1}\\
			\end{pmatrix}\\
			J_N &= \begin{pmatrix}
				\mbf{0}_{n/2 \times 1} & \mbf{1}_{n/2 \times 1} & J_{N-1} & \mbf{0}_{n/2 \times {\color{red} (3}n/2{\color{red}-2)}} \\
				\mbf{1}_{n/2 \times 1} & \mbf{0}_{n/2 \times 1} & \mbf{0}_{n/2 \times {\color{red} (3}n/2{\color{red}-2)}} & J_{N-1}\\
			\end{pmatrix}
		\end{align*}
		where $n = 2^N$ and $\mbf{1}_{p \times q}$ (resp. $\mbf{0}_{p \times q}$) is the matrix of size $p \times q$ full of ones (resp. of zeros). Since $I_{N-1}$ and $J_{N-1}$ are both of dimension $2^{N-1} \times (3 \times 2^{N-1}-2) = (n/2) (3n/2-2)$, the dimensions of $I_N$ and $J_N$ are both equal to $(n, 2 \times (3n/2 - 2) + 2) = (n, 3n - 2)$. Moreover, the cardinalities of $I_N$ and $J_N$ satisfy the following recursive formula:
		\begin{equation*}
			|I_N| = n + 2|I_{N-1}|, \qquad |J_N| = n + 2|J_{N-1}|,
		\end{equation*}
		which justifies the fact that $|I_N| = |J_N| = O(n\log n)$. Finally, any factors $(X,Y)$ respecting the support constraints $(I_N, J_N)$  need to have the following form:
		\begin{align*}
			X &= \begin{pmatrix}
				X_{1} & \mbf{0}_{n/2 \times 1} & X_3 & \mbf{0}_{n/2 \times {\color{red} (3}n/2{\color{red}-2)}} \\
				\mbf{0}_{n/2 \times 1} & X_2 & \mbf{0}_{n/2 \times {\color{red} (3}n/2{\color{red}-2)}} & X_4\\
			\end{pmatrix}\\
			Y &= \begin{pmatrix}
				\mbf{0}_{n/2 \times 1} &Y_2 & Y_3 & \mbf{0}_{n/2 \times {\color{red} (3}n/2{\color{red}-2)}} \\
				Y_1 & \mbf{0}_{n/2 \times 1} & \mbf{0}_{n/2 \times {\color{red} (3}n/2{\color{red}-2)}} & Y_4\\
			\end{pmatrix}
		\end{align*}
		where $X_i, Y_i \in \mb{R}^{n/2}, 1 \leq i \leq 2$, and for $3 \leq j \leq 4$ we have $X_j, Y_j \in \mb{R}^{n/2 \times {\color{red} (3}n/2{\color{red}-2)}}$, $\supp(X_j) \subseteq I_{N-1}, \supp(Y_j) \subseteq J_{N-1}$. Their product yields:
		\begin{equation*}
			XY^\top = \begin{pmatrix}
				X_3Y_3^\top & X_1Y_1^\top\\
				X_2Y_2^\top & X_4Y_4^\top
			\end{pmatrix}.
		\end{equation*}
		Given an HODLR matrix $A \in \mb{R}^{n \times n}$, since $A_{12},A_{21} \in \mb{R}^{n/2 \times n/2}$ are of rank at most one, one can find $X_i, Y_i \in \mb{R}^{n/2}, 1 \leq i \leq 2$ such that $A_{12} = X_1Y_1^\top, A_{21} = X_2Y_2^\top$. Since $A_{11},A_{22} \in  \mb{R}^{n/2 \times n/2}$ are HODLR, by the induction hypothesis, one can also find $X_i, Y_i \in \mb{R}^{n/2 \times {\color{red} (3}n/2{\color{red}-2)}}$, $3 \leq i \leq 4$ such that $\supp(X_{i}) \subseteq I_{N-1}$, $\supp(Y_{i}) \subseteq I_{N-1}$ and $A_{11} = X_3Y_3^\top, A_{22} = X_4Y_4^\top$. Finally, this construction also makes all the supports of the rank-one contributions pairwise disjoint: the first two rank-one supports are $\mc{S}_1 = \{n/2+1, \ldots, n\} \times \intset{n/2}, \mc{S}_2 =  \intset{n/2} \times \{n/2+1, \ldots, n\}$, and the remaining ones are inside $\intset{n/2} \times \intset{n/2}$ and $\{n/2+1, \ldots, n\} \times \{n/2+1, \ldots, n\}$ which are disjoint by the induction hypothesis.
	\end{itemize}	
\end{proof}
\end{document}